\documentclass[
%%% one of
submission
%final
%proceedings
%%% if you compile a final version for the old OJS platform
% , ojs
%%% if all authors have the same affiliation
, nomarks
]{dmtcs-episciences}

% DON'T LOAD ANY STYLES THAT CHANGE THE PAGE LAYOUT
% AND DON'T CHANGE THE PAGE LAYOUT BY HAND, EITHER.

\usepackage[utf8]{inputenc}
\usepackage{subfigure}
\usepackage{enumerate}

\usepackage{amsmath,amssymb,amsfonts,amstext}
\usepackage{bbm}
\usepackage{enumitem,mdwlist} %For the roman label in enumeration and resuming
\usepackage{cases} %For numbering the cases in equations
\usepackage{varwidth} % para quebrar linha em fbox de imagens
\usepackage{subfigure} % para duas figuras juntas
\usepackage{mathrsfs} % para letras matematicas diferentes
\usepackage{complexity}
\usepackage{indentfirst}

\newtheorem{theorem}{Theorem}[section]

\newtheorem{lemma}[theorem]{Lemma}

% graphicx is now loaded automatically no need to put this in here anymore.
%
%\usepackage{graphicx}

% We strongly recommend to use natbib. Your colleagues deserve to be
% named in your text. PLEASE, ADAPT YOUR TEXT ACCORDINGLY, such that
% citations are grammatically correct.
\usepackage[round]{natbib}

\author{Diane Castonguay
  \and Elisângela S. Dias
  \and Fernanda N. Mesquita
  \and Julliano R. Nascimento} 
\title[Computing a $3$-role assignment is polynomial-time solvable on complementary prisms]{Computing a $3$-role assignment is polynomial-time solvable on complementary prisms\thanks{This study was financed in part by the Fundação de Amparo à Pesquisa do Estado de Goiás (FAPEG)}}
% put your affiliation here, not your full address.
% If you like to give away your email or other parts of your address,
% THIS IS NOT THE RIGHT PLACE, your address will change, this paper
% will not.
% Just watch that your personal data that you want to communicate on
% the episcience server is always up to date.
\affiliation{
  % one line per affiliation, no postal codes, grant numbers or similar
  Instituto de Inform\'{a}tica, Universidade Federal de Goi\'{a}s, GO, Brazil}
\keywords{role assignment, complementary prism,  polynomial time}

% don't try to cheat here, we will check the dates!
\received{ }
\revised{ }
\accepted{ }

\begin{document}
\publicationdetails{VOL}{2022}{ISS}{NUM}{SUBM}
\maketitle
\begin{abstract}
 A $r$-\textit{role assignment} of a simple graph $G$ is an assignment of $r$ distinct roles to the vertices of $G$, such that two vertices with the same role have the same set of roles assigned to related vertices. Furthermore, a specific $r$-role assignment defines a \textit{role graph}, in which the vertices are the distinct $r$ roles, and there is an edge between two roles whenever there are two related vertices in the graph $G$ that correspond to these roles. We consider complementary prisms, which are graphs formed from the disjoint union of the graph with its respective complement, adding the edges of a perfect matching between their corresponding vertices. In this work, we characterize the complementary prisms that do not admit a $3$-role assignment. We highlight that all of them are complementary prisms of disconnected bipartite graphs. Moreover, using our findings, we show that the problem of deciding whether a complementary prism has a $3$-role assignment can be solved in polynomial time.\end{abstract}
 
\section{Introduction}

In $1980$, Angluin introduced the concept of \textit{covering}, from which role assignment arises, as a tool for networks of processors~\citep{angluin1980local}. A decade later, based on graph models for social networks, formalized the notion of role assignment under the name of \textit{role coloring}~\citep{everett1991role}. Indeed, a $r$-\textit{role assignment} of a simple graph $G$ is an assignment of $r$ distinct roles to the vertices of $G$, such that if two vertices have the same role, then the sets of roles of their neighbors coincide. Moreover, for such an assignment, we obtain a \textit{role graph}, where the vertices are the $r$ distinct roles and there is an edge between two roles whenever there are two neighbors in the graph $G$ that correspond to these roles. Note that, the role graph has not multiple edges, but allows loops since two related vertices in $G$ can have the same role. Observe that, while a social network usually give rise to a large graph, a role assignment allows to represent the same network through a smaller graph.% On the other hand, when the role graph is a graph without loops, where every pair of vertices are neighbors, role assignment  coincides with $k$-\textit{fall coloring}, introduced by Dunbar \textit{et. al}~\cite{dunbar2000fall} as a variant of \textit{vertex coloring}. More recently, Kaul and Mitillos~\cite{kaul2019graph} studied relations between the parameters associated to fall coloring and vertex coloring. 

We defined the $r$-\textsc{Role Assignment} problem as follows:
\begin{center}
%\begin{flushleft}
	\fbox{\begin{varwidth}{\dimexpr\textwidth-2\fboxsep-2\fboxrule\relax} 
        $r$-\textsc{Role Assignment} \\
		\textbf{Instance:} A simple graph $G$.
		
       \textbf{Question:} Does $G$ admit a $r$-role assignment?
    \end{varwidth}}
%\end{flushleft}
\end{center}

Applications of role assignment are highlighted in several contexts such as social networks, see~\cite{everett1991role, roberts2001hard} and distributed computing, see~\cite{chalopin2006local}. %Furthermore,  Peke{\v{c}} and Roberts~\cite{pekevc2001role} showed that any network represented by a graph, with minimum degree bounded by a suitable bound that depends on $r$, has a $r$-role assignment.

In $2001$,~\citeauthor{roberts2001hard} proved the $\NP$-completeness of $2$-\textsc{Role Assignment}. Such a result was expanded in $2005$, by~\citeauthor{fiala2005complete}, who showed that $r$-\textsc{Role Assignment} is $\NP$-complete for any fixed $r\geq 3$. In this context, \cite{purcell2015complexity} showed that $r$-\textsc{Role Assignment}, with $r\geq 2$, complexity decision remains $\NP$-complete for planar graphs, while cographs always have $r$-role assignment, which makes the problem constant for this class.
%On the positive side, $r$-\textsc{Role Assignment} can be solved in polynomial time for trees~\cite{fiala2010comparing} and for proper interval graphs~\cite{heggernes2012computing} for any fixed $r \geq 1$. 
Considering chordal and split graphs, a dichotomy for the complexity of $r$-\textsc{Role Assignment} arises. For chordal graphs,~\cite{van2010computing} checked that the problem is solvable in polynomial time for $r=2$ and $\NP$-complete for $r\geq 3$. On the other hand, for split graphs,~\cite{dourado2016computing} concluded that problem is trivial, with true answer, for $r= 2$, solvable in polynomial time for $r=3$ and $\NP$-complete for any fixed $r\geq 4$. 

%The complementary prism is linked to the notion of complementary product. This product is introduced in the literature in $2009$ by~\citeauthor{haynes2009domination} as a generalization of the Cartesian product. The authors give a special attention to the particular case of the operation, called \textit{complementary prism} of a graph, which can be seen as a variant of \textit{prism} graph, that is, the Cartesian product of a graph with $K_2$. The complementary prism of a graph $G$ is the graph formed from the disjoint union of $G$ with its respective complement $\overline{G}$, adding edges for a perfect match between the corresponding vertices. \citealp{castonguay2018prismas} characterized complementary prisms that admit a $2$-role assignment, showing that only the complementary prism of a path with three vertices does not.

The complementary prism is linked to the notion of complementary product, introduced in the literature in $2009$, by ~\citeauthor{haynes2009domination} as a generalization of the Cartesian product. The authors give a special attention to the particular case of the operation, called \textit{complementary prism} of a graph, which can be seen as a variant of \textit{prism}, which is the Cartesian product of a graph with $K_2$. The \textit{complementary prism} of $G$, denoted by $G \overline{G}$, is the graph formed from the disjoint union of $G$ and its complement $\overline{G}$, adding the edges of a perfect matching between the corresponding vertices of $G$ and $\overline{G}$. \cite{castonguay2018prismas}, characterized complementary prisms that admit a $2$-role assignment, showing that only the complementary prisms of a path with three vertices does not. In this sense,~\cite{castonguay2023prismas}, considered the role graph $K'_{1,r}$ which is the bipartite graph $K_{1,r}$ with a loop at the vertex of degree $r$ and showed that the problem of deciding whether a prism complement has a $(r+1)$-role assignment, when the role graph is $K'_{1,r}$, is $\NP$-complete and set the conjecture that, for $r\geq 3$, $(r+1)$-\textsc{Role Assignment} for complementary prisms is $\NP$-complete.

In this paper, we characterize complementary prisms that admit a $3$-role assignment. We point out that the complementary prisms without a $3$-role assignment, arise from disconnected bipartite graphs. Finally, we exhibit a solution in polynomial time for $3$-\textsc{Role Assignment} for complementary prisms.

This paper is organized as follows. In Section~\ref{section_preliminary}, we set notations and terminology. Complementary prisms without a $3$-role assignment are characterized in Section~\ref{section_5.7}. Ours general results for complementary prisms with $3$-role assignment are presented in Section~\ref{section_5.3}. Then, specific results for bipartite graphs in Section~\ref{section_5.6} with $3$-role assignment and of non-bipartite graphs in Section~\ref{section_5.8}. Finally, in Section~\ref{subsection_characterization_prism} have conditions to have a $3$-role assignment in complementary prisms.

\section{Preliminaries} 
\label{section_preliminary}
A \textit{graph} $G$ is  a pair $(V(G), E(G))$, where $V(G)$ is the set of vertices and $E(G)$ is the set of edges. The vertices $u$ and $v$ are \textit{adjacent} or \textit{neighbors} if they are joined by an edge $e$, also denoted by $uv$.  %In this case, $u$ and $v$  are \textit{incident} to $e$ and $e$ is \textit{incident} to  $u$ and $v$. 
A \textit{loop}  is an edge incident to only one vertex. The \textit{neighborhood} of a vertex $v$, denoted by $N_{G}(v)$, is the set of all neighbors of $v$ in $G$. % When the graph $G$ is clear from the context, we simplify write $N(v)$. 
A \textit{simple graph} is a graph without loops.  In a simple graph $G$, the \textit{degree} of a vertex $v$ is the cardinality of $N_{G}(v)$. The neighborhood of a subset $U$ of $V(G)$, denoted as $N_{G}(U)$, is the union of the neighborhoods of the vertices of $U$. A vertex of degree zero is \textit{isolated} and a \textit{leaf} is a vertex of degree $1$. We denote by $G_i$ the set of vertices of $G$ with degree $i$. With that notation, $G_0$ is the set of isolated vertices of $G$ and $G_1$ the set of leaves of $G$. A \textit{universal vertex}  is a vertex adjacent to all other vertices of the graph.

A \textit{path} is a sequence of distinct vertices with an edge between each pair of consecutive vertices. For $n \geq 2$, we denote a path on $n$ vertices by $P_n$ or by the sequence of vertices $v_1 \ldots v_n$.

%To rank computation time, it is necessary to prove upper and lower bounds on the minimum amount of time required by the most efficient algorithm to solve a given problem. The complexity of an algorithm is generally understood to be its worst-case complexity. Upper bounds are denoted using big-$\mathcal{O}$ notation,~\citealp{cormen2009introduction}

A graph $G$ is a \textit{bipartite graph} if one can partition $V(G) = A \cup B$ so that if there is an edge $uv \in E(G)$, then $u \in A$ and $v \in B$, or vice versa. In this case, we say that $(A, B)$ is a \textit{bipartite partition} of $G$.
%A \textit{path} is a sequence of distinct vertices with an edge between each pair of consecutive vertices. For $n\geq 2$, we denoted a path on $n$ vertices by $P_n$ or by the sequence of vertices $v_1 \ldots v_n$.  
A \textit{clique} is a subset of vertices that are pairwise adjacent. A \textit{complete graph}, denoted by $K_n$, is a graph of $n$ vertices which is $V(K_n)$ a clique. 

Given a simple graph $G$ and a graph $R$, possibly with loops. A \textit{$R$-role assignment} of $G$ is a surjective vertex mapping $p: V(G) \rightarrow V(R)$ such that $p(N_{G}(v))=N_{R}(p(v))$ for all $v \in V(G)$. A graph $G$ has a \textit{$r$-role assignment} if it admits a $R$-role assignment for some graph $R$, called the \textit{role graph}, with $|V(R)|=r$. We set $1, \ldots , r$ the vertices of $R$, also called \textit{roles}. From now on, all graphs (except maybe the role graph) are simple. Observe that, if the graph $G$ is connected, then the role graph $R$ of any role assignment of $G$ is also connected. Also, if role graph $R$ is bipartite, then so is $G$. 

In Figure~\ref{figure_3_attribution}, we list the possible role graphs with three vertices (up to isomorphism). We note that all role graphs have $1,3 \in N_R(2)$ and if $1 \in N_R(1)$, then $3 \in N_R(3)$.  In all figures, we standardize that the black vertices receive role~1, white vertices, role~2, and, finally, gray vertices, role~3.

\begin{figure} [ht!]
	\centering
	\includegraphics[scale=0.85]{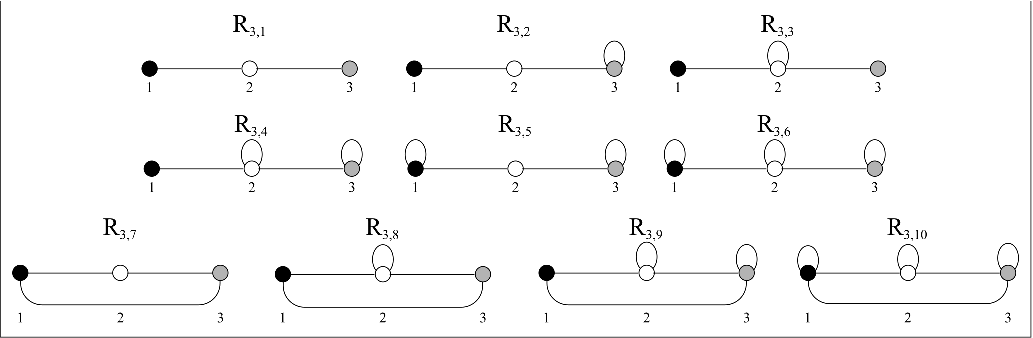} 
	\caption{Possible role graphs arising from a $3$-role assignment.}
	\label{figure_3_attribution}
\end{figure}

%The \textit{complementary prism} \index{complementary prism} of $G$, denoted by $G \overline{G}$, to refer to copies of $G$ and $\overline{G}$, respectively, in $G\overline{G}$.
The \textit{complementary prism} \index{complementary prism} of $G$, denoted by $G \overline{G}$, is the graph formed from the disjoint union of the graph $G$ and its complement $\overline{G}$, adding the edges of a perfect match between the corresponding vertices of $G$ and $\overline{G}$.
 Furthermore, for a vertex $v$ of $G$, we denote by $\overline{v}$ the corresponding vertex in $\overline{G}$ and, for a set $X \subseteq V (G)$, $ \overline{X}$ denote the corresponding set of vertices in $V (\overline{G})$. Thus, $V(G\overline{G})=V(G) \cup \overline{V(G)}$ and $E(G\overline{G})=E(G)\cup E(\overline{G})\cup \{v\overline{v} \mid v\in V(G)\}$.

Figure~\ref{figure_C5_C5} presents the complementary prism of $C_5$, which is the Petersen graph, together with a $R_{3,2}$-role assignment. For the sake of clarity, we omit to draw some edges. In this sense, three spaced lines between the highlighted subgraphs $C_5$ and $\overline{C_5}$ designate the perfect matching between these subgraphs.  
%The labels color inside the vertices stand for the roles in $R_{3,2}$-role assignment for $C_5\overline{C_5}$.

\begin{figure} [ht!]
	\centering
	\includegraphics[scale=0.8]{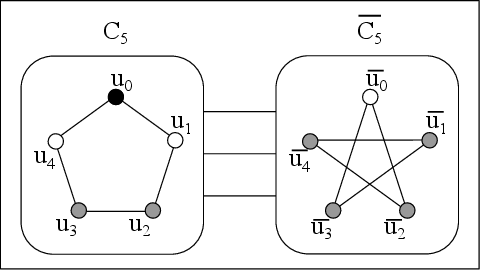} 
	\caption{A $R_{3,2}$-role assignment from the complementary prism of $C_5$.}
	\label{figure_C5_C5}
\end{figure}

For $t \geq 2$, denote by $K^t_2$, the union of $t$ copies of $K_2$. Following the same draw pattern, Figure~\ref{grafo_K_2^3} shows the complementary prism of  $K^3_2$ (with a $R_{3,7}$-role assignment). Again, we omit to draw some edges and represent by a double line between highlighted subgraphs all possible edges between them. 
%The labels color inside the vertices stand for the roles in $R_{3,7}$-role assignment for $K^3_2\overline{K^3_2}$.
 
\begin{figure}[ht!]
	\centering
	\includegraphics[scale=0.75]{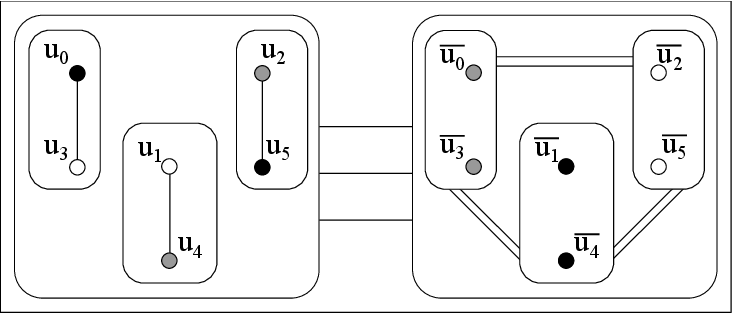} 
	\caption{A $R_{3,7}$-role assignment for the complementary prism of $K_2^3$.}
	\label{grafo_K_2^3}
\end{figure} 

\section{Complementary prisms without a 3-role assignment}
\label{section_5.7}

We present the complementary prism graphs with no possible $3$-role assignment. First, we present the case of the complementary prism of a complete graph.

\begin{lemma} \label{lemma:prisma_Kn_nao_tem} The complementary prism of $K_n$, $n\geq 2$ does not have a $3$-role assignment.\end{lemma}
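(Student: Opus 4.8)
The plan is to analyze the complementary prism $K_n\overline{K_n}$ directly. Since $\overline{K_n}$ is the empty graph on $n$ vertices, the graph $K_n\overline{K_n}$ consists of a clique on $V(K_n) = \{v_1,\dots,v_n\}$, an independent set $\{\overline{v_1},\dots,\overline{v_n}\}$, and the matching edges $v_i\overline{v_i}$. Thus each $\overline{v_i}$ is a leaf whose unique neighbor is $v_i$, and each $v_i$ has degree $n$ (adjacent to all other $v_j$ and to $\overline{v_i}$). First I would dispose of the small case $n=2$ separately: $K_2\overline{K_2}$ is a path $P_4$, and one checks directly that $P_4$ has no $3$-role assignment (a surjection onto $3$ roles on $4$ vertices forces some role class to be a single vertex, and a short case check on the role graphs in Figure~\ref{figure_3_attribution} rules this out). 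So assume $n\geq 3$.

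Now suppose for contradiction that $p$ is an $R$-role assignment of $K_n\overline{K_n}$ with $|V(R)|=3$. The key structural observation is about the leaves: each $\overline{v_i}$ has $N(\overline{v_i})=\{v_i\}$, so $|N_R(p(\overline{v_i}))| = 1$, i.e. the role of every $\overline{v_i}$ is a vertex of degree exactly $1$ in $R$. Inspecting the possible role graphs on $3$ vertices, the only one with a vertex of degree $1$ that can also be a legal role is the path-type graph (role $2$ is the center, roles $1$ and $3$ are the endpoints, with possibly loops at $1,3$ or not — but a degree-$1$ vertex cannot carry a loop). So all $\overline{v_i}$ receive the same pair of "endpoint" roles; more precisely each $\overline{v_i}$ gets role $1$ or role $3$, and its neighbor $v_i$ must then get the center role $2$ (since the unique neighbor of an endpoint role is the center). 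Hence \emph{every} $v_i$ is assigned role $2$.

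But the $v_i$ form a clique of size $n\geq 3\geq 2$, so there is an edge $v_1v_2$ with $p(v_1)=p(v_2)=2$; this forces a loop at $2$ in $R$, i.e. $2\in N_R(2)$. On the other hand, $v_1$ is adjacent to $v_2,\dots,v_n$ (all role $2$) and to $\overline{v_1}$ (role $1$ or $3$), so $p(N(v_1)) \subseteq \{2, p(\overline{v_1})\}$, which has size at most $2$ and in particular does not contain all three roles. For surjectivity of $p$ we need some vertex to receive the third role, but the only vertices available are the $v_i$ (all role $2$) and the $\overline{v_i}$ (roles in $\{1,3\}$) — and in fact if, say, no $\overline{v_i}$ gets role $1$, then role $1$ is never used, contradicting surjectivity; so both $1$ and $3$ occur among the $\overline{v_i}$. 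Then take $\overline{v_1}$ with role $1$ and $\overline{v_2}$ with role $3$; their neighbors $v_1,v_2$ both have role $2$, which is consistent, but now check the center: $p(N_R(2))$ must equal $p(N(v_1)) = \{2\} \cup \{p(\overline{v_1})\}$. Since different $v_i$ can have neighbors $\overline{v_i}$ of different endpoint colors, we get both $\{2,1\}$ and $\{2,3\}$ as neighbor-role-sets of role-$2$ vertices, forcing $N_R(2) = \{1,2,3\}$, hence $2$ is adjacent to $1$ and to $3$. The contradiction I am driving toward is with the leaves: an endpoint role $1$ has $N_R(1)\ni 2$, and since role $1$ appears on some leaf $\overline{v_1}$, $|N_R(1)|=1$, so $N_R(1)=\{2\}$, no loop at $1$ — fine so far — but then consider a vertex of role $1$; all such vertices are leaves $\overline{v_i}$. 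Now is role $1$ ever forced onto a $v_i$? No. So actually the assignment "all $v_i\mapsto 2$, each $\overline{v_i}\mapsto 1$ or $3$ with both used" needs to be tested for consistency at $v_i$: $p(N(v_i)) = \{2\}\cup\{p(\overline{v_i})\}$ must equal $N_R(p(v_i)) = N_R(2) = \{1,2,3\}$, which is impossible since the left side has at most $2$ elements. That is the contradiction. The main obstacle is the bookkeeping of which of the six role graphs in Figure~\ref{figure_3_attribution} can arise; I expect the cleanest write-up first establishes that the leaf condition forces the path-type role graph and role $2$ on all of $V(K_n)$, and then the clique edge plus surjectivity give the contradiction in one line.
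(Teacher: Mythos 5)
Your proof is correct and follows essentially the same route as the paper's: the leaves $\overline{v_i}$ force their roles to have a single neighbor-role, hence $p(\overline{v_i})\in\{1,3\}$ and $p(v_i)=2$ for all $i$, surjectivity puts both endpoint roles on leaves, and the neighborhood condition at the clique vertices yields the contradiction. Your separate treatment of $n=2$ is unnecessary, since this argument already works for any $n\geq 2$ (as the paper's single uniform proof shows), but it does no harm.
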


\begin{proof} By contradiction, we assume that $p: V(K_n\overline{K_n}) \rightarrow V(R)$ is a $3$-role assignment. For all $x \in V(K_n)$, we have that $|N(\overline{x})|=1$ and $R \in\{R_{3,1}, R_{3,2}, R_ {3,3}, R_{3,4}\}$. In all four possible role graphs $R$, we have that $N_R(1)= \{2 \}$ and $1,3 \in N_R(2)$. So, for all $x \in V(K_n)$, $p(\overline{x})\in \{1,3\}$ and $p(x)=2$. Since $p$ is a $3$-role assignment, there are $u,v \in V(K_n)$, such that $p(\overline{u})=1$ and $p(\overline{v}) =3$. This leads to a contradiction, since $1 \not \in p(N(v))=N_{R}(p(v))=N_{R}(2)$. Thus, the complementary prism of $K_n$ does not have a $3$-role assignment. \end{proof}

In the following lemma, we approach the case of complementary prisms of graphs with isolated vertices. We consider the union of complete bipartite graphs with at least one isolated vertex, except $K_1 \cup K_{1,m}$. In the next section (Lemma~\ref{lemma:k1m,k1}), we show that the complementary prism of $K_1\cup K_{1,m}$, with $m \geq 1$, has a $3$-role assignment. Recall that $G_0$ is the set of isolated vertices of a graph $G$.

\begin{lemma} \label{lemma:k_n,mUk_1_nao_has_3attribution} Let $G$ be a bipartite graph with isolated vertices, such that $G\not \simeq K_1\cup K_{1,m}$ with $m \geq 1$. If every non-trivial connected component of $G$ is isomorphic to $K_{n,m}$ for some $n,m \geq 1$, then the complementary prism of $G$ does not have a $3$-role assignment. \end{lemma}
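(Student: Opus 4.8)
The strategy is a proof by contradiction: assume $p$ is a $3$-role assignment of $G\overline{G}$ with role graph $R$ one of the graphs in Figure~\ref{figure_3_attribution}, and derive a contradiction by analyzing the roles forced on the isolated vertices of $G$ and their complements. Write $G_0$ for the set of isolated vertices; by hypothesis $|G_0|\geq 1$, and every non-trivial component is some $K_{n,m}$. For $x\in G_0$, the vertex $x$ has degree $1$ in $G\overline{G}$ (its only neighbor is $\overline{x}$), while $\overline{x}$ is a universal vertex of $\overline{G}$ and hence has large degree in $G\overline{G}$. The first step is to observe that a degree-$1$ vertex must receive a role $i$ with $|N_R(i)|=1$, which among the graphs of Figure~\ref{figure_3_attribution} forces $p(x)\in\{1,3\}$ with the neighbor $\overline{x}$ receiving role~$2$ (using that $1,3\in N_R(2)$ and the stated structural facts about the role graphs). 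So all of $\overline{G_0}$ is monochromatic with role~$2$, and all of $G_0$ gets roles in $\{1,3\}$.

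The second step is to use that $\overline{G_0}$ induces a clique in $\overline{G}$ that is moreover joined to everything in $\overline{G}$ coming from the non-trivial components. Since the vertices of $\overline{G_0}$ all have role~$2$, and two of them are adjacent (if $|G_0|\geq 2$), we would need $2\in N_R(2)$, i.e. a loop at role~$2$; one checks from Figure~\ref{figure_3_attribution} which role graphs allow this and tracks the consequences. The case $|G_0|=1$, say $G_0=\{w\}$, must be handled separately: then $\overline{w}$ alone has role~$2$, and since $G\not\simeq K_1\cup K_{1,m}$ there is either a second isolated vertex (excluded) or the non-trivial part has at least two components or a component $K_{n,m}$ with $n+m\geq 3$ that is not a star, i.e. $\min(n,m)\geq 2$. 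In either sub-case one exhibits in $\overline{G}$ a vertex adjacent to $\overline{w}$ but whose own neighborhood structure in $G\overline{G}$ is incompatible with $N_R(p(\overline{w}))=N_R(2)$, because $p(N(\overline{w}))$ must equal $N_R(2)$ yet $\overline{w}$ sees only role~$2$ and $\{1,3\}$-colored vertices of $\overline{G_0}$.

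The third step pins down the non-isolated part. A vertex $v$ in a component $K_{n,m}$ has, in $G\overline{G}$, neighbors consisting of: its $K_{n,m}$-neighbors in $G$, the vertex $\overline{v}$, and nothing else inside $G$. Meanwhile $\overline{v}$ is adjacent in $\overline{G}$ to every vertex except the $\le m$ (or $\le n$) vertices on its own side of that $K_{n,m}$; in particular $\overline{v}$ is adjacent to all of $\overline{G_0}$, which is monochromatic of role~$2$. Combining $p(N_{G\overline G}(v))=N_R(p(v))$ and $p(N_{G\overline G}(\overline v))=N_R(p(\overline v))$ with the fact, established in steps one and two, that role~$2$ is essentially reserved for $\overline{G_0}$ (or is very constrained), forces $p(v),p(\overline v)\in\{1,3\}$ for all non-isolated $v$ as well; then the bipartiteness of each $K_{n,m}$ together with the adjacency $v\overline v$ and $\overline v$'s large neighborhood produces a role pattern needing an edge of $R$ that none of the admissible role graphs possesses (concretely, one finds a vertex forced to have both $1$ and $3$ in its image-neighborhood while its role's neighborhood in $R$ omits one of them, exactly as in the proof of Lemma~\ref{lemma:prisma_Kn_nao_tem}).

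**Main obstacle.** The genuinely delicate point is the bookkeeping across the many role graphs of Figure~\ref{figure_3_attribution} simultaneously with the several shapes of $G$ (number of isolated vertices $=1$ versus $\geq 2$; stars $K_{1,m}$ versus balanced $K_{n,m}$ with $\min(n,m)\geq 2$; one versus several non-trivial components), and in particular cleanly excluding the single exceptional family $K_1\cup K_{1,m}$ without accidentally excluding it from the non-existence conclusion. I expect to organize this as: (i) a uniform lemma-level claim that $\overline{G_0}$ is monochromatic of role~$2$ and $G_0\subseteq p^{-1}(\{1,3\})$; (ii) a short case split on whether some role graph admits a loop at~$2$; (iii) a final adjacency count, mirroring Lemma~\ref{lemma:prisma_Kn_nao_tem}, that kills all remaining cases at once.
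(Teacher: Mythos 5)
Your first step is fine and coincides with the paper's opening moves: each $x\in G_0$ has degree one in $G\overline{G}$, so $N_R(p(x))$ is a singleton, which together with $1,3\in N_R(2)$ forces $p(x)\in\{1,3\}$, $p(\overline{x})=2$ and $N_R(p(x))=\{2\}$. From there on, however, the plan leans on a claim that is neither proved nor true: that role~$2$ is ``essentially reserved for $\overline{G_0}$'', and, when $|G_0|=1$, that ``$\overline{w}$ alone has role~$2$''. Steps one and two only give the inclusion $\overline{G_0}\subseteq p^{-1}(2)$; nothing rules out role~$2$ on non-isolated vertices of $G$ or on vertices of $\overline{G}$ outside $\overline{G_0}$ (in the exceptional graph $K_1\cup K_{1,m}$ the valid assignment of Lemma~\ref{lemma:k1m,k1} puts role~$2$ on the star center, so any correct argument must admit such configurations and defeat them using the hypotheses, not exclude them by fiat). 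The sentence about $\overline{w}$ seeing ``only role~$2$ and $\{1,3\}$-colored vertices of $\overline{G_0}$'' is also incoherent when $|G_0|=1$, since $\overline{w}$ sees all of $\overline{G}-\{\overline{w}\}$, whose roles are at that point undetermined. Consequently step three's assertion that $p(v),p(\overline{v})\in\{1,3\}$ for every non-isolated $v$ is unjustified, and the proposed ``final adjacency count mirroring Lemma~\ref{lemma:prisma_Kn_nao_tem}'' does not transfer: that count works because every vertex of $\overline{K_n}$ has degree one, which is false here. Likewise, for $|G_0|\geq 2$ the deduction $2\in N_R(2)$ is correct but is not a contradiction, and ``tracks the consequences'' is precisely the part of the proof that still has to be written.

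What is actually needed (and what the paper does) is: from $p(\overline{a})=2$ and $3\in N_R(2)$ extract a non-isolated $u$ with $p(\overline{u})=3$, split on $p(u)\in\{2,3\}$ (role~$1$ being impossible since $N_R(1)=\{2\}$), and in each branch derive the contradiction from the structural fact that two vertices on the same side of a component $K_{n,m}$ have identical neighborhoods in $G$; this caps $p(N(w))$ at $\{p(\overline{w}),2\}$ while $N_R(2)$ is forced to equal $\{1,2,3\}$, or forces $p(N(\overline{v}))=\{1,2\}$ while $3\in N_R(2)$. The hypothesis $G\not\simeq K_1\cup K_{1,m}$ is used at one precise point — to guarantee $N(\overline{v})-\{\overline{a},v\}\neq\emptyset$, hence $2\in N_R(2)$ — not as end-of-proof bookkeeping. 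Your outline contains neither the same-side-neighborhood argument nor a concrete place where the exceptional family intervenes, so as written it has a genuine gap.
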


\begin{proof} By contradiction, we assume that $p: V(G\overline{G}) \rightarrow V(R)$ is a $3$-role assignment. We consider $a \in G_0$. We can assume, without loss of generality, that $p(a)=1$. We know that $1,3\in N_R(2)$, $p(\overline{a})=2$ and $N_R(1)= \{2 \}$. Since $|N_{R}(2)|\geq 2$, $2\not \in p(G_0)$ and $p(\overline{G} _0)=\{2\}$. On the other hand, there is $u \in V(G)-G_0$, such that, $p(\overline{u})= 3$. We have two cases to check: $p(u)=2$ and $p(u)=3$.

If $p(u)=2$, then there is $v \in N_{G}(u)$, such that $p(v)=1$. Therefore, $p(N_G(v))=\{2\}$ and $p(\overline{v})=2$. Given that $av \not \in E(G)$, we have $2\in N_R(2)$. Therefore, there is $w \in N_G(u)$, such that $p(w)=2$. By hypothesis, $N(w)= \{ \overline{w}\} \cup N_G(v)$ and $p(N(w))=\{p(\overline{w}),2\},$ a contradiction, since $|N_R(2)|=3$.

If $p(u)=3$, then there is $v \in N_G(u)$, such that $p(v)=2$. By the previous case, $p(\overline{v}) \neq 3$. We will analyze the possible roles for $p(\overline{v})\in \{1, 2\}$. In the case that $p(\overline{v})=1$, we have that $p(N(\overline{v}))= \{2 \}$. As $G \not \simeq K_1 \cup K_{1,m}$, with $m \geq 1$, we have that $N(\overline{v})-(\{\overline{a},v\} ) \neq \emptyset$, so $2\in N_R(2)$. So there is $w \in N_G(v)$, such that $p(w)=2$. 
Observe that $N(w)= \{ \overline{w}\} \cup N_G(u)$. Since $p(u)=3$ and $N_R(1)= \{2 \}$, we get that $1 \not \in p(N_G(u))$. However, because $p(\overline{u})=3$ and $wu \not \in E(G)$, we have that $p(\overline{u}) \neq 1$. This leads to a contradiction, as $1 \not \in p(N(w))$ and $p(w)=2$.
Thus, $p(\overline{v})=2$ and $N_R(2)=\{1,2,3\}$. So there exists $w \in V(G)-N_{G}[v]$, such that $p(\overline{w})=1$. Therefore, $p(N(\overline{w}))=\{2\}$. We note that $w \in N_{G}(u)$, since $1\not \in N_{R}(3)$ and $p(\overline{u})=3$. By the hypothesis, $N_{G}(v)=N_{G}(w)$. We have that $N(\overline{v})=\{v\} \cup (\overline{V(G)-N_{G}[v]})= \{v\}\cup (\overline{V (G)-(N_{G}(w)\cup \{v\})})\subseteq \{v,\overline{w}\} \cup N(\overline{w})$, hence $p (N(\overline{v}))=\{1,2\}$, a contradiction.\end{proof}

We present below the case of complementary prisms of graphs without isolated vertices. Note that the graph $K_2$ is included in Lemma~\ref{lemma:prisma_Kn_nao_tem}. In the case of the graph composed of two or three copies of $K_2$, the complementary prism has a $3$-role assignment, as we will see in Lemmas~\ref{lemma:k1m,k1} and Figure~\ref{grafo_K_2^3}. So, the first case is the union of at least four copies of $K_2$. Recall that $G^t$ denotes $t$  copies of the graph $G$.

\begin{lemma} \label{lemma:split_(K_2)^t} The complementary prisms of $K_2^t$, with $t \geq 4$, does not have a $3$-role assignment. \end{lemma}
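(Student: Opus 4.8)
The plan is to argue by contradiction: suppose $p \colon V(K_2^t\overline{K_2^t}) \to V(R)$ is a $3$-role assignment with $t \geq 4$. The graph $K_2^t$ consists of $2t$ vertices, each of degree $1$ (so $(K_2^t)_1 = V(K_2^t)$ and there are no isolated vertices), while $\overline{K_2^t}$ is the complete multipartite graph $K_{2,2,\dots,2}$ with $t$ parts, which is connected for $t \geq 2$ and is in fact $(2t-2)$-regular. Hence in $G\overline{G}$ every vertex $v \in V(K_2^t)$ has degree $2$ (its matching partner $\overline v$ and the unique $K_2$-neighbour), and every vertex $\overline v$ has degree $2t-1 \geq 7$. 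Since $G\overline{G}$ is connected, $R$ is connected; I would first pin down which of the role graphs $R_{3,i}$ in Figure~\ref{figure_3_attribution} can occur by a degree/neighbourhood-count argument on the high-degree side $\overline{V(G)}$.

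First I would analyze the $K_2$-side. For an edge $uv$ of $K_2^t$, we have $N(u) = \{v, \overline u\}$ and $N(v) = \{u, \overline v\}$, so $p(N(u)) = \{p(v), p(\overline u)\}$ must equal $N_R(p(u))$, a set of size at most $2$. Using the standing facts recorded after Figure~\ref{figure_3_attribution} — that $1, 3 \in N_R(2)$ for every role graph, and $N_R(1) = \{2\}$ whenever $1 \notin N_R(1)$ — I would case on $p(u)$. The key local constraint is that each of $u, v$ sees only two vertices, one of which is the other; this forces strong restrictions, e.g. if $p(u) = 1$ then $\{p(v), p(\overline u)\} = \{2\}$, so $p(v) = 2$, and then $p(N(v)) = N_R(2)$ must have size $\leq 2$, which already eliminates $R$ with $|N_R(2)| = 3$ and pushes us toward a very restricted role graph. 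Carrying this out over all cases for $(p(u), p(v))$ on each matching edge should show that only a couple of labeling patterns on the $K_2$-side are consistent with each candidate $R$.

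Next I would use the complement side. Because $\overline{K_2^t}$ is vertex-transitive and $(2t-2)$-regular, all vertices $\overline v$ with a fixed role have the same role-set among their $\overline{K_2^t}$-neighbours; combined with the matching edge $\overline v v$, this says $N_R(p(\overline v)) = \{p(v)\} \cup p(N_{\overline{K_2^t}}(\overline v))$. The crucial counting point: for a vertex $\overline v$, its non-neighbours in $\overline{K_2^t}$ are exactly $\overline v$ itself and $\overline u$ (where $uv \in E(K_2^t)$), so $\overline v$ is adjacent in $\overline{K_2^t}$ to \emph{all} other $2t-2$ vertices of $\overline{V(G)}$. Thus $p(N_{\overline{K_2^t}}(\overline v))$ contains every role that appears on $\overline{V(G)} \setminus \{\overline v, \overline u\}$. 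When $t \geq 4$ there are at least $6$ such vertices, so if the three roles are distributed with none too rare, $p(N_{\overline{K_2^t}}(\overline v))$ tends to be all of $\{1,2,3\}$ for most $\overline v$, forcing $p(\overline v) = 2$ for most $\overline v$ and hence (by the $K_2$-side analysis) a near-constant labeling — which then fails surjectivity or fails the role condition at one of the sparse vertices. The main obstacle, and the part requiring the most care, is the bookkeeping when one or two roles appear on only very few vertices of $\overline{V(G)}$ (the "boundary" configurations): there one cannot conclude $p(N_{\overline{K_2^t}}(\overline v)) = \{1,2,3\}$ for all $\overline v$, and one must instead track exactly which vertices carry the rare roles, use that $t \geq 4$ to guarantee enough "bulk" vertices of the common role, and derive the contradiction from the incompatibility between what the $K_2$-side forces and what the dense complement side forces. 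I expect the $t \geq 4$ hypothesis to enter precisely here, ruling out the small sporadic cases ($t = 2, 3$) that genuinely do admit a $3$-role assignment as noted for $K_2^3$ in Figure~\ref{grafo_K_2^3}.
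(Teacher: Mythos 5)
Your proposal is a plan rather than a proof: the step where the contradiction must actually be derived is exactly the step you defer. You set up the correct local structure (each vertex of $K_2^t$ has degree $2$ in the prism, each $\overline v$ is adjacent to all of $\overline{V(G)}$ except $\overline u$), but then the argument becomes conditional: you say you ``would first pin down'' the admissible role graphs, that the full neighbourhood $p(N_{\overline G}(\overline v))=\{1,2,3\}$ ``tends to'' occur and would ``force'' $p(\overline v)=2$ (this forcing is not justified --- it depends on which role graph $R$ is, which you have not yet determined, and in some candidate role graphs a role other than $2$ can also have all three roles as neighbours), and finally you concede that the ``boundary'' configurations, where one or two roles are rare on $\overline{V(G)}$, require bookkeeping you have not done, adding only that you ``expect'' the hypothesis $t\geq 4$ to enter there. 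Since those boundary configurations are precisely the ones a purported $3$-role assignment would live in, nothing has been ruled out; as written, the argument does not distinguish $t\geq 4$ from $t=3$, where a $3$-role assignment does exist.

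For comparison, the paper's proof closes these gaps with two concrete structural steps before any case analysis: (i) because every vertex of $G$ is a leaf, a loop at role~$2$ is impossible, so $N_R(2)=\{1,3\}$; (ii) $G\overline G$ is non-bipartite, excluding $R_{3,1}$, and $K_4\subseteq\overline{K_2^t}$ (this is exactly where $t\geq 4$ is used, and why $t=3$ escapes) excludes $R_{3,7}$, leaving $R\in\{R_{3,2},R_{3,5}\}$, hence $1\notin N_R(3)$ and a loop at role~$3$. With $R$ pinned down, a short case analysis on whether role~$1$ (and symmetrically role~$3$) appears on $V(G)$ shows it cannot, so $p(V(G))=\{2\}$, contradicting the absence of a loop at role~$2$. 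To repair your proposal you would need to supply analogues of (i) and (ii) and then carry out the finite case analysis you postponed; until then the proof is incomplete.
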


\begin{proof} Let $G \simeq K^t_2$, with $t\geq 4$. By contradiction, we assume that $p: V(G\overline{G}) \rightarrow V(R)$ is a $3$-role assignment. As all vertices of $G$ are leaves, if we have a loop on the role~2, then we have that for all $x\in V(G)$, $p(x)\in \{1,3\}$ and $p(\overline{x})=2$. So $p(N(\overline{x}))=\{p(x),2\},$ is a contradiction. Therefore, $N_{R}(2)=\{1,3\}$. 
Since $G\overline{G}$ is non-bipartite, $R\neq R_{3,1}$.
As $K_4 \subseteq \overline{G}$, we observe that $R \neq R_{3,7}$ since at least one role would be repeated in $K_4$. Therefore, $R=R_{3,2}$ or $R_{3,5}$. So $1 \not \in N_{R}(3)$ and there is a loop on the role~3.

We denote by $u_iv_i$, for $i=1, \ldots, t$, the edges of $G$. Suppose initially that $1\in p(V(G))$. We can assume, without loss of generality, that $p(u_1)=1$. Next, we will analyze the possible roles for $\overline{u_1}$. Remember that $p(\overline{u}_1)\neq 3$.

If $p(\overline{u_1})=1$, then $p(v_1)=2$ and $p(\overline{v_1})=3$. Therefore, $N_R(1)=\{1,2\}$. As $1\not \in N_{R}(3)$, we have $p(\overline{u_2})=2$ and $p(\overline{u_3})=2$. So, there is a loop on the role~2, a contradiction.

If $p(\overline{u_1})=2$, then we can assume that $p(\overline{u_2})=3$. So, $p(\overline{u_3}), p(\overline{v_3}) \in N_{R}(2)\cap N_{R}(3)=\{3\}$. Since $1\not \in N_{R}(3)$ and $2 \not \in N_{R}(2)$, we can assume that $p(u_3)=3$ and $p(v_3) \in \{ 2.3\}$. If $p(v_3)=2$ (respectively $3$), then $v_3$ has not neighbors with role~1 (respectively $2$). In the same way, we have that $3\not \in p(V(G))$. Therefore, $p(V(G))=\{2\}$ leads to a contradiction, since role~2 has no loop. \end{proof}

Finally, the last case is the graph obtained by joining two or more copies of $K_{1,m}$, at least one of which is not isomorphic to $K_2$, except for $K_2 \cup K_{1,m}$, whose complementary prisms has a $3$-role assignment, see Lemma~\ref{lemma:k1m,k1}.

\begin{lemma}\label{lemma:bipartite_K_1,t_not_has_3_attribution} Let $G \simeq \bigcup^t_{i=1} K_{1,m_i}$, with $t \geq 2$, $m_1 \geq 2$ and $m_i \geq 1, i=2,\ldots, t$. If $G \not \simeq K_2 \cup K_{1,m}$ with $m \geq 2$, then the complementary prism of $G$ does not have a $3$-role assignment. \end{lemma}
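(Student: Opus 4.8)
The plan is to argue by contradiction. Suppose $p\colon V(G\overline G)\to V(R)$ is a $3$-role assignment, and write $c_i$ and $\ell_{i,1},\dots,\ell_{i,m_i}$ for the center and the leaves of the $i$-th star $K_{1,m_i}$.

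\emph{Determining $R$.} Any two leaf-complements $\overline{\ell_{i,j}}$ and $\overline{\ell_{k,l}}$ are adjacent in $\overline G$ (two leaves of the same star, or of different components, are non-adjacent in $G$), so the set $Q=\{\overline{\ell_{i,j}} : 1\le i\le t,\ 1\le j\le m_i\}$ induces a clique in $G\overline G$ with $|Q|=\sum_i m_i$. Under the hypotheses of the lemma this number is at least $4$: if $t=2$ then $m_2\ge 2$, for otherwise $K_{1,m_2}=K_2$ and $G\simeq K_2\cup K_{1,m_1}$ with $m_1\ge 2$, which is excluded; and if $t\ge 3$ then $\sum_i m_i\ge m_1+(t-1)\ge 4$. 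Hence $G\overline G$ is connected and contains $K_4$, so it is non-bipartite, and consequently $R$ is connected, non-bipartite and, arguing as in the proof of Lemma~\ref{lemma:split_(K_2)^t}, carries a loop (a $K_4$ forces two of its vertices to receive the same role, which must then have a loop). Thus $R\notin\{R_{3,1},R_{3,7}\}$ (the two loopless role graphs), i.e., $R\in\{R_{3,2},R_{3,3},R_{3,4},R_{3,5},R_{3,6}\}$.

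\emph{Local constraints.} Each leaf $\ell_{i,j}$ has degree $2$ with $N(\ell_{i,j})=\{c_i,\overline{\ell_{i,j}}\}$, so $N_R(p(\ell_{i,j}))=\{p(c_i),p(\overline{\ell_{i,j}})\}$; thus $p(\ell_{i,j})$ must lie on a role of $R$-degree at most $2$, and for each star the value $p(c_i)$ pins the pair $\{p(\ell_{i,j}),p(\overline{\ell_{i,j}})\}$ down to a couple of possibilities. Dually, $N(c_i)=\{\ell_{i,1},\dots,\ell_{i,m_i},\overline{c_i}\}$, whereas $\overline{c_i}$ is adjacent to $c_i$, to every $\overline{c_k}$ with $k\ne i$ and to every leaf-complement $\overline{\ell_{k,j}}$ with $k\ne i$, but to none of its own leaf-complements. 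Finally, in any clique of $G\overline G$ a role of $R$ without a loop can occur at most once, and in most of the candidate role graphs roles $1$ and $3$ cannot occur together in a clique at all; since $Q$ is a clique of size at least $4$, the restriction $p|_Q$ is forced onto very few roles --- ``almost monochromatically'' --- which is the engine of the argument.

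\emph{Case analysis and conclusion.} For each $R\in\{R_{3,2},\dots,R_{3,6}\}$ one propagates this information: the (nearly constant) roles on the leaf-complements force, through their paired leaves, the values $p(c_i)$, and then the roles of the remaining leaves and of the vertices $\overline{c_i}$, up to a bounded number of sub-cases (whether the leaves of a given star are monochromatic, and occasionally the value of $t$). In every branch one then exhibits a center-complement $\overline{c_i}$ (or, in some branches, a leaf-complement) whose neighbourhood is mapped onto a set of roles that cannot equal $N_R$ of its own role --- typically all of these neighbours receive one and the same role while the role of $\overline{c_i}$ has $R$-degree at least $2$ --- giving the desired contradiction. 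The hypothesis $G\not\simeq K_2\cup K_{1,m}$ is used exactly here (besides guaranteeing $|Q|\ge 4$): the only configurations that escape the final incompatibility are those forcing $\sum_{k\ne i}m_k\le 1$ for the relevant index $i$, i.e., one side reducing to a single edge, and these occur only for $K_2\cup K_{1,m}$ --- in agreement with Lemma~\ref{lemma:k1m,k1}, which shows that its complementary prism does admit a $3$-role assignment.

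The step I expect to be the real obstacle is not a single hard idea but the case analysis of the last paragraph: organising the five candidate role graphs together with their internal sub-cases, and identifying in each the precise point where $|Q|\ge 4$ and $G\not\simeq K_2\cup K_{1,m}$ are invoked.
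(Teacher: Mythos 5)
Your opening two paragraphs are sound and essentially reproduce the paper's first moves: the leaf-complements form a clique of order at least $4$ in $\overline G$ (and your verification that the hypotheses $m_1\ge 2$, $t\ge 2$ and $G\not\simeq K_2\cup K_{1,m}$ give $\sum_i m_i\ge 4$ is exactly how the paper gets its order-$4$ clique), so $R$ must carry a loop and the loopless role graphs are excluded. The problem is everything after that. Your third paragraph is a plan, not a proof: the assertions that ``the (nearly constant) roles on the leaf-complements force \dots the values $p(c_i)$'' and that ``in every branch one then exhibits a center-complement $\overline{c_i}$ \dots whose neighbourhood is mapped onto a set of roles that cannot equal $N_R$ of its own role'' are precisely the statements that need to be proved, and none of them is verified for even a single candidate role graph. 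This propagation is the entire substance of the lemma; in the paper it occupies the whole proof, which first disposes of the case $2\notin p(V(G))$, then eliminates the possibility of a loop on role~$2$ by a separate argument (inside which the role graph $R_{3,8}$ has to be handled), and only then, knowing $R\in\{R_{3,2},R_{3,5}\}$ so that role~$3$ has a loop while role~$2$ does not, runs a three-case analysis on the role of a fixed leaf $v_1$ of $K_{1,m_1}$, each case with several sub-branches. You acknowledge in your last sentence that this case analysis is the real work; as written, it is missing, so the lemma is not proved.

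Two secondary inaccuracies in the outline itself. First, your list of surviving candidates $\{R_{3,2},\dots,R_{3,6}\}$ is too short: the paper's Figure~1 contains role graphs up to $R_{3,8}$, and only $R_{3,1}$ and $R_{3,7}$ are loopless, so your enumeration drops at least $R_{3,8}$ --- a case the paper explicitly has to kill (it arises when role~$2$ has a loop and $1\in N_R(3)$). Second, the claim that $G\not\simeq K_2\cup K_{1,m}$ is invoked ``exactly'' at the final incompatibility is not borne out: besides providing the $K_4$, the paper needs it again inside Cases~2 and~3 to guarantee that certain sets of complement vertices (e.g.\ $\overline{G-V^{(1)}}$) contain an edge, which is what produces a vertex of role~$1$ or~$2$ where none is allowed. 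To turn your proposal into a proof you must actually carry out the propagation for each remaining role graph; following the paper's order (kill the loop-at-role-$2$ graphs first, then analyse $p(v_1)\in\{1,2,3\}$ for $R\in\{R_{3,2},R_{3,5}\}$) keeps the branching manageable.
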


\begin{proof} Before the proof we present son more notation. We consider $G \simeq \bigcup^t_{i=1} G^{(i)}$, with $G^{(i)} \simeq K_{1,m_i}$, with $m_1 \geq 2$, $m_i \geq 1,$ for $i=2,\ldots, t$ and $t \geq 2$. We denote by $V^{(i)}= V(G^{(i)})$, $u_i$ the vertex of degree $m_i$ of $G^{(i)}$ and by $v_i$ one of the leaves of $G^{(i)}$, for $i=1, \ldots, t$. Furthermore, as $m_1 \geq 2$, we denote by $w_1$ a leaf of $G^{(1)}$ other than $v_{1}$. We highlight that in the case of $m_i=1$, we have $V^{(i)}=\{u_i,v_i\}$.

 By contradiction, we assume that $p: V(G\overline{G}) \rightarrow V(R)$ is a $3$-role assignment. If $2\not \in p(V(G))$, then $p(\overline{V(G)})=\{2\}$, a contradiction, since $p(N(\overline{u }_1))=\{p(u_1),2\}$.

First, suppose there is a loop on the role~2. As no leaf of $G$ has role~2, we can assume that $p(u_1)=2$ and consequently $p(\overline{u}_1)=2$. Suppose, without loss of generality, that $p(v_1)=1$ and $p(w_1)=3$. If $1 \in N_R(3)$, then $p(\overline{v}_1)=3$ and $p(\overline{w}_1)=1$. Given this, $R=R_{3,8}$. On the other hand, $p(\overline{V^{(i)}}) \subseteq N_{R}(1)\cap N_{R}(2)\cap N_{R}(3)=\{2 \}$, for $i= 2, \ldots, t$. Therefore, $p(N(\overline{u_1}))=\{2\}$ leads to a contradiction. Otherwise, $1 \not \in N_{R}(3)$. Looking at the neighbor of $\overline{u}_1$, we can assume that $p(\{\overline{u_2}, \overline{v_2}\})=\{1,3\}$. Remembering that $p(v_2) \neq 2$ and $1\not \in N_{R}(3)$, we have that $p(v_2)=p(\overline{v_2})$. Therefore, $p(u_2)=2$, a contradiction, since $2\not \in p(N(u_2))$.

From now on, we can assume that there is no loop on the role~2. Clearly, $G\overline{G}$ is non-bipartite and thus $R \neq R_{3,1}$. Observe that there  a clique of order~4 in $\overline{G}$ (made up of $\{\overline{v_1}, \overline{w_1}, \overline{v_2}, \overline{w_2}\}$ or $\{ \overline{v_1}, \overline{w_1}, \overline{v_2}, \overline{v_3} \}$, where $w_2$ is a leaf of $G^{(2)}$ other than $ v_2$ in $\overline{G})$, thus $R\neq R_{3,7}$. So, $R=R_{3,2}$ or $R_{3,5}$. In both cases $3 \in N_{R}(3)$. We will analyze possible roles for $v_1$.

\textbf{Case~1: $\mathbf{p(v_1)=1}$.}

If $1\not \in N_{R}(1)$, then $p(u_1)=2$ and $p(\overline{v_1})=2$. If $p(w_1)=1$, then $p(\overline{w_1})=2$ and $2 \in N_{R}(2)$, which leads to a contradiction. So, $p(w_1)=3$ and $p(\overline{w_1})=3$. Therefore, for $i=2, \ldots, t,$ we have that $p(\overline{V^{(i)}}) \subseteq N_{R}(2)\cap N_{R}(3)= \{3\}$. As $p(\overline{v_2})=3$, we have that $p(v_2)\in \{2,3\}$. If $p(v_2)=2$, then we have $p(u_2)=1$, which leads to a contradiction, since $p(\overline{u_2})=3$ and $1\not \in N_{R }(3)$. Otherwise, $p(v_2)=3$ and in this case we have $p(u_2)=2$. As $p(\overline{u_2})=3$, there is a leaf $w_2 \in V^{(2)}$, such that $p(w_2)=3$, which leads to a contradiction, since $p(\overline{w}_2)=3$.

If $1\in N_{R}(1)$, then $p(u_1)\in \{1,2\}$, since $ 1\not \in N_{R}(3)$. Observe that in this case, we can exchange, by symmetry, role~3 for role~1 and vice versa. We analyze the possible roles for $p(u_1)\in \{1,2\}$.

If $p(u_1)=1$, then $p(\overline{v_1})=2$. If $p(\overline{u_1})=1$, then for $i=2, \ldots, t$ we have $p(\overline{V^{(i)}})\subseteq N_{R}(1 )\cup N_{R}(2)=\{1\}$. If $p(v_2)=1$, then $p(u_2)=2$ and, since $p(\overline{u_2})=1$, we conclude that there is a leaf $w_2 \in V^{(2) }$, such that $p(w_2)=1$, which leads to a contradiction. Otherwise, $p(v_2)=2$, and we have $p(u_2)=3$, that brings a contradiction. So, $p(\overline{u_1})\neq 1$ and thus $p(\overline{u_1})=2$. On the other hand, $p(\overline{v_1})=2$ and like $v_1w_1\not \in E(G)$, $p(\overline{w_1})\neq 2$. This implies that $p(w_1)\neq 1$, and therefore $p(w_1)=2$. So, $p(\overline{w}_1)=3$, and for $i=2, \ldots, t$, we have that $p(\overline{V^{(i)}}) \subseteq N_{ R}(2)\cap N_{R} (3)=\{3\}$. Again, if $p(v_2)=2$, then $p(u_2)=1$ and we have a contradiction. Otherwise, $p(v_2)=3$, $p(u_2)=2$ and there is a leaf $w_2 \in V^{(2)}$ such that $p(w_2)=1$, which gives the desired contradiction.

If $p(u_1)=2$, then $p(\overline{v_1})=1$. If $p(\overline{u_1})=1$, then we can assume that $p(w_1)=3$ and we have $p(\overline{w_1})=3$, a contradiction, since $ 1 \not \in N_{R}(3)$. Otherwise, $p(\overline{u_1})=3$, then for $i=2, \ldots, t$, we have that $p(\overline{V^{(i)}} )\subseteq N_{R }(1)\cap N_{R}(3)=\{2\}$. So $3\not \in p(N(\overline{u_1}))$, which leads to a contradiction.

\textbf{Case~2: $\mathbf{p(v_1)=2}$.}

If $p(u_1)=1$, then $p(\overline{v_1})=3$. If $p(\overline{u_1})=1$, then for $i=2, \ldots, t$, $p(\overline{V^{(i)}}) \subseteq N_{R}(1 )\cap N_{R}(3)=\{2\}$. Since $t \geq 3$ or $\overline{V^{(2)}}$ has an edge, we get a contradiction, with the fact that $2 \not \in N_{R}(2)$. Otherwise, $p(\overline{u_1})=2$ and for $i=2, \ldots, t$, we have that $p(\overline{V^{(1)}})\subseteq N_{R} (2)\cap N_{R}(3)=\{3\}$. By the same reasoning used before, we obtain $x\in V^{(2)}$, such that $p(x)=1$, a contradiction. In the same way, we get a contradiction if $p(u_1)=3$.

\textbf{Case~3: $\mathbf{p(v_1)=3}.$}

%Neste caso, podemos supor que $N_{R}(1)=\{2\}$ e todas as folhas de $V^{(1)}$ tem papel~3. Como $2 \not \in N_{R}(2)$, $\{p(v_1), p(w_1)\} \neq \{2\}$. Logo, podemos supor que $p(\overline{v}_1)=3$. Portanto, $p(u_1)=2$ e $p(\overline{u}_1)=1$. Concluímos que $p(\overline{V^{(i)}}) \subseteq N_{R}(1) \cap N_{R}(3)=\{2\}$, para $i=2, \ldots,t$, uma contradição, pois $\overline{G-V^{(1)}}$ tem pelo menos uma aresta.

In this case, we can assume that $N_{R}(1)=\{2\}$ and all leaves of $V^{(1)}$ have role~3. As $2 \not \in N_{R}(2)$, $\{p(\overline{v_1}), p(\overline{w_1})\} \neq \{2\}$. So we can assume that $p(\overline{v}_1)=3$. Therefore, $p(u_1)=2$ and $p(\overline{u}_1)=1$. We conclude that $p(\overline{V^{(i)}}) \subseteq N_{R}(1) \cap N_{R}(3)=\{2\}$, for $i=2, \ldots,t$, a contradiction, since $\overline{G-V^{(1)}}$ has at least one edge. \end{proof}

\section{General results}
\label{section_5.3}

The first lemma deals with the union of a complete graph $K_n$ with a complete bipartite graph $K_{1,m}$, for $m \geq 1$. We note that this case includes $K_1 \cup K_{1,m}, K^2_2$ and $K_2\cup K_{1,m}$, excluded in the previous section.

\begin{lemma} \label{lemma:k1m,k1} The complementary prisms of $K_n \cup K_{1,m}$, with $m,n\geq 1$, have a $R_{3,2}$-role assignment. \end{lemma}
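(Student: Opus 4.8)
The plan is to exhibit an explicit $R_{3,2}$-role assignment of $G\overline{G}$ for $G=K_n\cup K_{1,m}$. Recall that in $R_{3,2}$ we have $N_{R_{3,2}}(1)=\{2\}$, $N_{R_{3,2}}(2)=\{1,3\}$ and $N_{R_{3,2}}(3)=\{2,3\}$; so the target is a $3$-colouring of $V(G\overline{G})$ in which every role-$1$ vertex has all its neighbours of role~$2$, every role-$2$ vertex sees exactly roles $1$ and $3$, and every role-$3$ vertex sees exactly roles $2$ and $3$.

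First I would fix notation: let $\{x_1,\dots,x_n\}$ be the vertex set of $K_n$, and let $c$ be the centre and $\ell_1,\dots,\ell_m$ the leaves of $K_{1,m}$. Since the complement of a disjoint union is the join of the complements and $\overline{K_{1,m}}\simeq K_1\cup K_m$, in $\overline{G}\simeq \overline{K_n}\vee(K_1\cup K_m)$ the vertices $\overline{x_1},\dots,\overline{x_n}$ are pairwise non-adjacent, $\overline{\ell_1},\dots,\overline{\ell_m}$ form a clique, $\overline{c}$ is adjacent exactly to $\overline{x_1},\dots,\overline{x_n}$, and each $\overline{x_i}$ is adjacent to $\overline{c}$ and to every $\overline{\ell_j}$. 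Together with the matching edges $x_i\overline{x_i}$, $c\overline{c}$, $\ell_j\overline{\ell_j}$, this describes the neighbourhood in $G\overline{G}$ of each of the six vertex types, which is what the verification below needs.

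Next I would define $p : V(G\overline{G})\rightarrow\{1,2,3\}$ by $p(\overline{c})=1$; $p(c)=p(\overline{x_i})=2$ for every $i$; $p(\ell_j)=p(\overline{\ell_j})=3$ for every $j$; and $p(x_i)=3$ for every $i$ when $n\geq 2$, while $p(x_1)=1$ when $n=1$. Then I would check $p(N_{G\overline{G}}(v))=N_{R_{3,2}}(p(v))$ on one vertex of each type: $\overline{c}$ sees only vertices of role~$2$; $c$ sees roles $3$ (the $\ell_j$) and $1$ (namely $\overline{c}$); $\ell_j$ sees roles $2$ and $3$; $\overline{\ell_j}$ sees roles $2$ (the $\overline{x_i}$) and $3$ ($\ell_j$, and the remaining $\overline{\ell_k}$ if any); $\overline{x_i}$ sees roles $1$ ($\overline{c}$, and, if $n=1$, also $x_1$) and $3$ (the $\overline{\ell_j}$, or $x_i$ when $n\geq 2$); and $x_i$ sees roles $2$ and $3$ when $n\geq 2$, while for $n=1$ the vertex $x_1$ sees only $\overline{x_1}$, of role~$2$. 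Each of these equals the prescribed neighbourhood in $R_{3,2}$, all three roles are used, and since $m,n\geq 1$ no invoked neighbourhood is empty, so $p$ is a surjective $R_{3,2}$-role assignment.

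The one point requiring care — and the reason for distinguishing $n=1$ from $n\geq 2$ — is the clique $\{x_1,\dots,x_n\}$ inside $G\overline{G}$: a role-$1$ vertex needs every neighbour to have role~$2$, which is impossible for a vertex lying in a clique of size at least $2$, since two of its clique-neighbours would then be adjacent vertices of role~$2$ while $R_{3,2}$ has no loop at~$2$. Hence for $n\geq 2$ the clique vertices must all receive role~$3$, which then forces $p(\overline{x_i})=2$ and propagates to the rest of the assignment, whereas for $n=1$ the single clique vertex $x_1$ is merely a pendant of $\overline{x_1}$ and may take role~$1$. Beyond this observation the proof is a direct case check.
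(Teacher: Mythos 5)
Your proposal is correct and coincides with the paper's own proof: the assignment you define (centre $c$ and the vertices $\overline{x_i}$ get role~2, $\overline{c}$ gets role~1, leaves and their complements get role~3, and the $K_n$-vertices get role~3 unless $n=1$, when the single isolated vertex gets role~1) is exactly the map the paper writes uniformly via $G_0$ and $N_G[u_0]$, and your case-by-case verification matches theirs. No gap; the only cosmetic difference is that you make the $n=1$ versus $n\geq 2$ distinction explicit where the paper absorbs it into the definition.
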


\begin{proof} Let $G \simeq K_n \cup K_{1,m}$, with $m,n \geq 1$. Observe that if $n=1$, then $G_0$ consists of only one vertex, otherwise it is empty. We denote by $u_0$ the vertex of degree $m$ of the isomorphic component to $K_{1,m}$. We define $p: V(G \overline{G}) \rightarrow \{1,2,3\}$ as follows.

\begin{minipage}[c]{0.4\linewidth}
For $x\in V(G):$
\vspace{-0.2cm}
$$p(x)=
\begin{cases}
1, \mbox{ if } x \in G_0 ;\\
2, \mbox{ if } x = u_0 ; \\
3, \mbox{ otherwise. }
\end{cases}
\vspace{0.2cm}$$
\end{minipage}
and
\hfill
\begin{minipage}[c]{0.58\linewidth}
\vspace{0.2cm}
$$p(\overline{x})=
\begin{cases}
1, \mbox{ if } x=u_0;\\
2, \mbox{ if } x \not \in N_{G}[u_0];\\
3, \mbox{ if } x \in N_{G}(u_0).
\end{cases}$$
\end{minipage}

We consider $u_1 \in N_{G}(u_0)$. Observe that $p(u_1)=3$ and $p(\overline{u_1})=3.$ We show that $p$ is a $R_{3,2}$-role assignment.

The vertices with role~1 are those belonging to $G_0$ or the vertex $\overline{u_0}$. Let $x \in G_0$, then $p(N(x)) =\{p(\overline{x})\}=\{2\}$. For the vertex $\overline{u_0}$, we have that $p(N(\overline{u}_0))=p(\{u_0\} \cup \overline{V(G)-N_G[u_0]}) =\{2\}$.

The vertices with role~2 are the vertex $u_0$ or those belonging to $\overline{V(G)-N_{G}[u_0]}$. Observe that $V(G)-N_G[u_0]$ is composed of vertices of the clique isomorphic to $K_n$. Clearly, $\{u_0\} \cup \overline{V(G)-N_{G}[u_0]}$ is an independent set. In this case, for each vertex of role~2, we have to point neighbors of role~1 and~3. For the vertex $u_0$, we have $p(\overline{u_0})= 1$ and $p(u_1)=3$. Let $\overline{x}$, with $x \not \in N_{G}[u_0]$, we have $p(\overline{u_0})=1$ and $p(\overline{u_1})=3 $.

The vertices with role~3 are those belonging to $V(G)-(\{u_0\} \cup G_0)$ or to $\overline{N_{G}(u_0)}$. Let $x \in V(G)-(\{u_0\} \cup G_0)$. If $x \in N_{G}(u_0)$, then $p(u_0)=2$ and $p(\overline{x})=3$, this implies that $p(N(x))= \{ 2, 3 \}$. Otherwise, $p(\overline{x})=2$ and $V(G)\neq G_0 \cup N_{G}[u_0],$ then $n\geq 2$, which guarantees that $3 \in p(N(x))$. Let $\overline{x}$, with $x \in N_{G}(u_0)$. We know that there is $y \not \in N_G[u_0]$. By the nature of $G$, $xy \not \in E(G)$. Since $p(x)=3$ and $p(\overline{y})=2$, we have that $p(N(\overline{x}))=\{2,3\}$.\end{proof}

%\begin{figure}[ht!]
	%\centering
	%\includegraphics[scale=0.65]{Lema_5_17.eps} 
	%\caption{Example of the complementary prism of the graph $K_4\cup K_{1,3}$ with a $R_{3,2}$-role assignment.}
	%\label{grafo_K_4UK_1,3}
%\end{figure}

Next, we show some conditions on maximal cliques that assure a $3$-role assignment to the complementary prisms. We present a general result, despite using it only for the case where the clique has order~2.

\begin{lemma} \label{lemma:naobipartido_naosplit_maximum clique} Let $C$ be a maximal clique of the graph $G$. We consider the following conditions:

\begin{enumerate}
\item for every $x \in C$, there is $y \not \in C$, such that $xy \in E(G)$;
\item for every $x \not \in C$, there is $y \not \in C$, such that $xy \not \in E(G)$;
\item for every $x \not \in C$, there is $y \not \in C$, such that $xy \in E(G)$. \end{enumerate}

The complementary prism of $G$ has a $R_{3,4}$-role assignment. \end{lemma}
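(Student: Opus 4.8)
The plan is to exhibit an explicit $R_{3,4}$-role assignment $p\colon V(G\overline{G})\to\{1,2,3\}$ and verify that for every vertex the image of its neighborhood equals the $R_{3,4}$-neighborhood of its role. Recall from Figure~\ref{figure_3_attribution} that $R_{3,4}$ has edge set making $N_R(1)=\{2\}$, $N_R(2)=\{1,2,3\}$, $N_R(3)=\{2,3\}$ (i.e.\ loops on roles $2$ and $3$, and role $1$ a pendant at role $2$). So I must arrange: every black vertex sees only white vertices; every white vertex sees at least one black, at least one other white, and at least one gray; every gray vertex sees at least one white and at least one other gray, and no black.

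First I would place the roles on the $G$-side using the maximal clique $C$: assign role $1$ to nothing in $G$, role $2$ to the vertices of $C$, and role $3$ to the vertices of $V(G)\setminus C$. On the $\overline{G}$-side, assign role $1$ to $\overline{C}$ if $|C|=1$ — but since we want surjectivity and the generic picture, the natural choice is to use the matching edges to realize role $1$: set $p(\overline{x})=1$ for a suitable nonempty subset, $p(\overline{x})=2$ for another part, $p(\overline{x})=3$ for the rest. Concretely I expect the assignment to be: for $x\in C$, $p(x)=2$ and $p(\overline{x})=1$; for $x\notin C$, $p(x)=3$ and $p(\overline{x})\in\{2,3\}$, where we use conditions (2) and (3) to split $V(G)\setminus C$ so that both values $2$ and $3$ actually occur and so that the adjacency requirements in $\overline{G}$ are met. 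Surjectivity is then immediate (role $1$ on $\overline{C}$, role $2$ on $C$, role $3$ on $V(G)\setminus C$, assuming $C\neq V(G)$, which conditions (2)/(3) force).

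The verification then breaks into cases by role and by side. For a black vertex $\overline{x}$ with $x\in C$: its neighbors are $x$ (role $2$) and $\overline{y}$ for $y\notin C$ (since $C$ is a clique, $\overline{x}$ is nonadjacent in $\overline{G}$ to the other $\overline{C}$-vertices, and adjacent to all $\overline{y}$, $y\notin C$); condition (2) being vacuous here, I just need $p(\overline{x})$-neighbors to be exactly $\{2\}$, which forces every $\overline{y}$ with $y\notin C$ to get role $2$ — so actually the right split is dictated by this constraint. Let me reconsider: to keep black vertices seeing only white, I set $p(\overline{y})=2$ for all $y\notin C$, and use the matching plus condition (1) to feed gray its needed white neighbor. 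Then $V(G)\setminus C$ is entirely gray on the $G$-side and white on the $\overline{G}$-side; role $3$-vertices $x\notin C$ see: their $G$-neighbors in $C$ (white, guaranteed nonempty when... — this is where condition (1) enters, giving each $x$ a neighbor outside... no, I need a neighbor inside structure) — I will instead use condition (3) to give each gray $x\notin C$ a gray $G$-neighbor $y\notin C$, and the matching edge $x\overline{x}$ gives it a white neighbor, and maximality of $C$ together with condition (1)/(2) ensures no gray vertex is adjacent to a black one. White vertices $x\in C$ see $\overline{x}$ (black), the other $C$-vertices (white, needing $|C|\geq 2$ — handled separately or via loop interpretation), and by condition (1) a gray neighbor $y\notin C$. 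White vertices $\overline{y}$, $y\notin C$, see $y$ (gray), by condition (3)'s complement... and need a black and another white neighbor in $\overline{G}$: adjacency of $\overline{y}$ to $\overline{C}$ (black) holds iff $y$ is nonadjacent in $G$ to that $C$-vertex, which holds by maximality of $C$; another white neighbor comes from condition (2) giving $y$ a non-neighbor $z\notin C$, so $\overline{y}\,\overline{z}\in E(\overline{G})$ with $p(\overline{z})=2$.

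The main obstacle I anticipate is the degenerate case $|C|=2$ (the one the lemma is actually applied to), where a white vertex $x\in C$ needs a \emph{second} white neighbor to justify the loop at role $2$: its only other $C$-vertex is white, so that is fine, but I must double-check that no case collapses — e.g.\ that conditions (1)–(3) are not silently forcing $C$ or $V(G)\setminus C$ to be empty, and that when $V(G)\setminus C$ has only one vertex, condition (3) cannot be satisfied, so the hypotheses implicitly guarantee $|V(G)\setminus C|\geq 2$. I would carefully enumerate the finitely many role/side combinations and, for each, name the witnessing neighbor using exactly one of conditions (1), (2), (3) or the clique/matching structure, and confirm the image set is precisely the prescribed $N_{R_{3,4}}$ of that role — in particular checking the negative requirement that gray vertices never see black, which rests entirely on the maximality of $C$ and condition (1).
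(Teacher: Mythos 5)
Your final assignment ($p(x)=2$, $p(\overline{x})=1$ for $x\in C$; $p(x)=3$, $p(\overline{x})=2$ for $x\notin C$) is exactly the one the paper uses, and your verification of the three roles is sound, so this is essentially the paper's proof (the paper states the same map and says the check is immediate). Two small touch-ups: the needed fact $|C|\geq 2$ follows at once from Condition~1 together with the maximality of $C$ (as the paper notes), and the fact that gray vertices never see a black one is purely structural — role~1 occurs only on $\overline{C}$, and the only edges between $G$ and $\overline{G}$ are the matching edges, whose endpoints at gray vertices are white — so it does not rest on maximality or Conditions~(1)/(2) as you suggest.
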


\begin{proof} We note that Condition~1 implies that $|C|\geq 2$, since $C$ is a maximal clique. We define $p:V(G\overline{G})\rightarrow \{1,2,3\}$, as follows.

\begin{minipage}[c]{0.5\linewidth}
For $x\in V(G):$
\vspace{-0.2cm}
$$p(x)=
\begin{cases}
2, \mbox{ if } x \in C; \\
3, \mbox{ otherwise. }
\end{cases}$$
\end{minipage}
 and
\hfill
\begin{minipage}[c]{0.4\linewidth}
%To $\overline{x} \in V(\overline{G}):$
\vspace{0.2cm}
    $$p(\overline{x})=
\begin{cases}
1, \mbox{ if } x \in C;\\
2, \mbox{ otherwise. }
\end{cases}$$
\end{minipage}

\qquad \qquad

We easily see that $p$ is a $R_{3,4}$-role assignment.\end{proof}

%The vertices with role~1 are those belonging to $\overline{C}$. Let $\overline{x}$, with $x\in C$, we have $p(x)=2$ and $p(\overline{V(G)-C})=\{2\}$. Remembering that $\overline{C}$ is a maximal independent set, we have that $p(N(\overline{x}))=\{2\}$.

%The vertices with role~2 are those belonging to $C$ or $\overline{V(G)-C}$. Let $x \in C$, then $p(\overline{x})=1$. Like $|C| \geq 2$ and $p(C)= \{2\}$, we have that $2\in p(N(x))$. Furthermore, by Condition~1, we have that $y \not \in C$ exists, such that $xy \in E(G)$. Therefore, since $p(y)=3$, we have $3\in p(N(x))$. Let $\overline{x}$, with $x\not \in C$, then $p(x)=3$. We observe that $C$ is a maximal clique, so there is $y\in C$, such that $xy\not \in E(G)$. Since $p(\overline{y})=1$, we have that $1 \in p(N(\overline{x}))$. By Condition~2, there is $z \not \in C$, such that $xz \not \in E(G)$. Given that $p(\overline{z})=2$, we conclude that $p(N(\overline{x}))=\{1,2,3\}$.

%The vertices with role~3 are those belonging to $V(G)-C$. Let $x \not \in C$, we have $p(\overline{x})=2$. By Condition~3, there is $y \not \in C$, such that $xy \in E(G)$. Soon $p(N(x))=\{2,3\}.$ 

The next lemma presents conditions on a leaf and its neighborhood. Although we use this lemma only in the case of non-bipartite graphs, we prefer to present it now for consistency with the next two lemmas. In fact, the assignment defined in the three proofs is the same, but we split it in different cases for the sake of simplicity. Remember that $G_1$ is the set of leaves of $G$, that is, of the vertices of degree $1$.

\begin{lemma} \label{lemma:nobipartite_semi-isolated_clique_at least1verticalleaf} Let $f$ be a leaf of the graph $G$ with $N_{G}(f)=\{d\}$ such that $N_{G}(d)$ is an independent set. If there is $a \in N_{G}(d)$, such that $a\not \in G_1$ and $V(G)\neq N_{G}[a]\cup \{f\}$, then the complementary prism of $G$ has a $R_{3,4}$-role assignment. \end{lemma}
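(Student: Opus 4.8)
The plan is to exhibit an explicit $R_{3,4}$-role assignment $p : V(G\overline{G}) \to \{1,2,3\}$, following the same template used in Lemmas~\ref{lemma:naobipartido_naosplit_maximum clique} and~\ref{lemma:nobipartite_semi-isolated_clique_at least1verticalleaf}. Since $N_G(d)$ is an independent set and $f \in N_G(d)$, the set $C = \{d, f\}$ is a maximal clique of $G$ (it cannot be extended because $d$'s other neighbours are pairwise non-adjacent and non-adjacent to $f$ as $f$ is a leaf). So the natural guess is to assign role~$2$ to $d$ and $f$, role~$1$ to $\overline{d}$ and $\overline{f}$, role~$2$ to the remaining vertices of $\overline{G}$, and role~$3$ to the remaining vertices of $G$; this is exactly the map of Lemma~\ref{lemma:naobipartido_naosplit_maximum clique}. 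The recollection is that $R_{3,4}$ has $N_R(1)=\{2\}$, $N_R(2)=\{1,2,3\}$, $N_R(3)=\{2,3\}$ and a loop on roles~$2$ and~$3$.

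First I would verify the role~$1$ vertices: $p(N(\overline{f})) = \{p(f)\} \cup \{p(\overline{d})\}$ — wait, $\overline{f}$ is adjacent in $\overline{G}$ to every $\overline{x}$ with $x \neq f$ non-adjacent to $f$ in $G$, i.e. to all $\overline{x}$ with $x \notin \{d,f\}$, plus to $f$ via the matching. Since $f$ is a leaf, $\overline{f}$ has $\overline{G}$-degree $|V(G)|-2$, and I need some such $\overline{x}$ to exist and carry role~$2$: this is where the hypothesis $V(G) \neq N_G[a] \cup \{f\}$ enters, guaranteeing $|V(G)| \geq 4$ so that $V(G)\setminus\{d,f\}$ is nonempty and its image under $p$ (via $\overline{\,\cdot\,}$) is $\{2\}$; hence $p(N(\overline{f})) = \{2\}=N_R(1)$, and symmetrically for $\overline{d}$ (note $\overline{d}$ is adjacent to $d$ and to all $\overline{x}$, $x\notin N_G[d]$; such $x$ exist precisely because $a \notin G_1$ forces $d$ to have a neighbour other than $f$, so $N_G[d]\neq V(G)$, actually one must check $V(G)\setminus N_G[d]\neq\emptyset$, which follows since $a$ has a neighbour outside $N_G[d]\cup\{f\}$ by $a\notin G_1$ together with $N_G(d)$ independent). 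Then I would check role~$2$: $d$ sees $f$ (role~$2$), $\overline{d}$ (role~$1$), and some $x\in N_G(d)$, $x\neq f$ with $p(x)=3$ (exists since $a$ exists); $f$ sees $d$ (role~$2$) and $\overline{f}$ (role~$1$) and needs a role-$3$ neighbour — but $f$ is a leaf, so $p(N(f))=\{p(d),p(\overline{f})\}=\{1,2\}\neq N_R(2)$. This shows the naive template fails at $f$, so the assignment must actually differ.

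So the real step is: reexamine which template is intended. Looking at Lemma~\ref{lemma:nobipartite_semi-isolated_clique_at least1verticalleaf}'s setup — a leaf $f$, its neighbour $d$ with $N_G(d)$ independent, and $a\in N_G(d)$ with $a\notin G_1$ and $V(G)\neq N_G[a]\cup\{f\}$ — the phrase ``the assignment defined in the three proofs is the same'' tells us the map assigns role~$2$ to $d$, role~$1$ to $\overline{f}$, and handles $f$ and $a$ specially: presumably $p(f)=3$ with $p(\overline{f})=1$? No — then $\overline f$ would need $N_R(1)=\{2\}$ but sees $f$ with role $3$. The consistent choice is $p(f)=1$, $p(\overline f)=2$, $p(d)=2$, $p(\overline d)=?$, with the clique-style roles elsewhere; one checks $f$ (leaf, role~$1$) sees only $d$ in $G$ and $\overline f$ via matching, needing $p(N(f))=\{2\}$, forcing $p(\overline f)=2$ — consistent. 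Then $\overline f$ (role~$2$) sees $d$? no, $\overline f \sim \overline x$ for $x\not\sim f$, and $\sim f$; so $p(N(\overline f))=\{p(f)\}\cup p(\overline{V(G)\setminus\{d,f\}})=\{1\}\cup\{?\}$, and choosing role~$2$ on those $\overline x$ and role~$1$ on $\overline d$ gives $\{1,2\}$ — still missing~$3$. I would therefore set roles on $\overline G$ so that some $\overline x$ gets role~$3$; concretely assign $p(\overline x)=3$ for $x\in N_G(d)\setminus\{f\}$ and $p(\overline x)=2$ otherwise, $p(\overline d)=1$, $p(\overline f)=2$, and in $G$: $p(f)=1$, $p(d)=2$, $p(x)=3$ for $x\in N_G(d)\setminus\{f\}$, and $p(x)=2$ for $x\notin N_G[d]\cup\{f\}$. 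I would then systematically verify each of the (at most) six fibres against $N_R$ in $R_{3,4}$, invoking $a\notin G_1$ to produce a role-$3$ witness adjacent to $d$ and (via $a$'s outside neighbour) to various $\overline x$, and invoking $V(G)\neq N_G[a]\cup\{f\}$ to guarantee the role-$2$ fibre in $G$ is nonempty and that every $\overline x$ of role~$3$ has a role-$2$ neighbour.

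The main obstacle, and where I would spend the care, is pinning down the single assignment that the authors claim works uniformly across this lemma and the next two, and then checking that both structural hypotheses ($a\notin G_1$, $V(G)\neq N_G[a]\cup\{f\}$) are exactly what is needed — no more, no less — to make every neighbourhood image hit the correct $N_R$-set in $R_{3,4}$; the bipartite/non-bipartite subtlety noted in the remark (that $R_{3,4}$ has loops, so $G\overline G$ non-bipartite is automatic and causes no conflict) means the verification is purely a finite case check on the role graph once the map is fixed.
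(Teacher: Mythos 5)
Your final explicit map is not a correct $R_{3,4}$-role assignment, so the proposal has a genuine gap rather than just missing polish. Recall $N_R(1)=\{2\}$, $N_R(2)=\{1,2,3\}$, $N_R(3)=\{2,3\}$ in $R_{3,4}$ (your recollection is right). With your assignment ($p(f)=1$, $p(d)=2$, $p(x)=3$ on $N_G(d)\setminus\{f\}$, $p(x)=2$ off $N_G[d]$; $p(\overline{d})=1$, $p(\overline{f})=2$, $p(\overline{x})=3$ on $\overline{N_G(d)\setminus\{f\}}$, $p(\overline{x})=2$ otherwise) several fibres break: (i) the vertex $d$ itself has $p(N(d))\subseteq\{1,3\}$, since all of $N_G(d)\setminus\{f\}$ carries role~3 and both $f$ and $\overline{d}$ carry role~1, so the required role-2 neighbour is missing; (ii) any $x\notin N_G[d]$ (such $x$ exist because $a\notin G_1$ forces a neighbour $u$ of $a$ with $ud\notin E(G)$) receives role~2 but can never see a role-1 vertex, because the only role-1 vertices are $f$ (whose sole neighbour is $d$) and $\overline{d}$ (matched only to $d$); (iii) isolated vertices, which the hypotheses allow and which actually occur in the paper's application of this lemma to $K_1\cup P_4$ in Theorem~\ref{finaltheorem:bipartite_not_connected_seuniversal_howoremisolated1}, get role~2 with $p(N(x))=\{2\}$; (iv) even on a connected example such as $G=P_5=f\,d\,a\,x\,w$, which satisfies all hypotheses, the vertex $\overline{x}$ has role~2 but $p(N(\overline{x}))=\{1,2\}$. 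So the map you settled on fails, and the "systematic verification" you defer cannot succeed.

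The missing idea is where to anchor the assignment. The paper anchors it at $a$, not at $d$: on the complement side $p(\overline{a})=1$, $p(\overline{x})=2$ for $x\notin N_G[a]$, $p(\overline{x})=3$ for $x\in N_G(a)$; on the $G$ side role~1 goes to $G_0\cup(G_1\cap N_G(d))$ (all leaf-neighbours of $d$, not only $f$, plus isolated vertices), role~2 to \emph{both} $a$ and $d$, role~3 to everything else. Then $d$'s role-2 witness is $a$ and its role-1 witness is $f$; $a$'s role-1 witness is $\overline{a}$; every role-2 vertex of $\overline{G}$ is nonadjacent to $a$ in $G$ and hence sees $\overline{a}$; and the two structural hypotheses supply exactly the remaining witnesses, namely $a\notin G_1$ gives $u\in N_G(a)$ with $ud\notin E(G)$ (so $p(u)=p(\overline{u})=3$), and $V(G)\neq N_G[a]\cup\{f\}$ gives $w$ with $p(\overline{w})=2$. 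Your write-up never reaches this configuration, so the proof is not recoverable as stated.
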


\begin{proof} Remember that $f\in G_1 \cap N_{G}(d)$. We consider $I=G_1 \cap N_{G}(d)$. We define $p:V(G\overline{G})\rightarrow \{1,2,3\}$, as follows.

\begin{minipage}[c]{0.5\linewidth}
For $x\in V(G):$
\vspace{-0.2cm}
$$p(x)=
\begin{cases}
1, \mbox{ if } x\in G_0 \mbox{ or } x\in I;\\
2, \mbox{ if } x=a \mbox{ or } x=d; \\
3, \mbox{ otherwise. }
\end{cases}
\vspace{0.2cm}$$
\end{minipage}
and
\hfill
\begin{minipage}[c]{0.5\linewidth}
%To $\overline{x} \in V(\overline{G}):$
\vspace{0.2cm}
$$p(\overline{x})=
\begin{cases}
1, \mbox{ if } x=a; \\
2, \mbox{ if } x\not \in N_{G}[a];\\
3, \mbox{ if } x\in N_{G}(a).
\end{cases}$$
\end{minipage}

By hypothesis, $a\not \in G_1$, so there is $u\in N_{G}(a)$, $u\neq d$. As $N_{G}(d)$ is an independent set, we have that $ud\not \in E(G)$. We consider $w\in V(G)-(N_{G}[a] \cup \{f\}).$ We have that $p(u)=p(\overline{u})=3$, $p (w) \in \{1,3\}$ and $p(\overline{w})=2$. We show that $p$ is a $R_{3,4}$-role assignment.

The vertices with role~1 are those belonging to $G_0 \cup I$ or the vertex $\overline{a}$. Let $x\in G_0$, we have that $p(N(x))=\{p(\overline{x})\}=\{2\}$. Let $x \in I$, we have that $p(N(x)) = p(\{\overline{x},d\})=\{2\}$. For the vertex $\overline{a}$, we have that $p(N(\overline{a}))=p(\{a\}\cup \overline{V(G)-N_{G}[a] })=\{2\}$.

The vertices with role~2 are $a$, $d$ and those belonging to $\overline{V(G)-N_{G}[a]}$. For the vertex $a$, we have $p(\overline{a})=1$, $p(d)=2$ and $p(u)=3$. For the vertex $d$, we have $p(\overline{d})=3$, $p(a)=2$ and $p(f)=1$. Let $\overline{x}$, with $x\in G_0 \cup I$, so $p(x)=1$ and $p(\overline{u})=3$. For the vertex $\overline{f}$, we have $p(\overline{w})=2$. If $x \neq f$, we have $p(\overline{f})=2$. Let $\overline{x}$, with $x\not \in N_{G}[a]$ and $x\not \in G_0 \cup I$, we have $p(x)=3$, $p( \overline{a})=1$ and $p(\overline{f})=2$.

The vertices with role~3 are those belonging to $V(G)-(\{a,d\} \cup G_0 \cup I)$ or to $\overline{N_{G}(a)}$. Let $x \in V(G)$ be such that $x \neq a,d$ and $x \not \in G_0 \cup I$. If $x\in N_{G}(a)$, then we have $p(\overline{x})=3$ and $p(a)=2.$ Otherwise, $x\not \in N_{G}[a]$ and we have $p(\overline{x})=2$. As $x\not \in G_0 \cup I$, there is $y\in N_{G}(x)$ such that $y\neq a,d$, that is $p(y)=3$. Let $\overline{x}$, with $x \in N_{G}(a)$. For the vertex $\overline{d}$, we have $p(d)=2$ and $p(\overline{u})=3$. If $x\neq d$, we have $p(\overline{f})=2$ and $p(\overline{d})=3$. \end{proof}

In the next lemma, the graph does not necessarily have a leaf, however it has a vertex that behaves like a leaf.

\begin{lemma} \label{lemma:nobipartite_withisolated_withclick} Let $a$ and $b$ be two non-isolated vertices of $G$, such that $ab\not \in E(G)$, $N_{G}(a )$ is an independent set, $N_{G}(b) \subseteq N_{G}(a)$, $N_{G}(b) \neq N_{G}(a)$ and $V(G) \neq N_{G}[a]\cup \{b\}$. The complementary prism of $G$ has a $R_{3,4}$-role assignment. \end{lemma}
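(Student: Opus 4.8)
The plan is to mimic the explicit $R_{3,4}$-role assignment used in the previous two lemmas, since the statement explicitly says ``the assignment defined in the three proofs is the same''. The vertices $a$ and $b$ play the roles that $a$ and $d$ (or $a$ and $f$) played before: $b$ behaves like a semi-isolated vertex attached to $a$, even though it need not be a literal leaf. Concretely, I would set $I = \{x \in V(G) : N_G(x) \subseteq \{a\}\} \setminus G_0$ (the leaves hanging off $a$, together with $b$ if $b$ happens to be a leaf), assign role~$1$ to vertices of $G_0 \cup I$, role~$2$ to $a$ and $b$, and role~$3$ to every other vertex of $G$; on the complement side, assign $\overline{a}$ role~$1$, $\overline{x}$ role~$3$ when $x \in N_G(a)$, and $\overline{x}$ role~$2$ otherwise. (One has to be slightly careful about whether $b \in I$ or $b$ is assigned role~$2$ directly; since $N_G(b) \subseteq N_G(a)$ and $N_G(b) \neq N_G(a)$, the vertex $b$ is not adjacent to $a$, so assigning it role~$2$ is consistent with the earlier pattern, where $d$ also got role~$2$.)

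The verification proceeds exactly as in Lemmas~\ref{lemma:nobipartite_semi-isolated_clique_at least1verticalleaf}, checking the three role classes in turn. First I would use the hypotheses to extract the ``witness'' vertices: since $a$ is non-isolated and $N_G(a)$ is independent, pick $u \in N_G(a)$; since $b$ is non-isolated with $N_G(b) \subseteq N_G(a)$, we may pick $u \in N_G(a) \cap N_G(b)$, and then $u d$-type non-edges hold because $N_G(a)$ is independent; since $N_G(b) \neq N_G(a)$ there is a vertex in $N_G(a) \setminus N_G(b)$, which is exactly what makes $b$ ``miss'' a neighbor that $a$ sees; and since $V(G) \neq N_G[a] \cup \{b\}$ there is a witness $w \notin N_G[a] \cup \{b\}$ with $p(w) \in \{1,3\}$ and $p(\overline w) = 2$. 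With these witnesses in hand, for a role-$1$ vertex one shows its neighborhood maps onto $\{2\}$ (an isolated vertex sees only $\overline{x}$; a vertex of $I$ sees $\overline{x}$ and $a$; $\overline a$ sees $a$ and the role-$2$ part of the complement); for a role-$2$ vertex one exhibits a role-$1$ neighbor and a role-$3$ neighbor ($a$ sees $\overline a$ and $u$; $b$ sees $\overline b$ — which has role~$3$ since $b \in N_G(a)$ — and a role-$1$ vertex among its leaves or, failing that, uses $N_G(b) \neq N_G(a)$; the complement vertices $\overline x$ with $x \notin N_G[a]$ see $\overline a$ and $\overline w$); for a role-$3$ vertex one exhibits role-$2$ and role-$3$ neighbors (on the $G$ side using that such a vertex is not in $G_0 \cup I$ so it has a neighbor of role~$3$, plus $a$ or $\overline x$; on the complement side $\overline x$ with $x \in N_G(a)$ sees $x$ with role~$2$ via... ) — this is routine once the pattern from the earlier lemma is transcribed with $d$ replaced by $b$.

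The one genuinely new point, and the place I expect the argument to need care, is handling the vertex $b$ in the role-$2$ class: in the earlier lemmas $d$ had a leaf $f$ of role~$1$ guaranteeing a role-$1$ neighbor, whereas here $b$ may have no leaf neighbor at all. The fix is that $b \in N_G(a)$, so $\overline b$ gets role~$3$; for the role-$1$ neighbor of $b$ I would use that $N_G(b) \subsetneq N_G(a)$: pick $c \in N_G(a) \setminus N_G(b)$, so $c \neq b$, $c \in N_G(a)$, hence... actually $c$ alone does not give role~$1$. More precisely: if $b$ has a leaf neighbor it lies in $I$ and supplies role~$1$; if not, then $b$'s only neighbors are non-leaves in $N_G(a)$ all of role~$3$, and one must instead verify that $b$ need not see role~$1$ at all — meaning the role graph is not $R_{3,4}$ in that sub-case but a smaller one. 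I would therefore split on whether $b$ has a leaf neighbor, obtaining $R_{3,4}$ in one sub-case and checking directly (as in Lemma~\ref{lemma:k1m,k1}, where $u_0$ of role~$2$ genuinely needs a role-$1$ and role-$3$ neighbor) that the needed adjacencies survive; reconciling these two sub-cases so that a single explicit $p$ works is the main obstacle. All remaining verifications are the same case analysis already carried out in the preceding lemma.
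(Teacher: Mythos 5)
Your proposed assignment diverges from the paper's in the two places that matter, and the divergence is fatal. First, you put role~2 on $b$; but then $b$ needs a neighbor of role~1, and in general it has none: every $x \in N_G(b)$ is adjacent to both $a$ and $b$, so $x$ is not in your set $I$ and gets role~3, while $\overline{b}$ gets role~2 because the hypothesis $ab \notin E(G)$ forces $b \notin N_G[a]$ --- your attempted fix ``$b \in N_G(a)$, so $\overline{b}$ gets role~3'' contradicts that hypothesis. You noticed this obstacle yourself, but the proposed repair (a case split on whether $b$ has a leaf neighbor, possibly changing the role graph) is left unresolved, and resolving it is exactly the content of the lemma. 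Second, your $I = \{x \in V(G) : N_G(x) \subseteq \{a\}\} \setminus G_0$ clashes with your complement-side assignment: a leaf $x$ hanging on $a$ lies in $N_G(a)$, so $\overline{x}$ receives role~3 while $x$ receives role~1, and the matching edge $x\overline{x}$ becomes a $1$--$3$ edge, which $R_{3,4}$ forbids since $N_R(1)=\{2\}$.

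The paper resolves both points differently, and these are the ideas your sketch is missing. It first re-chooses $b$: among all vertices that can play the part of $b$ (for some $a$), it takes one of minimum degree, and sets $I = \{x \in V(G) : N_G(x) = N_G(b)\}$, the ``twins'' of $b$, all of which lie outside $N_G[a]$. Then role~1 goes to $G_0 \cup I$ (so $b$ itself gets role~1, not~2), role~2 goes to $a$ together with all of $N_G(b)$, role~3 to the rest; the complement side is as you wrote. Now every role-2 vertex $x \in N_G(b)$ sees $b$ (role~1), $a$ (role~2) and $\overline{x}$ (role~3), so the difficulty you ran into never arises. The minimum-degree choice of $b$ is then what saves the remaining delicate check: a role-3 vertex $x \notin N_G[a]$ needs a role-3 neighbor, and if $N_G(x) \subseteq N_G(b)$ the pair $(b,x)$ would satisfy the lemma's hypotheses with $\deg(x) < \deg(b)$, contradicting minimality; hence some $y \in N_G(x) - N_G(b)$ exists and is shown to have role~3. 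Without re-choosing $b$, redefining $I$ as the twins of $b$, and putting role~2 on all of $N_G(b)$, the verification cannot be completed, so your proposal has a genuine gap rather than being a correct alternative route.
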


\begin{proof} It follows from the hypothesis that $|N_{G}(a)|\geq 2$. We consider $b$ a vertex, with the smallest degree, that satisfies the lemma hypotheses for some $a \in V(G)$ and $I=\{x\in V(G) \mid N_{G}(x)=N_ {G}(b)\}$. Observe that $b \in I$, $I$ is an independent set and $I\cap N_{G}(b)= \emptyset$. Clearly, $a \not \in I$ and for all $x \in I$, $ax \not \in E(G)$. We define $p:V(G\overline{G})\rightarrow \{1,2,3\}$, as follows.

\begin{minipage}[c]{0.5\linewidth}
For $x\in V(G):$
\vspace{-0.2cm}
$$p(x)=
\begin{cases}
1, \mbox{ if } x \in G_0 \mbox{ or } x\in I;\\
2, \mbox{ if } x = a \mbox{ or } x\in N_{G}(b); \\
3, \mbox{ otherwise. }
\end{cases}
\vspace{0.2cm}$$
\end{minipage}
and
\hfill
\begin{minipage}[c]{0.5\linewidth}
%To $\overline{x} \in V(\overline{G}):$
\vspace{0.2cm}
$$p(\overline{x})=
\begin{cases}
1, \mbox{ if } x = a;\\
2, \mbox{ if } x\not \in N_{G}[a];\\
3, \mbox{ if } x\in N_{G}(a).
\end{cases}$$
\end{minipage}

We consider $u \in N_{G}(a)-N_{G}(b)$ and $v\in N_{G}(b) \subseteq N_{G}(a)$ and $w \in V( G)-(N_{G}[a] \cup \{b\})$. We have that $p(u)=3$, $p(\overline{u})=3$, $p(v)=2$, $p(\overline{v})=3$, $p(w )\in \{1,3\}$ and $p(\overline{w})=2$. We show that $p$ is a $R_{3,4}$-role assignment.

The vertices with role~1 are those belonging to $G_0\cup I$ or the vertex $\overline{a}$. Let $x \in G_0$, we have that $p(N(x))=\{p(\overline{x})\}=\{2\}$. Let $x \in I$, we have that $p(N(x))=\{p(\overline{x})\}\cup p(N_{G}(b))=\{2\}$. For the vertex $\overline{a}$, we have that $ p(N(\overline{a}))=\{p(a)\} \cup p(\overline{V(G)-N_{G} [a]})=\{2\}$.

The vertices with role~2 are $a$ or those belonging to $N_{G}(b)$ or to
 $\overline{V(G)-N_{G}[a]}$. For the vertex $a$, we have $p(\overline{a})=1$, $p(v)=2$ and $p(u)=3$. Let $x \in N_{G}(b)$, we have $p(\overline{x})=3$, $p(a)=2$ and $p(b)=1.$ Let $\overline {x}$, with $x\not \in N_{G}[a]$, we have $p(\overline{a})=1$. For the vertex $\overline{b}$, we have $p(\overline{w})=2$ and $p(\overline{u})=3$. Otherwise, $\overline{x}\neq \overline{b}$, so $p(\overline{b})=2$. If $x \in G_0 \cup I$, then $p(\overline{u})=3$. Otherwise, we have $p(x)=3$.

The vertices with role~3 are those belonging to $V(G)-(G_0 \cup I \cup \{a\} \cup N_{G}(b))$ or to $\overline{N_{G}( a)}$. Let $x\in V(G)$ be such that $x \neq a$ and $x \not \in G_0 \cup I \cup N_{G}(b)$. If $x\in N_{G}(a)$, then $p(a)=2$ and $p(\overline{x})=3$. Otherwise, we have $p(\overline{x})=2$. Observe that if $N_{G}(x)\subseteq N_{G}(b)$, then the vertices of $x$ and $b$ satisfy the lemma conditions, this leads to a contradiction with the minimality of $b $. In fact, by assumptions, $x,b \not \in G_0$, $bx \not \in E(G)$, $N_{G}(b)$ is an independent set, $N_{G}(x )\neq N_{G}(b)$ since $x\not \in I$ and as $a\not \in N_{G}[b] \cup \{x\}$, we have $V(G )\neq N_{G}[b] \cup \{x\}$. So $N_{G}(x) \not \subseteq N_{G}(b)$ and there is $y \in N_{G}(x)-N_{G}(b)$. Knowing that $y\neq a$, $y\not \in G_0$ and $y\not \in I$, then $p(y)=3$ and $3\in p(N(x))$. Let $\overline{x}$, with $x\in N_{G}(a)$. If $x \in N_{G}(b)$, then $p(x)=2$. If $x \not \in N_{G}[b]$, then $p(\overline{b})=2$. Since $N_{G}(a)$ is an independent set, $|N_{G}(a)|\geq 2$ and $p(\overline{N_{G}(a)})=3$, we have than $3 \in p(N(\overline{x}))$. \end{proof}

\begin{figure} [ht!]
	\centering
	\includegraphics[scale=1.05]{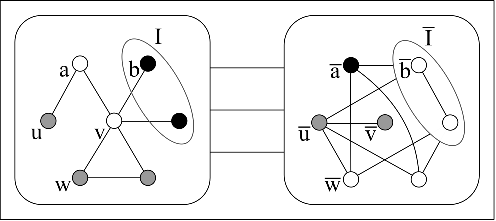} 
	\caption{Example of the complementary prism of a graph with $R_{3,4}$-role assignment.}
	\label{lema:semcliquetamanho3_conjindependente}
\end{figure}

To obtain the last lemma of the section, we need the following lemma.
%We show the specific case with $G$ order greater than or equal to $6$ and not order~3 clique.

\begin{lemma} \label{lemma_clique3_greaterequal3} Let $G$ be a graph of order $n\geq 6$, with no clique greater than or equal to $3$. If there are $a,b\in V(G)$, $ab\not \in E(G)$, $N_{G}(a) \cap N_{G}(b)\neq \emptyset$, $ N_{G}(a)-N_{G}(b) \neq \emptyset$, $N_{G}(b)-N_{G}(a) \neq \emptyset$ and $V(G)=N_ {G}[a] \cup N_{G}[b]$, then the complementary prisms of $G$ has a $R_{3,4}$-role assignment. \end{lemma}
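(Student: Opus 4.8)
The plan is to exploit the triangle-freeness of $G$ to pin down its structure, make an extremal choice of the pair $\{a,b\}$, and then either invoke Lemma~\ref{lemma:nobipartite_semi-isolated_clique_at least1verticalleaf} or write down a role assignment following the same template as that lemma and Lemma~\ref{lemma:nobipartite_withisolated_withclick}.

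Since $G$ has no clique of size $3$, the sets $N_G(a)$ and $N_G(b)$ are independent. Writing $A=N_G(a)\setminus N_G(b)$, $B=N_G(b)\setminus N_G(a)$ and $C=N_G(a)\cap N_G(b)$, the hypotheses give that $A,B,C$ are nonempty and, together with $\{a,b\}$, partition $V(G)$ (using $ab\notin E(G)$ and $V(G)=N_G[a]\cup N_G[b]$). Independence of $N_G(a)$ and $N_G(b)$ then forces every edge of $G$ to join $a$ to $A\cup C$, or $b$ to $B\cup C$, or a vertex of $A$ to a vertex of $B$; in particular $N_G(c)=\{a,b\}$ for all $c\in C$, and $|A|+|B|+|C|=n-2\ge 4$. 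The preparatory step I would take is to choose, among all pairs of vertices satisfying the hypotheses of the lemma (there is at least one, namely $\{a,b\}$), one that maximizes $|N_G(a)\cap N_G(b)|$, and keep the notation $a,b,A,B,C$ for it.

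Then I would split into two cases. If $G$ has a leaf, that leaf lies in $A\cup B$ because $\deg a,\deg b\ge 2$ and every vertex of $C$ has degree $2$; by the symmetry $a\leftrightarrow b$ I may assume it is some $f\in A$ with $N_G(f)=\{a\}$, and then Lemma~\ref{lemma:nobipartite_semi-isolated_clique_at least1verticalleaf} applies with $d=a$ and with any $c\in C$ in the role of its distinguished non-leaf vertex, since $N_G(a)$ is independent, $c$ is not a leaf and $V(G)\neq N_G[c]\cup\{f\}=\{a,b,c,f\}$ (here $n\ge 6$ is used). If $G$ has no leaf, then every vertex of $A$ has a neighbour in $B$ and conversely; picking any $v_0\in A$ (so $N_G(v_0)\cap B\ne\emptyset$), I would define $p\colon V(G\overline G)\to\{1,2,3\}$ by: on $V(G)$, $p(x)=1$ if $x\in C$, $p(x)=2$ if $x\in\{v_0,a,b\}$, and $p(x)=3$ otherwise; on $V(\overline G)$, $p(\overline x)=1$ if $x=v_0$, $p(\overline x)=3$ if $x\in N_G(v_0)$, and $p(\overline x)=2$ otherwise. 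This is exactly the template of the two earlier lemmas, with $C$ as the ``leaf-like'' set and $v_0$ as the distinguished vertex, and checking $p(N(x))=N_{R_{3,4}}(p(x))$ for every $x$ is a routine case analysis once the adjacencies of $a$, $b$, the vertices of $A,B,C$ and their copies are read off from the structure above.

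The only genuinely delicate point — and the step I expect to be the main obstacle — is the condition for role~$2$ at a vertex $\overline c$ with $c\in C$: its neighbourhood in $G\overline G$ is $\{c\}\cup\overline{(A\cup B\cup C)\setminus\{c\}}$, and a vertex of role~$2$ is present there unless simultaneously $|A|=1$, $B\subseteq N_G(v_0)$ and $|C|=1$. I would exclude this remaining configuration using the extremal choice: it would make $\{v_0,b\}$ satisfy all the hypotheses of the lemma with $N_G(v_0)\cap N_G(b)=B$, and since $|A|=|C|=1$ we have $|B|=n-4\ge 2>1=|N_G(a)\cap N_G(b)|$, contradicting the maximality of $|N_G(a)\cap N_G(b)|$. (This is where $n\ge 6$ enters in the second case.) All the remaining verifications are bookkeeping, using only that $A$, $B$, $C$ are nonempty and that $v_0$ has a neighbour in $B$.
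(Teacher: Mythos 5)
Your proposal is correct, and its leaf case coincides with the paper's: both apply Lemma~\ref{lemma:nobipartite_semi-isolated_clique_at least1verticalleaf} with $d=a$ and a common neighbour of $a$ and $b$ as the distinguished non-leaf vertex (the paper argues $V(G)\neq N_G[u]\cup\{f\}$ via a vertex of $N_G(b)-N_G(a)$, you via $n\geq 6$; both work). In the leafless case, however, you take a genuinely different route. The paper first notes (using $n\geq 6$) that one may assume $|N_G(b)|\geq 3$, and then verifies that the edge $\{a,u\}$, $u\in N_G(a)\cap N_G(b)$, is a maximal clique satisfying the three conditions of Lemma~\ref{lemma:naobipartido_naosplit_maximum clique}, so no new assignment has to be checked. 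You instead build an explicit $R_{3,4}$-assignment on the partition $\{a,b\}\cup A\cup B\cup C$ (with $C$ playing the ``leaf-like'' role and $v_0\in A$ the distinguished vertex), and you correctly isolate the only failure point, namely the role-$2$ requirement at $\overline{c}$, $c\in C$, which fails exactly when $|A|=|C|=1$ and $B\subseteq N_G(v_0)$; your extremal choice of the pair maximizing $|N_G(a)\cap N_G(b)|$ rules this out, since then $\{v_0,b\}$ would satisfy the hypotheses with common neighbourhood $B$ of size $n-4\geq 2$. I checked the remaining ``bookkeeping'' and it does go through for the role graph $R_{3,4}$ (where $N_R(1)=\{2\}$, $N_R(2)=\{1,2,3\}$, $N_R(3)=\{2,3\}$). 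What the paper's route buys is brevity, by reusing Lemma~\ref{lemma:naobipartido_naosplit_maximum clique}; what yours buys is a self-contained construction that makes the structure of triangle-free graphs with $V(G)=N_G[a]\cup N_G[b]$ explicit, at the cost of verifying a new assignment and carrying the extremal choice of the pair through the argument.
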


\begin{proof} We note that $a$ and $b$ are not leaves in $G$. Since $G$ has no clique of order~3, we have that $N_{G}(a)$ is an independent set. By hypothesis, there is $u \in N_{G}(a) \cap N_{G}(b)$. If there is a leaf $f$ of $G$, then by symmetry we can assume that $f\in N_{G}(a)$. Clearly, $u \not \in G_1$ and $V(G) \neq N_{G}[u] \cup \{f\}$ since any vertex in $N_{G}(b)-N_{G }(a)$ is not a neighbor of $u$ and $N_{G}(b)-N_{G}(a)\neq \emptyset$. Therefore, the result follows from the Lemma~\ref{lemma:nobipartite_semi-isolated_clique_at least1verticalleaf}. Suppose $G$ has no leaf. Therefore, we can assume that $|N_{G}(b)| \geq 3$. In fact, otherwise $|N_{G}(b)|=2$ and $|N_{G}(a)|=2$, implies that $n=5$, a contradiction. We show that $C=\{a,u\}$ satisfies the conditions of Lemma~\ref{lemma:naobipartido_naosplit_maximum clique}. By hypothesis, $C$ is a maximal clique. Conditions~1 and~3 follow from the fact that $G$ has no leaves and has no clique order~3. Condition~2 remains to be checked. We consider $v \in N_{G}(a)-N_{G}(b)$. Clearly, $v\not \in C$. For vertex $b$, $bv\not \in E(G)$. If $x\in N_{G}(a)\cap N_{G}(b)$, then $xv\not \in E(G)$. If $x\in N_{G}(a)-N_{G}(b)$, then $bx\not \in E(G)$. If $x\in N_{G}(b)-N_{G}(a)$, then as $|N_{G}(b)|\geq 3$, and $N_{G}(b)$ is an independent set, there is $y\in N_{G}(b)$, $y\neq x,u$ and $xy \not \in E(G)$. Therefore, $\{a,u\}$ satisfies the conditions of Lemma~\ref{lemma:nobipartite_semi-isolated_clique_at least1verticalleaf} and the result follows. \end{proof}

The last lemma of the section deals with a generalization of the Lemma~\ref{lemma:nobipartite_withisolated_withclick} but in the context of the Lemma~\ref{lemma_clique3_greaterequal3}. Observe that in this case, the neighborhood of any vertex is an independent set.

\begin{lemma} \label{lemma:nobipartite_without clique_comisolated} Let $G$ be a graph of order $n \geq 6$, with no clique of order greater than or equal to $3$. If there are $a,b \in V(G)$, such that $ab\not \in E(G)$, $N_{G}(a) \cap N_{G}(b) \neq \emptyset$, $N_{G}(a)- N_{G}(b)\neq \emptyset$ and $V(G)\neq N_{G}[a] \cup \{b\}$, then the complementary prisms of $G$ has a $R_{3,4}$-role assignment. \end{lemma}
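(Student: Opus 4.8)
The plan is to reduce the hypotheses of this lemma to one of the two already-proven tools that produce an $R_{3,4}$-role assignment, namely Lemma~\ref{lemma:nobipartite_semi-isolated_clique_at least1verticalleaf} and Lemma~\ref{lemma:nobipartite_withisolated_withclick}. Since $G$ has no clique of order $\geq 3$, every neighborhood in $G$ is automatically an independent set, so in particular $N_G(a)$ is an independent set; this is exactly the ``independent neighborhood'' condition appearing in both candidate lemmas. So the work is to locate a suitable vertex playing the role of the ``almost-leaf'' $b$ whose neighborhood is contained in some vertex's neighborhood.

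First I would treat the easy sub-case $V(G) = N_G[a] \cup N_G[b]$ separately: then, together with $ab \notin E(G)$, $N_G(a)\cap N_G(b)\neq\emptyset$ and $N_G(a)-N_G(b)\neq\emptyset$, the only missing piece is $N_G(b)-N_G(a)\neq\emptyset$. If that also holds we are exactly in the situation of Lemma~\ref{lemma_clique3_greaterequal3} and are done; if instead $N_G(b)\subseteq N_G(a)$ then, since $N_G(b)\neq N_G(a)$ (because $N_G(a)-N_G(b)\neq\emptyset$) and both $a,b$ are non-isolated with $ab\notin E(G)$ and $V(G)\neq N_G[a]\cup\{b\}$ (as $N_G(a)-N_G(b)$ contains a vertex not adjacent to $b$... one must double-check $V(G)\neq N_G[a]\cup\{b\}$ is given by hypothesis anyway), Lemma~\ref{lemma:nobipartite_withisolated_withclick} applies directly with this pair $(a,b)$.

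So the remaining case is $V(G)\supsetneq N_G[a]\cup N_G[b]$, i.e. there is a vertex $w$ outside $N_G[a]\cup N_G[b]$. Here I would pick $u\in N_G(a)\cap N_G(b)$ and try to run Lemma~\ref{lemma:nobipartite_withisolated_withclick} with the pair $(a,u)$: we need $u$ non-isolated (true, $u$ is adjacent to $a$ and $b$, and $ab\notin E(G)$ guarantees $a\neq b$), $au\notin E(G)$ — wait, $u\in N_G(a)$ means $au\in E(G)$, so that pairing fails. Instead the right move is to keep $a$ as the ``high-degree'' vertex and find a low-degree partner $b'$ with $N_G(b')\subseteq N_G(a)$, $N_G(b')\neq N_G(a)$, $ab'\notin E(G)$, $b'$ non-isolated, $V(G)\neq N_G[a]\cup\{b'\}$. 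The natural candidate is $b$ itself if $N_G(b)\subseteq N_G(a)$; otherwise pick any $v\in N_G(b)-N_G(a)$ — but then $N_G(v)$ need not sit inside $N_G(a)$. The cleaner route, mirroring the proof of Lemma~\ref{lemma_clique3_greaterequal3}, is: if $N_G(b)\subseteq N_G(a)$ invoke Lemma~\ref{lemma:nobipartite_withisolated_withclick} on $(a,b)$; if $N_G(b)-N_G(a)\neq\emptyset$, then since additionally $V(G)\neq N_G[a]\cup N_G[b]$ would be excluded here, so in fact in this remaining case we may assume $N_G(b)-N_G(a)\neq\emptyset$ together with $V(G)\neq N_G[a]\cup\{b\}$ — and then I claim $C=\{a,u\}$ is a maximal clique (maximal since no triangle, $au\in E(G)$) satisfying the three conditions of Lemma~\ref{lemma:naobipartido_naosplit_maximum clique}: Condition~1 needs every vertex of $C$ to have a neighbor outside $C$, which holds because $a$ is adjacent to some vertex of $N_G(a)-N_G(b)\neq C$ and $u$ is adjacent to $b\notin C$; Conditions~2 and~3 for each $x\notin C$ follow from the triangle-free structure exactly as in the proof of Lemma~\ref{lemma_clique3_greaterequal3}, using $n\geq 6$ and the independence of neighborhoods to always produce both a non-neighbor and a neighbor of $x$ outside $C$.

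The main obstacle I anticipate is Conditions~2 and~3 of Lemma~\ref{lemma:naobipartido_naosplit_maximum clique}: for an arbitrary $x\notin\{a,u\}$ one must exhibit $y\notin\{a,u\}$ with $xy\notin E(G)$ and another with $xy\in E(G)$. The ``$xy\notin E(G)$'' part can be delicate when $x$ is adjacent to both $a$ and $u$ and has small degree; the bound $n\geq 6$ and triangle-freeness (which forces $N_G(x)$ independent, so $x$'s neighbors are pairwise non-adjacent and, e.g., any neighbor of $a$ other than $x$ is a non-neighbor of $x$) are precisely what rescue this, and I would organize the argument by casing on where $x$ lies relative to $N_G(a)$, $N_G(b)$, and the leftover vertex $w$, just as in Lemma~\ref{lemma_clique3_greaterequal3}. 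The bookkeeping to confirm $V(G)\neq N_G[a]\cup\{b\}$ stays available after possibly swapping which of $a,b$ is ``$a$'' is the other fiddly point, but it is forced by the hypotheses and the already-handled sub-cases.
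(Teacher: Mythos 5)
There is a genuine gap in the final case of your reduction. Your first two branches are fine and match the paper: when $N_G(b)\subseteq N_G(a)$ you apply Lemma~\ref{lemma:nobipartite_withisolated_withclick} to the pair $(a,b)$, and when $V(G)=N_G[a]\cup N_G[b]$ with $N_G(b)-N_G(a)\neq\emptyset$ you apply Lemma~\ref{lemma_clique3_greaterequal3}. The problem is the remaining case ($N_G(b)-N_G(a)\neq\emptyset$ and $V(G)\neq N_G[a]\cup N_G[b]$), where you claim $C=\{a,u\}$ with $u\in N_G(a)\cap N_G(b)$ satisfies the conditions of Lemma~\ref{lemma:naobipartido_naosplit_maximum clique} ``exactly as in the proof of Lemma~\ref{lemma_clique3_greaterequal3}.'' That transfer is not valid: in Lemma~\ref{lemma_clique3_greaterequal3} the verification of Conditions~1 and~3 uses $V(G)=N_G[a]\cup N_G[b]$ (so no isolated vertices) and a separate preliminary step that disposes of leaves via Lemma~\ref{lemma:nobipartite_semi-isolated_clique_at least1verticalleaf}; neither is available in your remaining case, and the hypotheses of the present lemma do not exclude isolated vertices or leaves. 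Concretely, take $G$ to be the path $v\,a\,u\,b\,z$ together with an isolated vertex $w$: then $n=6$, $G$ is triangle-free, $ab\notin E(G)$, $u\in N_G(a)\cap N_G(b)$, $v\in N_G(a)-N_G(b)$, $z\in N_G(b)-N_G(a)$, and $w\notin N_G[a]\cup N_G[b]$, so you are in this remaining case; but $C=\{a,u\}$ violates Condition~3 of Lemma~\ref{lemma:naobipartido_naosplit_maximum clique} both for $x=w$ (no neighbors at all) and for $x=v$ (its only neighbor is $a\in C$). Condition~2 can likewise fail for a vertex adjacent to everything outside $C$.

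What the paper does instead, and what your sketch is missing, is a global standing assumption plus an explicit construction. After the same two reductions, the paper assumes that \emph{no} pair of vertices of $G$ satisfies the conditions of Lemma~\ref{lemma:nobipartite_withisolated_withclick} (not just the pair $(a,b)$ — note that in the counterexample above the pair $(u,v)$ does satisfy that lemma, which is how that graph actually gets its assignment), and then defines $p$ explicitly: $p(x)=1$ on $G_0\cup I$ where $I=\{x : N_G(x)=N_G(b)\}$, $p(x)=2$ on $\{a\}\cup N_G(b)$, $p(x)=3$ otherwise, with $p(\overline{x})=1$ if $x=a$, $2$ if $x\notin N_G[a]$, $3$ if $x\in N_G(a)$, and verifies directly that this is a $R_{3,4}$-role assignment, using the ``no pair satisfies Lemma~\ref{lemma:nobipartite_withisolated_withclick}'' assumption to rule out the problematic configurations (e.g.\ $N_G(x)\subseteq N_G(u)$ or $N_G(x)\subseteq N_G(b)$) that would otherwise leave some vertex without a role-$3$ neighbor. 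Your proposal has the right base reductions but no mechanism covering these configurations, so as written it does not prove the lemma.
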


\begin{proof}
Observe that if $N_{G}(b)-N_{G}(a) = \emptyset$, then the result follows from Lemma~\ref{lemma:nobipartite_withisolated_withclick}. Therefore, we can assume that $N_{G}(b)-N_{G}(a) \neq \emptyset$ and that no pair of vertices satisfies the conditions of Lemma~\ref{lemma:nobipartite_withisolated_withclick}. By Lemma~\ref{lemma_clique3_greaterequal3}, we can assume that $V(G) \neq N_G[a] \cup N_G[b]$.

We denote by $I=\{x\in V(G), \mid N_{G}(x)=N_{G}(b)\}$ and define $p:V(G \overline{G})\rightarrow \{1,2,3\}$ as follows.

For $x\in V(G)$:

\begin{minipage}[c]{0.5\linewidth}
$$p(x)=
\begin{cases}
1, \mbox{ if } x \in G_0 \mbox{ or } x \in I; \\
2, \mbox{ if } x =a \mbox{ or } x \in N_{G}(b); \\
3, \mbox{ otherwise. }
\end{cases}$$
\vspace{0.05cm}
\end{minipage}
and
\hfill
\begin{minipage}[c]{0.5\linewidth}
$$p(\overline{x})=
\begin{cases}
1, \mbox{ if } x=a; \\
2, \mbox{ if } x \not \in N_{G}[a];\\
3, \mbox{ if } x \in N_{G}(a).
\end{cases}$$
\vspace{0.05cm}
\end{minipage}

We note that $I$ is an independent set, $a \not \in I$ and $b \in I$. We consider $u\in N_{G}(a)\cap N_{G}(b)$, $v\in N_{G}(a)-N_{G}(b)$ and $w\in V( G)-(N_{G}[a] \cup N_{G}[b])$.
Observe that $p(u)=2$, $p(\overline{u})=3$, $p(v)=3$, $p(\overline{v})=3$, $p(w )\in \{1,3\}$ and $p(\overline{w})=2$. We show that $p$ is a $R_{3,4}$-role assignment.

The vertices with role~1 are those belonging to $G_0 \cup I$ or the vertex $a$. Let $x \in G_0$. Clearly, $p(N(x))=\{p(\overline{x})\}=\{2\}$. Let $x \in I$, we have that $p(N(x))=p(\{\overline{x}\}\cup N_{G}(b))=\{2\}$, since $x \neq a$ and $xa\not \in E(G)$. For the vertex $\overline{a}$, we have $ N(\overline{a})=\{a\} \cup \overline{V(G)-N_{G}[a]}$, so $p (N(\overline{a}))=\{2\}$.

The vertices with role~2 are the vertex $a$ or those belonging to $N_{G}(b)\cup \overline{V(G)-N_{G}[a]}$. For the vertex $a$, we have $p(\overline{a})=1$, $p(u)=2$ and $p(v)=3$. Let $x \in N_{G}(b)$, we have $p(b)=1$. If $x\in N_{G}(a)$, then we have $p(a)=2$ and $p(\overline{x})=3$. Otherwise, $x \not \in N_{G}[a]$ and $p(\overline{x})=2$. If $N_G(x) \subseteq N_G(u)$, then we show a contradiction with the fact that no pair of vertices satisfies Lemma~\ref{lemma:nobipartite_withisolated_withclick}. In fact, as $x \in N_G(b)-N_G(a)$, we have that $x \neq u, v$. By hypothesis, the neighborhood of any vertex forms an independent set. So, $uv, ux\not \in E(G)$, $N_G(x)$ is an independent set, $N_{G}(x)\subseteq N_{G}(u)$, $a \in N_{G}(u)-N_{G}(x)$ and $v \not \in N_{G}[u]\cup \{x\}$. Therefore, $u$ and $x$ satisfy the conditions of Lemma~\ref{lemma:nobipartite_withisolated_withclick}. Otherwise, $N_{G}(x)\nsubseteq N_{G}(u)$, that is, there is $y \in N_{G}(x)-N_{G}(u)$. Remember that $p(\overline{x})=2$. We show that $p(y)=3$. As $xy\in E(G)$, we have that $y \not \in G_0$, $y \neq a$ and $y\not \in N_G(b)$, since $G$ has no clique of order~3. As $uy\not \in E(G)$ and $bu\in E(G)$, we have $y\not \in I$. Therefore, $p(y)=3$ and $3 \in p(N(x))$. Let $\overline{x}$, with $x \not \in N_{G}[a]$, we have $p(\overline{a})=1$. For the vertex $b$, we have $p(\overline{w})=2$ and $p(\overline{v})=3$. If $x\in G_0 \cup I$, $x \neq b$, we have $p(x)=1, p(\overline{b})=2$ and $p(\overline{v})=3$. If $x\in N_{G}(b)$, as $x \not \in N_{G}[a]$, then $x\neq u$ and $xu\not \in E(G)$. Thus, $p(x)=2$ and $p(\overline{u})=3$, we can conclude that $p(N(\overline{x}))=\{1,2,3\} $. If $x \not \in N_{G}(b)$ and $x \not \in G_0 \cup I$, we have $p(x)=3$ and $p(\overline{b})=2$ , which gives us the desired conclusion.

The vertices with role~3 are those belonging to $V(G)-(\{a\} \cup G_0 \cup I \cup N_{G}(b))$ or to $\overline{N_{G}( a)}$. Let $x \in V(G)$, $x \neq a$, $x \not \in G_0 \cup I \cup N_{G}(b)$. If $x \in N_{G}(a)$, then $p(a)=2$ and $p(\overline{x})=3$. Otherwise, $ x \not \in N_{G}(a)$ and $p(\overline{x})=2$. We show that if $N_G(x) \subseteq N_G(b)$, then the vertices of $b$ and $x$ satisfy the Lemma~\ref{lemma:nobipartite_withisolated_withclick}, a contradiction. In fact, $bx \not \in E(G)$, $N_G(b)$ is an independent set, $N_{G}(b) \neq N_{G}(x)$ and $a \not \in N_{G}[b]\cup \{x\}$. Therefore, $N_{G}(x)\not \subseteq N_{G}(b)$, that is, there is $y\in N_{G}(x)-N_{G}(b)$. Clearly, $y \not \in G_0$, $y \not \in I$, $y \neq a$ and $y \not \in N_{G}(b)$, so $p(y)=3 $. Let $\overline{x}$, with $x \in N_{G}(a)$. We note that, $x \not \in \{a\} \cup G_0 \cup I$. If $x \in N_{G}(b)$, then $p(x)=2$ and $p(\overline{v})=3$ provide the necessary roles for the neighborhood of $x$. If $x \not \in N_{G}(b)$, then $p(\overline{b})=2$ and $p(x)=3$. We conclude that $p(N(\overline{x}))=\{2,3\}$. \end{proof}

\section{Results for complementary prisms of bipartite graphs}
\label{section_5.6}

In this section, we highlight the complementary prisms of bipartite graphs with a $3$-role assignment. The first case is the connected bipartite graph whose smallest set of the bipartition has two vertices and we obtain the graph $R_{3,2}$ that is the path graph of size~3 with a loop at role~3.

\begin{lemma} \label{lemma:bipartite_graph_with_universal_vertice} Let $G$ be a connected bipartite graph, such that the smallest partition has two vertices. The complementary prisms of $G$ has a $R_{3,2}$-role assignment.\end{lemma}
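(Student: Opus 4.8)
Let $G$ be a connected bipartite graph with bipartition $(A,B)$ where $|B|=2$, say $B=\{b_1,b_2\}$. Since $G$ is connected, every vertex of $A$ is adjacent to $b_1$ or $b_2$ (or both), and both $b_1,b_2$ have at least one neighbor in $A$. Write $A = A_1 \cup A_2 \cup A_{12}$, where $A_i$ are the vertices adjacent only to $b_i$ and $A_{12}$ those adjacent to both. The plan is to exhibit an explicit $R_{3,2}$-role assignment $p:V(G\overline{G})\to\{1,2,3\}$ and verify it, in the same style as the preceding lemmas. Recall that $R_{3,2}$ is the path $1-2-3$ with a loop at $3$, so $N_R(1)=\{2\}$, $N_R(2)=\{1,3\}$, $N_R(3)=\{2,3\}$.

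The natural choice is to put role $2$ on a single vertex acting as the ``center'', role $1$ on vertices whose only role-neighbor is that center, and role $3$ everywhere else, exploiting the loop at $3$. Concretely I would try: on the $G$-side, set $p(b_1)=2$, and $p(x)=1$ for $x\in A_1\cup\{b_2\}$ if convenient, with $p=3$ on the rest; on the $\overline{G}$-side, use the matching edges and the near-completeness of $\overline{G}$ to supply the missing roles. The key structural fact to exploit is that in $\overline{G}$ the two vertices $\overline{b_1},\overline{b_2}$ are adjacent to each other and to all of $\overline{A}$ except their own (few) non-neighbors, while $\overline{A}$ is a large ``almost clique'' (it misses only the edges that $A$ had, i.e.\ $\overline{A}$ restricted to $A$ is the complement of a bipartite-subgraph-free set, hence contains big cliques) — this makes role $3$ with its loop easy to sustain on $\overline{A}$. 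I would split into the small sub-cases according to which of $A_1,A_2,A_{12}$ are empty (for instance $A_{12}=A$ gives $G=K_{2,|A|}$ possibly plus pendant structure; $A_1$ or $A_2$ empty changes the degree of $b_1,b_2$), since the assignment's boundary conditions (making sure role-$2$ vertices see both a $1$ and a $3$, and role-$1$ vertices see \emph{only} a $2$) depend on these.

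The main obstacle is the verification that role-$1$ vertices have \emph{no} neighbor of role $3$: a vertex assigned role $1$ on the $G$-side is adjacent to $b_1$ (good, role $2$) but also to its matched copy $\overline{x}$, so I must assign $p(\overline{x})=2$ for every such $x$, and then check these $\overline{x}$'s (which lie in $\overline{A}$ or equal $\overline{b_2}$) in turn get exactly roles $\{1,3\}$ in their neighborhood — this forces the rest of the assignment and is where the case analysis bites. A secondary subtlety is the degenerate possibility $|A|=1$ or $|A|=2$ where $G$ is itself very small ($P_3$, $P_2$, $C_4$, $K_{2,2}$, etc.); $P_3 = K_{1,2}$ has $|B|=1$ not $2$ under the convention ``smallest partition'', so the genuinely smallest case is roughly $|A|\ge 2$, and I would handle any leftover tiny graphs by direct inspection or by appeal to Lemma~\ref{lemma:k1m,k1} when $G$ happens to be of the form $K_n\cup K_{1,m}$. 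Once the assignment is pinned down the checks are routine neighborhood computations of exactly the type carried out in Lemmas~\ref{lemma:k1m,k1}--\ref{lemma:nobipartite_without clique_comisolated}.
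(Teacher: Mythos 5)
Your submission is a plan rather than a proof: no assignment is ever fixed, and the one tentative assignment you float does not survive its own constraints. With $p(b_1)=2$ and role~1 on $A_1\cup\{b_2\}$, every vertex of $A_{12}$ (which is nonempty --- see below) would be a role-3 vertex adjacent to the role-1 vertex $b_2$, impossible since $N_R(3)=\{2,3\}$ in $R_{3,2}$; equivalently, $b_2$ with role~1 must see only role~2, yet its neighbors in $A_{12}\cup A_2$ carry role~3. Moreover, making all of $A_1$ role-1 forces, as you yourself note, $p(\overline{x})=2$ for every $x\in A_1$; but $A_1$ is independent in $G$, so for $|A_1|\geq 2$ these copies are pairwise adjacent in $\overline{G}$, contradicting $2\notin N_R(2)$. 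So the assignment would have to be redesigned from scratch, and the case analysis on $A_1,A_2,A_{12}$ that you defer is precisely the content that is missing.

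The missing idea, which is what the paper uses and which eliminates all case distinctions, is to exploit connectivity through a common neighbor: with $B=\{a,b\}$, if $N_G(a)\cap N_G(b)=\emptyset$ then $G$ is disconnected (no edges inside $A$ or inside $B$), so there is $u\in N_G(a)\cap N_G(b)$, and moreover $V(G)=N_G[a]\cup N_G[b]$. Now set $p(u)=1$, $p(a)=p(b)=p(\overline{u})=2$, and give role~3 to every other vertex of $G\overline{G}$. Since $N_G(u)=\{a,b\}$, the vertex $u$ sees only 2's; $a$ and $b$ see $u$ (role 1) and $\overline{a}$, resp.\ $\overline{b}$ (role 3), and no 2's; $\overline{u}$ sees $u$ and $\overline{A-\{u\}}\neq\emptyset$ (role 3, using $|A|\geq 2$) and no 2's; each $x\in A-\{u\}$ sees some vertex of $\{a,b\}$ (role 2) and $\overline{x}$ (role 3), never $u$; $\overline{a}$ sees $a$ and $\overline{b}$; and each $\overline{x}$ with $x\in A-\{u\}$ sees $\overline{u}$ (role 2, as $xu\notin E(G)$) and $x$ (role 3), never $u$. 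This single uniform verification replaces the sub-cases your plan would have to work through.
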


\begin{proof} We consider a bipartite partition $(A,B)$, with $|A|\geq |B|=2$. Let $B=\{a,b\}$. Since $G$ is a connected bipartite graph, there is $u \in N_{G}(a) \cap N_{G}(b)$ and $V(G)=N_{G}[a]\cup N_ {G}[b]$.
%Observe that $u$ is a universal bipartite vertex. 
We define $p:V(G\overline{G})\rightarrow \{1,2,3\}$, as follows.

\begin{minipage}[c]{0.4\linewidth}
For $x\in V(G):$
\vspace{-0.2cm}
$$p(x)=
\begin{cases}
1, \mbox{ if } x = u;\\
2, \mbox{ if } x \in B; \\
3, \mbox{ otherwise. }
\end{cases}$$
\vspace{0.1cm}
\end{minipage}
and
\hfill
\begin{minipage}[c]{0.4\linewidth}
$$p(\overline{x})=
\begin{cases}
2, \mbox{ if } x=u;\\
3, \mbox{ otherwise. }
\end{cases}$$
\end{minipage}

One can easily see that $p$ is a $R_{3,2}$-role assignment. \end{proof}

%Only vertex $u$ has role~1. So, $N(u)=\{ \overline{u} \} \cup B$, with $p(\overline{u})=2$, $p(B)= \{2 \}$ and we have $p(N(u))= \{2 \}$.

%The vertices with role~2 are those belonging to $B$ or the vertex $\overline{u}$. Clearly, this set is independent. Therefore, for each vertex of this set, we only have to identify one of its neighbors with role~1 and one with role~3. If $x\in B$, then $p(\overline{x})= 3$ and $p(u)= 1$. For the vertex $\overline{u}$, we have $p(u)=1$ and as $|A|\geq 2$, then $3\in p(N(\overline{u}))$.

%The vertices with role~3 are those belonging to $A - \{u\}$ or to $\overline{V(G)- \{ u\}}$. For each vertex of this set, we only have to identify one of its neighbors with role~2 and one with role~3. Let $x \in A-\{u\}$, we have $p(\overline{x})=3$ and since $G$ is connected, there is $y \in B$ with $xy\in E(G )$ and $p(y)=2$. Therefore, $p(N(x))= \{ 2, 3 \}$. Let $\overline{x}$, with $ x \in B$, we have $p(x)=2$ and as $|B| = 2$, so $3\in p(N(\overline{x}))$. Let $\overline{x}$, with $x \in A - \{ u\}$, we have $p(x)=3$ and $p(\overline{u})= 2$. So, $p(N(\overline{x}))= \{2,3 \}$.\end{proof}

We will see in the following theorem that the complementary prism of a bipartite graph with a non-trivial connected component that is not isomorphic to a complete bipartite graph has a $3$-role assignment.

\begin{theorem}\label{finaltheorem:bipartite_not_connected_seuniversal_howoremisolated1} Let $G$ be a bipartite graph. If there is a non-trivial connected component of $G$ that is not isomorphic to $K_{n,m}$, with $n,m \geq 1$, then the complementary prism of $G$ has a $3$-role assignment. \end{theorem}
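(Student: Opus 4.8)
The plan is to reduce the general claim to the structural lemmas already established. Since a non-trivial connected component $H$ of $G$ is not a complete bipartite graph, $H$ contains two vertices $a,b$ in the \emph{same} part of its bipartition that have a common neighbor but are not ``twins'': more precisely, because $H$ is connected and not complete bipartite, there exist $a,b$ with $N_H(a)\cap N_H(b)\neq\emptyset$ and $N_H(a)-N_H(b)\neq\emptyset$ (otherwise every two vertices on the larger side would have nested neighborhoods forcing $H=K_{n,m}$). I would first make this extraction precise: pick a shortest ``obstruction'' witnessing non-completeness, namely a vertex $a$ and a vertex $x$ on the same side with $x\notin N_H(N_H(a))$ being impossible but $a x$ at distance $4$ or a path $a - u - b$ with $b$ nonadjacent to some neighbor of $a$.

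Once $a,b$ are in hand, the goal is to verify the hypotheses of one of Lemma~\ref{lemma:nobipartite_semi-isolated_clique_at least1verticalleaf}, Lemma~\ref{lemma:nobipartite_withisolated_withclick}, or Lemma~\ref{lemma:nobipartite_without clique_comisolated}, working inside the whole graph $G$ (not just $H$), so that the isolated vertices and other components of $G$ are absorbed by the $G_0$ and ``otherwise'' cases of those assignments. Key observations: since $G$ is bipartite it has no clique of order $3$, so $N_G(a)$ is automatically an independent set; $ab\notin E(G)$ because $a,b$ lie in the same part; $a$ and $b$ are non-isolated in $H$ hence in $G$; and $N_G(a)\cap N_G(b)\neq\emptyset$, $N_G(a)-N_G(b)\neq\emptyset$ by construction. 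The remaining condition to check is $V(G)\neq N_G[a]\cup\{b\}$, which holds unless $H$ is a star $K_{1,m}$ with center $a$ and $G$ has no isolated vertices and no other component — but a star is complete bipartite, contradicting the hypothesis on $H$; a short case split handles the residual possibility that $G$ is exactly $K_{1,m}\cup(\text{isolated vertices})$, which is again excluded since $K_{1,m}$ is complete bipartite. Thus Lemma~\ref{lemma:nobipartite_without clique_comisolated} (or, when $N_G(b)\subseteq N_G(a)$, Lemma~\ref{lemma:nobipartite_withisolated_withclick}; or, when $b$ can be taken to be a leaf, Lemma~\ref{lemma:nobipartite_semi-isolated_clique_at least1verticalleaf}) applies and yields a $R_{3,4}$-role assignment of $G\overline{G}$.

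I would also dispose of the small-order boundary cases by hand: Lemma~\ref{lemma_clique3_greaterequal3} and Lemma~\ref{lemma:nobipartite_without clique_comisolated} are stated for $n\geq 6$, so the finitely many bipartite graphs on at most $5$ vertices possessing a non-trivial non-complete-bipartite component — essentially $P_4$, $P_5$, and $P_4$ or $P_5$ plus an isolated vertex — must be handled directly, either by exhibiting explicit $3$-role assignments or by noting that $P_4$ and $P_5$ each contain a leaf $f$ whose unique neighbor $d$ has an independent neighborhood containing a non-leaf $a$ with $V(G)\neq N_G[a]\cup\{f\}$, so Lemma~\ref{lemma:nobipartite_semi-isolated_clique_at least1verticalleaf} already covers them. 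The main obstacle I anticipate is the bookkeeping in the first step: cleanly arguing that ``$H$ connected and not $K_{n,m}$'' forces the existence of $a,b$ with the precise neighborhood relations demanded by the lemmas, and then correctly selecting \emph{which} of the three lemmas to invoke depending on whether $N_G(b)\subseteq N_G(a)$, whether $b$ is a leaf, and whether $V(G)=N_G[a]\cup N_G[b]$; the role-assignment verification itself is already packaged inside the cited lemmas, so no new computation is needed there.
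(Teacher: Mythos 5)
Your overall strategy --- extract from the non-complete-bipartite component two same-side vertices $a,b$ with $N_G(a)\cap N_G(b)\neq\emptyset$ and $N_G(a)-N_G(b)\neq\emptyset$, and feed them to Lemma~\ref{lemma:nobipartite_semi-isolated_clique_at least1verticalleaf}, Lemma~\ref{lemma:nobipartite_withisolated_withclick} or Lemma~\ref{lemma:nobipartite_without clique_comisolated} --- is essentially the paper's route (the paper obtains the pair from a shortest path joining two non-adjacent vertices in opposite parts). But there is a genuine gap in how you dispose of the hypothesis $V(G)\neq N_G[a]\cup\{b\}$. You claim it can fail only when $H$ is a star $K_{1,m}$, hence complete bipartite and excluded; this is false. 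The equality $V(G)=N_G[a]\cup\{b\}$ means exactly that $G$ is connected with bipartition $(\{a,b\},\,N_G(a))$, i.e. one part of the bipartition is $\{a,b\}$. The graphs in this family ($P_4$, $K_{2,3}$ minus an edge, $C_4$ with a pendant vertex, double stars, \dots) are not complete bipartite, so they are not excluded by the hypothesis, and your argument does not cover them. Switching to Lemma~\ref{lemma:nobipartite_withisolated_withclick} does not help, since it carries the same condition $V(G)\neq N_G[a]\cup\{b\}$. Your fallback for small orders is also incorrect at the critical point: for $G\simeq P_4$, every choice of leaf $f$ with neighbor $d$ and non-leaf $a\in N_G(d)$ gives $V(G)=N_G[a]\cup\{f\}$, so Lemma~\ref{lemma:nobipartite_semi-isolated_clique_at least1verticalleaf} does not apply to $P_4$, contrary to what you assert (it does apply to $P_5$ and to $K_1\cup P_4$); moreover your list of boundary graphs is incomplete (it omits, e.g., the chair, $C_4$ plus a pendant, and $K_{2,3}$ minus an edge, although these happen to be covered by Lemma~\ref{lemma:nobipartite_semi-isolated_clique_at least1verticalleaf} with a suitable choice of leaf).

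The missing ingredient is Lemma~\ref{lemma:bipartite_graph_with_universal_vertice}: if $V(G)=N_G[a]\cup\{b\}$, then $G$ is a connected bipartite graph whose smallest part has exactly two vertices (note $|N_G(a)|\geq 2$ because $N_G(a)\cap N_G(b)$ and $N_G(a)-N_G(b)$ are both nonempty), and that lemma yields a $R_{3,2}$-role assignment. This is precisely how the paper closes the residual case: Lemma~\ref{lemma:nobipartite_without clique_comisolated} in the generic situation, Lemma~\ref{lemma:bipartite_graph_with_universal_vertice} when the graph collapses to one with a part of size two, and Lemma~\ref{lemma:nobipartite_semi-isolated_clique_at least1verticalleaf} for the disconnected remnant $K_1\cup P_4$. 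One could instead try to re-select the pair on the large side of the bipartition, but that argument is not in your plan and, because Lemma~\ref{lemma:nobipartite_without clique_comisolated} requires order at least $6$, it would still leave $P_4$ and the order-$5$ graphs to be treated separately. So the plan is repairable, but as written the ``one part of size two'' case, with $P_4$ as its smallest instance, is a real hole.
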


\begin{proof} Let $G'$ be a non-trivial connected component of $G$, not isomorphic to $K_{n,m}$ with $n,m\geq 1$. We consider $(A,B)$ a bipartite partition of $G'$. There exist $a,b\in V(G')$, $a \in A$ and $b\in B$, such that $ab\not \in E(G)$. Since $G'$ is connected, there is a shortest path $u_0u_1 \ldots u_s$, with $a=u_0$ and $b=u_s$, such that $u_i\in A$, for every even $i$, and $u_i \in B$ for every odd $i$. We note that $s$ is odd and $s \geq 3$. We obtain that $au_3\not \in E(G)$, with $u_1 \in N_{G}(a)\cap N_{G}(u_2)$ and $u_3 \in N_{G}(u_2)-N_ {G}(a)$. By the Lemma~\ref{lemma:nobipartite_without clique_comisolated}, we can suppose $V(G)=N_{G}[u_2] \cup \{a\}$ or $s\leq 5$. If $G$ is connected, then in both cases $G$ satisfies the Lemma~\ref{lemma:bipartite_graph_with_universal_vertice}. If $G$ is not connected, then $s=5$ and $G \simeq K_1 \cup P_4$ satisfies the Lemma~\ref{lemma:nobipartite_semi-isolated_clique_at least1verticalleaf}.\end{proof}

With these two previous results, it is possible to characterize the complementary prisms with a $3$-role assignment for bipartite graphs with isolated vertices, as we will see in the following theorem.

\begin{theorem} \label{theorem:bipartite_nonconnected_withisolated} Let $G$ be a bipartite graph with isolated vertices. The complementary prism of $G$ has a $3$-role assignment if and only if there exists a non-trivial connected component, not isomorphic to $K_{n,m}$, with $n,m\geq 1$ or $G\simeq K_1 \cup K_{1,m}$, $m\geq 1$. \end{theorem}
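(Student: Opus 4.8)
The plan is to prove both directions, drawing on the results already established. For the "if" direction, I would split into the two cases matching the hypothesis. If $G$ has a non-trivial connected component not isomorphic to any $K_{n,m}$, then the complementary prism of $G$ has a $3$-role assignment by Theorem~\ref{finaltheorem:bipartite_not_connected_seuniversal_howoremisolated1} directly (that theorem does not require $G$ to be connected, only that such a component exists). If instead $G \simeq K_1 \cup K_{1,m}$ with $m \geq 1$, then the complementary prism has a $3$-role assignment by Lemma~\ref{lemma:k1m,k1}, taking $n=1$ there so that $K_n \cup K_{1,m} = K_1 \cup K_{1,m}$.

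For the "only if" direction, I would argue the contrapositive: suppose $G$ is a bipartite graph with isolated vertices, every non-trivial connected component of $G$ is isomorphic to some $K_{n,m}$, and $G \not\simeq K_1 \cup K_{1,m}$ for any $m \geq 1$. I must show the complementary prism of $G$ has no $3$-role assignment. This is exactly the statement of Lemma~\ref{lemma:k_n,mUk_1_nao_has_3attribution}, whose hypotheses are: $G$ is bipartite with isolated vertices, $G \not\simeq K_1 \cup K_{1,m}$, and every non-trivial connected component is isomorphic to $K_{n,m}$ for some $n, m \geq 1$. So this direction follows immediately by citing that lemma. One small point of care: the lemma as stated requires $G$ to actually have isolated vertices, which is guaranteed by the hypothesis of the theorem, and it permits $G$ to have no non-trivial component at all (i.e. $G$ a union of isolated vertices), which is also covered.

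The proof is therefore essentially a bookkeeping argument assembling Theorem~\ref{finaltheorem:bipartite_not_connected_seuniversal_howoremisolated1}, Lemma~\ref{lemma:k1m,k1}, and Lemma~\ref{lemma:k_n,mUk_1_nao_has_3attribution}. The main thing to verify carefully is that the dichotomy in the theorem statement is genuinely exhaustive and mutually matches the hypotheses of the cited results: namely, that the negation of "there exists a non-trivial component not isomorphic to $K_{n,m}$, or $G \simeq K_1 \cup K_{1,m}$" is precisely "every non-trivial component is isomorphic to some $K_{n,m}$ and $G \not\simeq K_1 \cup K_{1,m}$", which is a straightforward logical manipulation. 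I would write this out explicitly to make the correspondence transparent.

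\begin{proof} Suppose first that there is a non-trivial connected component of $G$ not isomorphic to $K_{n,m}$ with $n,m \geq 1$. Then, by Theorem~\ref{finaltheorem:bipartite_not_connected_seuniversal_howoremisolated1}, the complementary prism of $G$ has a $3$-role assignment. Next, suppose $G \simeq K_1 \cup K_{1,m}$ with $m \geq 1$. Taking $n=1$ in Lemma~\ref{lemma:k1m,k1}, we have that $K_1 \cup K_{1,m}$ has a $R_{3,2}$-role assignment, hence a $3$-role assignment.

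Conversely, suppose every non-trivial connected component of $G$ is isomorphic to $K_{n,m}$ for some $n,m \geq 1$ and $G \not\simeq K_1 \cup K_{1,m}$ for every $m \geq 1$. Since $G$ has isolated vertices, all hypotheses of Lemma~\ref{lemma:k_n,mUk_1_nao_has_3attribution} are satisfied, and therefore the complementary prism of $G$ does not have a $3$-role assignment. \end{proof}
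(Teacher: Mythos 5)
Your proposal is correct and follows essentially the same route as the paper: both directions are obtained by assembling Theorem~\ref{finaltheorem:bipartite_not_connected_seuniversal_howoremisolated1}, Lemma~\ref{lemma:k1m,k1}, and Lemma~\ref{lemma:k_n,mUk_1_nao_has_3attribution}, with the paper phrasing the forward direction as a proof by contradiction where you use the (logically equivalent) contrapositive.
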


\begin{proof} $( \Rightarrow )$ By contradiction, we assume that every non-trivial connected component is isomorphic to $K_{n,m}$, with $n,m\geq 1$. It follows from the Lemma~\ref{lemma:k_n,mUk_1_nao_has_3attribution} that $G\simeq K_1 \cup K_{1,m}$ for some $m \geq 1$.
%Follows from the Lemma~\ref{lemma:k_n,mUk_1_nao_tem_3attribuicao} that $t=1$.

$(\Leftarrow)$ It follows from Lemma~\ref{lemma:k1m,k1} that the complementary prisms of $K_1 \cup K_{1,m}$, with $m\geq 1$, has a $3$-role assignment. So we can assume that $G \not \simeq K_1 \cup K_{1,m}$. Thus, there is a non-trivial connected component of $G$ that is not isomorphic to $K_{n,m}$ with $n,m\geq 1$ and the result follows from Theorem \ref{finaltheorem:bipartite_not_connected_seuniversal_howoremisolated1}.\end{proof}

We need a few more results to get the characterization of complementary prisms of bipartite graphs without isolated vertices with $3$-role assignment.
By Theorem~\ref{finaltheorem:bipartite_not_connected_seuniversal_howoremisolated1}, we can assume that $G$ is the union of complete bipartite graphs. The next lemma solves the case when the graph has a connected component isomorphic to $K_{n,m}$, with $n,m \geq 2$.

\begin{lemma} \label{lemma:bipartite_nonconnected_semiisolated_adjacent} Let $G$ be a bipartite graph with no isolated vertices. If there is $G'$ a connected component of $G$ that is isomorphic to $K_{n,m}$ with some $n,m \geq 2$, then the complementary prism of $G$ has a $3$-role assignment. \end{lemma}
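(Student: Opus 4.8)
The plan is to exhibit an explicit $R$-role assignment $p: V(G\overline{G}) \to \{1,2,3\}$, much as in the preceding lemmas. First I would fix a connected component $G' \simeq K_{n,m}$ with $n,m \geq 2$, with bipartition $(A',B')$, $|A'| = n$, $|B'| = m$. Since $n,m \geq 2$ there are at least two vertices on each side; pick $a_1, a_2 \in A'$ and $b_1, b_2 \in B'$. The key observation is that $\overline{G}$ restricted to $A' \cup B'$ contains large cliques (the set $A'$ is a clique in $\overline{G'}$, and so is $B'$), and the matching edges $v\overline{v}$ tie the two copies together, so there is room to realize a role graph with loops. I expect the natural target to be $R_{3,2}$ or $R_{3,4}$ (a loop at role~3), consistent with the pattern of Lemmas~\ref{lemma:k1m,k1}--\ref{lemma:nobipartite_without clique_comisolated}.

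The concrete assignment I would try: inside $G'$, give role~$2$ to one vertex $a_1 \in A'$, role~$1$ to one vertex $b_1 \in B'$, and role~$3$ to every other vertex of $G'$; dually on the $\overline{G}$ side, give $\overline{a_1}$ role~$1$, give $\overline{b_1}$ role~$2$, and role~$3$ to the rest of $\overline{G'}$. For the remaining components of $G$ (each again a complete bipartite graph, since by Theorem~\ref{finaltheorem:bipartite_not_connected_seuniversal_howoremisolated1} we may assume $G$ is a disjoint union of $K_{n_i,m_i}$'s), I would assign everything role~$3$ in $G$ and role~$3$ in $\overline{G}$ as well — using that $\overline{G}$ connects these components completely to the rest, so a vertex of role~$3$ in a far component sees role~$3$ vertices (itself's matched partner is role~$3$, and across $\overline{G}$ it sees the role~$3$ bulk of $\overline{G'}$), matching $N_R(3) \ni 3$. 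The verification then reduces to checking, for each of the three roles, that the closed-neighborhood role-set is exactly $N_R(\cdot)$: role~$1$ vertices ($b_1$ and $\overline{a_1}$) must see only role~$2$; role~$2$ vertices ($a_1$ and $\overline{b_1}$) must see roles $\{1,2,3\}$ or $\{1,3\}$ depending on the chosen $R$; role~$3$ vertices must see role~$3$ and whatever else $R$ dictates.

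The main obstacle will be bookkeeping the neighborhoods precisely, especially at the role~$2$ vertices and at the role~$3$ vertices lying in $A'$ versus $B'$. For instance $a_1 \in A'$ has $N_{G\overline{G}}(a_1) = B' \cup \{\overline{a_1}\}$; since $b_1 \in B'$ has role~$1$, the other $m-1 \geq 1$ vertices of $B'$ have role~$3$, and $\overline{a_1}$ has role~$1$, so $p(N(a_1)) = \{1,3\}$ — which forces $N_R(2) = \{1,3\}$, i.e. $R = R_{3,4}$ (no loop at~$2$), and then I must check role~$3$ needs $3 \in N_R(3)$ and $1 \notin N_R(3)$. This means I must make sure no role~$3$ vertex ever sees a role~$1$ vertex: the role~$1$ vertices are $b_1$ (neighbors in $G\overline{G}$: all of $A'$ plus $\overline{b_1}$) and $\overline{a_1}$ (neighbors: $a_1$ plus all of $\overline{A'} \setminus \{\overline{a_1}\}$ plus all $\overline{B_i}$, $\overline{A_i}$ from other components); so I would need $A' \setminus \{a_1\}$ to have role~$2$ as well, not role~$3$ — which then cascades. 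The cleanest fix is likely to set role~$2$ on all of $A'$, role~$1$ on $b_1$, role~$3$ on $B' \setminus \{b_1\}$, and mirror appropriately on the bar side; then carefully confirm the role graph is $R_{3,4}$ and that the "everything is role~$3$" assignment on the other components is consistent. I would isolate the edge cases $n = 2$ or $m = 2$ exactly (no degenerate empty sets) and present the final assignment in the same two-column \texttt{minipage} format as the earlier lemmas, followed by a short case check over the three roles.
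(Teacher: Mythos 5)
There is a genuine gap: your ``cleanest fix'' assignment cannot be completed to a valid $3$-role assignment whenever $G$ has a component other than $G'$ (which is exactly the situation you set up, assigning role~$3$ to the other components). Follow the forcing: with role~$2$ on all of $A'$, role~$1$ on $b_1$, role~$3$ on $B'\setminus\{b_1\}$ and on every other component, a vertex $x$ of another component gives $N_R(3)=\{3,p(\overline{x})\}$ while $b\in B'\setminus\{b_1\}$ gives $N_R(3)=\{2,p(\overline{b})\}$; equality forces $p(\overline{b})=3$, $N_R(3)=\{2,3\}$ and $p(\overline{x})=2$. Next, $b_1$ gives $N_R(1)=\{2,p(\overline{b_1})\}$; the options $p(\overline{b_1})=3$ and $p(\overline{b_1})=1$ are killed by $1\notin N_R(3)$ and by $\overline{b_1}$ being adjacent in $\overline{G}$ to the role-$3$ vertices $\overline{B'}\setminus\{\overline{b_1}\}$ (nonempty since $m\geq 2$), so $p(\overline{b_1})=2$ and then $p(N(\overline{b_1}))=\{1,2,3\}=N_R(2)$. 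Each $a\in A'$ then needs $p(\overline{a})=2$ to realize $N_R(2)$, but $\overline{a}$'s neighbors are $a$, $\overline{A'}\setminus\{\overline{a}\}$ and the barred other components, all of role~$2$, so $p(N(\overline{a}))=\{2\}\neq\{1,2,3\}$, a contradiction. So the cascade you flagged but left unresolved is fatal to this particular assignment, and your first assignment fails for the reason you yourself noticed; the proposal never reaches a verifiable map.

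The paper's route is different and avoids all of this bookkeeping: it takes a single edge $uv\in E(G')$ and observes that $C=\{u,v\}$ is a maximal clique (as $G$ is bipartite) satisfying the three conditions of Lemma~\ref{lemma:naobipartido_naosplit_maximum clique} — Condition~1 from $n,m\geq 2$, Condition~3 from the absence of isolated vertices and of leaves in $G'$, and Condition~2 from $G$ being disconnected or $n,m\geq 3$ — with the remaining connected case ($n=2$ or $m=2$) handled by Lemma~\ref{lemma:bipartite_graph_with_universal_vertice}. Note in particular that the working assignment places role~$1$ only on the $\overline{G}$-side (on $\overline{u},\overline{v}$) and role~$2$ on $\{u,v\}$ together with the barred non-clique vertices; your placement of role~$1$ on a vertex of $G$ and role~$2$ on a whole side $A'$ is what makes the bar side impossible to complete. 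If you want to keep an explicit-assignment proof, you should either reprove Lemma~\ref{lemma:naobipartido_naosplit_maximum clique}'s pattern for $C$ an edge of $G'$, or restructure your roles accordingly; as written, the argument does not go through.
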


\begin{proof} Let $G'$ be a connected component of $G$ isomorphic to $K_{n,m}$ for some $n,m \geq 2$. If $G$ is connected and $n=2$ or $m=2$, then it follows from Lemma~\ref{lemma:bipartite_graph_with_universal_vertice}. Let $uv \in E(G')$. We show that $\{u,v\}$ satisfies the Lemma~\ref{lemma:naobipartido_naosplit_maximum clique}. Since $G$ is bipartite, $\{u,v\}$ is a maximal clique. Since $n,m \geq 2$, Condition~1 is satisfied. Let $x \neq u,v$. Since $x$ is not an isolated vertex, there exists $y\in V(G)$, such that $xy\in E(G)$. Since $G'$ has no leaves, we can assume that $y\neq u,v$. Thus, Condition~3 is satisfied, and since $G$ is not connected or $n,m \geq 3$, Condition~2 is satisfied. \end{proof}

From this lemma, it remains the case where $G$ is the union of copies of $K_{1,m}$, including graph $K_2$. Recall that the Figure~\ref{grafo_K_2^3} and the Lemmas~\ref{lemma:prisma_Kn_nao_tem}, ~\ref{lemma:split_(K_2)^t}, ~\ref{lemma:k1m,k1} resolve all the cases when the graph $G$ is the union of copies of $ K_2$. 
Therefore, we can assume that $G$ has at least one connected component isomorphic to $K_{1,m}$, with $m\geq 2$. In view of Lemmas~\ref{lemma:bipartite_K_1,t_not_has_3_attribution} and~\ref{lemma:k1m,k1}, the only remaining case is the complementary prisms of $K_{1,m}$ that we will see in the next lemma.

\begin{lemma} \label{lemma:k1m} The complementary prisms of $K_{1,m}$, with $m \geq 2$, has a $R_{3,2}$-role assignment. \end{lemma}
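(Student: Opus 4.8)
The plan is to exhibit an explicit $R_{3,2}$-role assignment $p$ of the complementary prism $K_{1,m}\overline{K_{1,m}}$ for $m\ge 2$ and then verify, role by role, that it is indeed a role assignment with role graph $R_{3,2}$ (the path $1-2-3$ with a loop at $3$). First I would fix notation: let $c$ be the centre of $K_{1,m}$ and $\ell_1,\dots,\ell_m$ its leaves, so that in $\overline{K_{1,m}}$ the vertex $\overline{c}$ is isolated among $\{\overline{c},\overline{\ell_1},\dots,\overline{\ell_m}\}$ while $\{\overline{\ell_1},\dots,\overline{\ell_m}\}$ induces a clique $K_m$; the matching edges are $c\overline{c}$ and $\ell_i\overline{\ell_i}$.

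The assignment I would propose is the following. Put $p(c)=2$ and $p(\overline{c})=3$. Among the leaves, single out two of them, say $\ell_1,\ell_2$; set $p(\ell_1)=1$ and $p(\ell_i)=3$ for all $i\ge 2$. On the complement side set $p(\overline{\ell_1})=2$ and $p(\overline{\ell_i})=3$ for all $i\ge 2$. (Some tuning of which leaf gets role $1$ versus which $\overline{\ell_i}$ gets role $2$ may be needed; the point is that exactly one leaf carries role $1$, its mirror carries role $2$, and everything else in the leaf block carries role $3$.) Then I would check the four assertions: (i) the unique role-$1$ vertex $\ell_1$ has neighbourhood $\{c,\overline{\ell_1}\}$, whose images are $\{2\}=N_{R_{3,2}}(1)$; (ii) the role-$2$ vertices are $c$ and $\overline{\ell_1}$: $N(c)=\{\ell_1,\dots,\ell_m,\overline{c}\}$ has image $\{1,3\}$, and $N(\overline{\ell_1})=\{c\}\cup(\{\overline{\ell_j}:j\ge2\})\cup\{\ell_1\}$ — wait, $\overline{\ell_1}$ is matched to $\ell_1$ and adjacent to all other $\overline{\ell_j}$, so its image set is $\{1,3\}=N_{R_{3,2}}(2)$ provided $m\ge 2$ so that some $\overline{\ell_j}$, $j\ge 2$, exists; (iii) the role-$3$ vertices are $\overline{c}$, the $\ell_i$ with $i\ge 2$, and the $\overline{\ell_i}$ with $i\ge 2$: for $\overline{c}$, $N(\overline{c})=\{c\}$ gives image $\{2\}$ — that is wrong for role $3$, since $N_{R_{3,2}}(3)=\{2,3\}$. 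So I would instead assign $p(\overline{c})=1$; then $N(\overline{c})=\{c\}$ has image $\{2\}=N_{R_{3,2}}(1)$, good, but then there are two role-$1$ vertices $\ell_1$ and $\overline{c}$ and I must recheck $\ell_1$: its image set is still $\{2\}$, fine.

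So the corrected proposal is $p(c)=2$, $p(\overline{c})=1$, $p(\ell_1)=1$, $p(\ell_i)=3$ for $i\ge2$, $p(\overline{\ell_1})=2$, $p(\overline{\ell_i})=3$ for $i\ge2$; and I would then verify: role-$1$ vertices $\{\overline c,\ell_1\}$ each see image $\{2\}$; role-$2$ vertices $\{c,\overline{\ell_1}\}$ see $\{1,3\}$ (using $m\ge2$ for $c$ to see a role-$3$ leaf, and for $\overline{\ell_1}$ to see a role-$3$ neighbour $\overline{\ell_2}$, while $\ell_1$ supplies role $1$); role-$3$ vertices: $\ell_i$ ($i\ge2$) sees $N(\ell_i)=\{c,\overline{\ell_i}\}$ with image $\{2,3\}$; $\overline{\ell_i}$ ($i\ge2$) sees $\ell_i$ (role $3$), all $\overline{\ell_j}$ with $j\ne i$ (including $\overline{\ell_1}$, role $2$) so image $\{2,3\}$. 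This matches $R_{3,2}$ exactly, and surjectivity onto $\{1,2,3\}$ holds. The main obstacle — really just a bookkeeping matter rather than a genuine difficulty — is getting the case $m=2$ right, where the leaf block of the complement is only a single edge $\overline{\ell_1}\overline{\ell_2}$, so one must make sure $\overline{\ell_1}$ (role $2$) still has a role-$3$ neighbour ($\overline{\ell_2}$) and that the role-$3$ vertices still see both roles $2$ and $3$; the assignment above does handle $m=2$, and for $m\ge 3$ everything only gets easier. I would present the assignment in the same two-column $p(x)$/$p(\overline x)$ format used in the preceding lemmas and then dispatch the three role classes in three short paragraphs.
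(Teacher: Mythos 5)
Your corrected assignment ($p(c)=2$, $p(\overline{c})=1$, $p(\ell_1)=1$, $p(\overline{\ell_1})=2$, all other leaves and their mirrors getting role $3$) is exactly the assignment given in the paper's proof (with $u_0=c$, $u_1=\ell_1$), and your role-by-role verification, including the $m=2$ check, is correct. After the self-correction of $p(\overline{c})$ there is no gap; this is essentially the same proof.
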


\begin{proof} We denote by $u_0$ the single vertex of $K_{1,m}$ of degree $m$ and $N_{G}(u_0)=\{u_1, \ldots, u_m\}$. We define $p: V(G \overline{G}) \rightarrow \{1,2,3\}$, as follows.

For $x\in V(G):$

\begin{minipage}[c]{0.5\linewidth}
$$p(x)=
\begin{cases}
1, \mbox{ if } x= u_1; \\
2, \mbox{ if } x = u_0; \\
3, \mbox{ if } x \in \{u_2, \ldots, u_m\}.
\end{cases}$$
\end{minipage}
and
\hfill
\begin{minipage}[c]{0.5\linewidth}
$$p(\overline{x})=
\begin{cases}
1, \mbox{ if } x=u_0;\\
2, \mbox{ if } x=u_1;\\
3, \mbox{ if } x \in \{u_2, \ldots, u_m \}.
\end{cases}$$
\end{minipage}

\ \\
One can easily see that $p$ is a $R_{3,2}$-role assignment. \end{proof}

%The vertices with role~1 are $u_1$ and $\overline{u_0}$. For the vertex $u_1$, we have that $p(N(u_1))=p(\{u_0, \overline{u_1}\})=\{2\}$. For the vertex $\overline{u_0}$, we have that $p(N(\overline{u_0}))=\{p(u_0)\}=\{2\}$.

%The vertices with role~2 are $u_0$ and $\overline{u_1}$. For the vertex $u_0$, we have that $N(u_0)= \{ \overline{u_0} \} \cup \{u_1, \ldots, u_m \}$ and therefore $p(N(u_0))= \{1,3\}$. For the vertex $\overline{u_1}$, we have that $N(\overline{u_1})= \{ u_1\} \cup \overline{\{u_2, \ldots, u_m \}}$ and we conclude that $p (N(\overline{u_1}))= \{ 1, 3 \}$.

%The vertices with role~3 are those belonging to $\{u_2, \ldots, u_m\},$ or to $\overline{\{ u_2, \ldots, u_m\}}$. For the vertex $u_i$, with $i \in \{2, \ldots, m\}$, we have $p(u_0)=2$ and $p(\overline{u_i})=3$. We conclude that $p(N(u_i))= \{ 2, 3 \}$. For the vertex $\overline{u_i}$, with $i\in \{2, \ldots, m\}$, we have $p(u_i)=3$ and $p(\overline{u_1})=2$ . So, $p(N(\overline{u_i}))=\{2,3\}$. \end{proof}

The characterization ends with Theorem~\ref{general_bipartite_theorem}, by covering the complementary prisms of bipartite graphs.

\begin{theorem}\label{general_bipartite_theorem} Let $G$ be a bipartite graph. The complementary prism of $G$ has a $3$-role assignment if and only if $G$ is not isomorphic to one of the following graphs:

\begin{enumerate}
\item $K_2$, nor $\overline{K_n}$, with $n\geq 2$;
\item $G$ has isolated vertices and every non-trivial connected component of $G$ is isomorphic to $K_{n,m}$, for some $n,m\geq 1$, with the exception of $K_1 \cup K_{ 1,m}$, with $m\geq 1$;
\item $K_2^t$, with $t\geq 4$;
\item $\bigcup^t_{i=1} K_{1,m_i}$, with $t\geq 3$, $m_1 \geq 2$, $m_i \geq 1$, $i=2, \ldots, t$;
\item $K_{1,m_1}\cup K_{1,m_2}$, with $m_1,m_2 \geq 2$. \end{enumerate} \end{theorem}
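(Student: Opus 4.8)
The plan is to prove both directions by assembling the lemmas and theorems already established in the preceding sections, organizing the argument around a case analysis on the structure of the bipartite graph $G$.

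For the forward direction ($\Rightarrow$), I would argue by contraposition: assuming $G$ is isomorphic to one of the five listed families, I show the complementary prism of $G$ has no $3$-role assignment. Family~1 splits into $K_2$, handled by Lemma~\ref{lemma:prisma_Kn_nao_tem} (the case $n=2$), and $\overline{K_n}$ with $n \geq 2$, which is a graph on $n$ isolated vertices; since such a graph has every non-trivial component (vacuously) complete bipartite and is not $K_1 \cup K_{1,m}$ for $n \geq 2$, it falls under Lemma~\ref{lemma:k_n,mUk_1_nao_has_3attribution}. Family~2 is exactly the hypothesis of Lemma~\ref{lemma:k_n,mUk_1_nao_has_3attribution}. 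Family~3 is Lemma~\ref{lemma:split_(K_2)^t}. Families~4 and~5 together constitute the hypothesis of Lemma~\ref{lemma:bipartite_K_1,t_not_has_3_attribution} (noting that family~4 with $t \geq 3$ and family~5 with $m_1, m_2 \geq 2$ both satisfy $G \not\simeq K_2 \cup K_{1,m}$ and $G \not\simeq K_{1,m}$ alone). Thus in every listed case the complementary prism admits no $3$-role assignment, which gives the contrapositive.

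For the reverse direction ($\Leftarrow$), suppose $G$ is a bipartite graph not isomorphic to any graph in the list; I must produce a $3$-role assignment of its complementary prism. First I split on whether $G$ has isolated vertices. If $G$ has isolated vertices, then since $G$ avoids family~2, either some non-trivial component is not complete bipartite — in which case Theorem~\ref{theorem:bipartite_nonconnected_withisolated} (via Theorem~\ref{finaltheorem:bipartite_not_connected_seuniversal_howoremisolated1}) applies — or $G \simeq K_1 \cup K_{1,m}$, handled by Lemma~\ref{lemma:k1m,k1}. If $G$ has no isolated vertices, I again split: if some non-trivial component is not isomorphic to a complete bipartite graph, Theorem~\ref{finaltheorem:bipartite_not_connected_seuniversal_howoremisolated1} applies directly. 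Otherwise $G$ is a disjoint union of complete bipartite graphs $K_{n,m}$ with $n,m \geq 1$. If some component is $K_{n,m}$ with $n,m \geq 2$, use Lemma~\ref{lemma:bipartite_nonconnected_semiisolated_adjacent}. Otherwise every component is a star $K_{1,m_i}$ (including $K_2 = K_{1,1}$). Since $G$ avoids $K_2$ itself, avoids family~3 ($K_2^t$, $t \geq 4$), and — via Figure~\ref{grafo_K_2^3} — the cases $K_2^2$ and $K_2^3$ are known to work, while avoiding families~4 and~5 and the single-star subcase $K_{1,m}$ is Lemma~\ref{lemma:k1m}, the only remaining configurations are $K_2 \cup K_{1,m}$ and $K_1 \cup K_{1,m}$-type unions with a $K_2$ or one extra star, all covered by Lemma~\ref{lemma:k1m,k1}. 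Collecting these subcases exhausts all possibilities.

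The main obstacle I anticipate is \emph{bookkeeping completeness}: the reverse direction requires a careful enumeration showing that once all five forbidden families (plus the boundary cases $K_2^2$, $K_2^3$, $K_2 \cup K_{1,m}$, $K_1 \cup K_{1,m}$) are excluded, the remaining bipartite graphs are precisely those reachable by one of Lemmas~\ref{lemma:k1m,k1}, \ref{lemma:bipartite_graph_with_universal_vertice}, \ref{lemma:bipartite_nonconnected_semiisolated_adjacent}, \ref{lemma:k1m} or Theorem~\ref{finaltheorem:bipartite_not_connected_seuniversal_howoremisolated1}. In particular one must be vigilant about the interplay between "$G$ has isolated vertices" and "$G$ is a union of stars," since a star union plus an isolated vertex is a genuinely different regime, and about the small cases with $K_2$ components which are split awkwardly between a figure and several lemmas. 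The structural lemmas themselves are already proved, so no new graph-theoretic construction is needed — only a clean, exhaustive dispatch.
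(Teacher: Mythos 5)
Your proposal is correct and takes essentially the same route as the paper: the forward direction is the same assembly of the four non-existence results (Lemmas~\ref{lemma:prisma_Kn_nao_tem}, \ref{lemma:k_n,mUk_1_nao_has_3attribution}, \ref{lemma:split_(K_2)^t} and~\ref{lemma:bipartite_K_1,t_not_has_3_attribution}), and the reverse direction performs the same dispatch — isolated vertices via Theorem~\ref{theorem:bipartite_nonconnected_withisolated}, a non-complete-bipartite component via Theorem~\ref{finaltheorem:bipartite_not_connected_seuniversal_howoremisolated1}, a component $K_{n,m}$ with $n,m\geq 2$ via Lemma~\ref{lemma:bipartite_nonconnected_semiisolated_adjacent}, and the residual star unions $K_2^2$, $K_2^3$, $K_{1,m}$ and $K_2\cup K_{1,m}$ via Lemmas~\ref{lemma:k1m,k1}, \ref{lemma:k1m} and Figure~\ref{grafo_K_2^3}. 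The only cosmetic deviation is that you handle $\overline{K_n}$ through the (vacuously satisfied) hypothesis of Lemma~\ref{lemma:k_n,mUk_1_nao_has_3attribution} rather than observing that its complementary prism is the same graph as that of $K_n$, so Lemma~\ref{lemma:prisma_Kn_nao_tem} applies directly; either reading reaches the same conclusion.
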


\begin{proof} ($\Rightarrow$) Follows from Lemmas~\ref{lemma:prisma_Kn_nao_tem},~\ref{lemma:k_n,mUk_1_nao_has_3attribution},~\ref{lemma:split_(K_2)^t} and~\ref{lemma:bipartite_K_1,t_not_has_3_attribution}.

($\Leftarrow$) If $G$ has isolated vertices, then the result follows from Theorem~\ref{theorem:bipartite_nonconnected_withisolated}. So suppose $G$ has no isolated vertices. We can assume, by contradiction, that the complementary prisms of $G$ do not admit a $3$-role assignment. By Theorem~\ref{finaltheorem:bipartite_not_connected_seuniversal_howoremisolated1}, every connected component of $G$ is isomorphic to $K_{n,m}$, for some $n,m \geq 1$. It follows from the Lemma~\ref{lemma:bipartite_nonconnected_semiisolated_adjacent} that every connected component is isomorphic to $K_{1,m}$, for some $m\geq 1$. If $G\simeq K_2^t$, for some $t\geq 1$, then, by hypothesis (items~1 and~3), $t=2$ or $t=3$. The contradiction follows from the Lemmas~\ref{lemma:k1m,k1} and Figure~\ref{grafo_K_2^3}. So, $G\simeq \bigcup^t_{i=1} K_{1,m_i}$, with $t\geq 1$, $m_1 \geq 2$, $m_i \geq 1$, $i=2, \ldots, t$. By hypothesis (items $4$ and $5$), $G \simeq K_2\cup K_{1,m}$ or $G\simeq K_{1,m},$ for some $m \geq 2$. The contradiction follows from the Lemmas~\ref{lemma:k1m,k1} and~\ref{lemma:k1m}. \end{proof}

\section{Results for complementary prisms of non-bipartite graphs}
\label{section_5.8}

In this section, we show that the complementary prism of any graph that is neither bipartite nor its complementary graph always has a $3$-role assignment. 
The following theorem considers the case of a non-bipartite graph with no clique of order~3.  we obtain the graph $R_{3,2}$ that is the path graph of size~3 with a loop at role~3. 

Remember that the graph $R_{3,2}$ is the path graph of size~3 with a loop at role~2 and role~3.

\begin{theorem} \label{theorem:nobipartite_withverticeisolated_withoutclick} Let $G$ be a non-bipartite graph. If $G$ does not have a clique of order~3, then the complementary prism of $G$ has a $3$-role assignment. \end{theorem}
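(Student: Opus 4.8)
The plan is to reduce, by a short case analysis, the statement to one of the structural lemmas of Section~\ref{section_5.3}, all of which provide an explicit $R_{3,4}$-role assignment. Since $G$ is non-bipartite but has no clique of order~$3$, it contains an induced odd cycle $C_{2k+1}$ with $2k+1 \geq 5$, and more importantly the neighborhood of every vertex of $G$ is an independent set. First I would dispose of the small cases: if $|V(G)| \leq 5$, then $G$ must contain $C_5$ as a spanning subgraph (the only triangle-free non-bipartite graph on at most $5$ vertices with no isolated complications is $C_5$ itself, possibly plus chords that would create a triangle), and one checks directly that the complementary prism of $C_5$, the Petersen graph, admits a $3$-role assignment --- indeed this is exactly Figure~\ref{figure_C5_C5}, which exhibits an $R_{3,2}$-role assignment. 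So from now on assume $n = |V(G)| \geq 6$.

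Next I would locate two vertices $a,b$ with $ab \notin E(G)$ to feed into Lemma~\ref{lemma_clique3_greaterequal3} or Lemma~\ref{lemma:nobipartite_without clique_comisolated}. Take an induced odd cycle; any two vertices at distance~$2$ on it, say $a$ and $b$, satisfy $ab \notin E(G)$, $N_G(a)\cap N_G(b) \neq \emptyset$ (their common neighbor on the cycle), and $N_G(a) - N_G(b) \neq \emptyset$ (the other cycle-neighbor of $a$, which cannot be adjacent to $b$ without creating a triangle). If $V(G) \neq N_G[a] \cup \{b\}$, then Lemma~\ref{lemma:nobipartite_without clique_comisolated} applies directly and we are done. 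Otherwise $V(G) = N_G[a] \cup \{b\}$; but then, going around the odd cycle, $b$ must have a neighbor $b'$ with $b' \neq a$, and since $V(G) = N_G[a] \cup \{b\}$ and $ab \notin E(G)$ we get $b' \in N_G(a)$, giving a triangle-free configuration in which $N_G(b) - N_G(a)$ is forced to be empty or not; if $N_G(b)-N_G(a) = \emptyset$ one invokes Lemma~\ref{lemma:nobipartite_withisolated_withclick} (with the roles of $a,b$ possibly swapped so that the hypothesis $N_G(b)\subseteq N_G(a)$, $N_G(b)\neq N_G(a)$, $V(G)\neq N_G[a]\cup\{b\}$ is met --- here one uses $n\geq 6$ to guarantee $V(G) \neq N_G[a] \cup \{b\}$ after the swap), and if $N_G(b) - N_G(a) \neq \emptyset$ as well then all hypotheses of Lemma~\ref{lemma_clique3_greaterequal3} hold and the complementary prism has an $R_{3,4}$-role assignment.

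The main obstacle I anticipate is the bookkeeping in the degenerate subcase $V(G) = N_G[a]\cup\{b\}$: one must show that after possibly relabeling the witness pair, at least one of Lemmas~\ref{lemma:nobipartite_withisolated_withclick}, \ref{lemma_clique3_greaterequal3}, \ref{lemma:nobipartite_without clique_comisolated} applies, and the condition ``$V(G) \neq N_G[\cdot] \cup \{\cdot\}$'' has to be re-verified for the new pair. This is where the hypothesis $n \geq 6$ does real work, and where one should be careful that $G$ has no isolated vertices of its own (if it did, one would instead want a leaf-type lemma, but a non-bipartite graph cannot be a union of an isolated vertex with a bipartite remainder, so any isolated vertex would have to sit inside a non-bipartite component --- actually $G$ itself could have isolated vertices with the odd cycle elsewhere, and then Lemma~\ref{lemma:nobipartite_semi-isolated_clique_at least1verticalleaf} or the $G_0$ branch of Lemma~\ref{lemma:nobipartite_without clique_comisolated} handles it, since those lemmas already account for $G_0$). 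Everything else is a routine verification that the explicit maps in Section~\ref{section_5.3} satisfy $p(N_G(v)) = N_R(p(v))$, which those lemmas have already done.
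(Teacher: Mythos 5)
Your overall route is the paper's: take a shortest (hence induced) odd cycle $u_0u_1\ldots$, set $a=u_0$, $b=u_2$, feed the pair into Lemma~\ref{lemma:nobipartite_without clique_comisolated} when $n\geq 6$, and dispose of the base case by noting that the only triangle-free non-bipartite graph on at most five vertices is $C_5$, whose complementary prism (the Petersen graph) is handled by Figure~\ref{figure_C5_C5}. The gap is in your ``degenerate subcase'' $V(G)=N_G[a]\cup\{b\}$. First, this subcase is vacuous, and the one-line observation you are missing is exactly the one the paper makes: since the cycle is induced and has length at least $5$, the vertex $u_3$ is adjacent to neither $a=u_0$ nor equal to $b$, so $u_3\notin N_G[a]\cup\{b\}$ and the hypothesis $V(G)\neq N_G[a]\cup\{b\}$ of Lemma~\ref{lemma:nobipartite_without clique_comisolated} holds automatically. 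Second, the way you propose to handle the subcase does not actually work as written. If $V(G)=N_G[a]\cup\{b\}$, then every neighbor of $b$ lies in $N_G[a]\cup\{b\}$, and since $ab\notin E(G)$ this forces $N_G(b)\subseteq N_G(a)$; so your branch ``$N_G(b)-N_G(a)\neq\emptyset$ as well'' is incompatible with the subcase and Lemma~\ref{lemma_clique3_greaterequal3} is never reached there. In the remaining branch you want Lemma~\ref{lemma:nobipartite_withisolated_withclick}, but with the pair in the order $(a,b)$ the hypothesis $V(G)\neq N_G[a]\cup\{b\}$ fails by assumption, and after swapping the roles you would need $N_G(a)\subseteq N_G(b)$, which is false because the cycle vertex $u_{2k}$ lies in $N_G(a)-N_G(b)$; the condition you say $n\geq 6$ rescues is not the one that breaks. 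So neither invocation is legitimate in that branch.

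The repair is immediate: delete the subcase entirely by citing the chordlessness of the shortest odd cycle ($u_3\notin N_G[a]\cup\{b\}$), after which Lemma~\ref{lemma:nobipartite_without clique_comisolated} applies whenever $n\geq 6$, and the theorem follows exactly as in your first paragraph for $n=5$. With that one observation added, your argument coincides with the paper's proof.
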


\begin{proof} As $G$ is non-bipartite and does not have a clique of order~3, then there is a cycle $u_0 u_1 \ldots u_s$, with  $u_0 =u_s$ and $s\geq 5$. Suppose that $s$ is the smallest possible. We consider $a=u_0$ and $b=u_2$, so $ab\not \in E(G)$, $u_1\in N_{G}(a)\cap N_{G}(b)$, $u_ {s-1}\in N_{G}(a)-N_{G}(b)$ and $u_3 \not \in N_{G}[a] \cup \{b\}$. From the Lemma~\ref{lemma:nobipartite_without clique_comisolated}, we can assume that $s=5$. We emphasize that $C_5$ is the only graph of order~5 non-bipartite and does not have a clique of order~3. Therefore, $G$ is isomorphic to  $C_5$ and the result follows from Figure~\ref{figure_C5_C5}.\end{proof}

From the previous theorem, we can assume that $G$ has a clique of order~3. The next lemma proposes a condition for such a clique. This lemma will be used frequently throughout the following results.

\begin{lemma}\label{lemma:connected_semiisolated_notbipartite_withtriangle_with1ornoverticalleaf} Let $\{a,b,c\}$ be a clique of a graph $G$ such that $a$ and $b$ have no leaf neighbors. If, for every $x \neq a,b$, there is $y \in V(G)-(N_{G}[a] \cap N_{G}[b])$, $y \neq x$ such that, $xy \not \in E(G)$, then the complementary prisms of $G$ has a $R_{3,4}$-role assignment. \end{lemma}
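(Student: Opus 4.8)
The plan is to exhibit an explicit $R_{3,4}$-role assignment, following the same template used in Lemmas~\ref{lemma:nobipartite_semi-isolated_clique_at least1verticalleaf}, \ref{lemma:nobipartite_withisolated_withclick} and \ref{lemma:nobipartite_without clique_comisolated}, but now centered on the triangle $\{a,b,c\}$. Concretely, I would define $p:V(G\overline{G})\rightarrow\{1,2,3\}$ by putting $p(a)=p(b)=p(c)=2$; on the rest of $V(G)$ set $p(x)=1$ if $x$ is isolated (or, more generally, lies in a suitable ``leaf-like'' independent set $I$ associated to the hypothesis) and $p(x)=3$ otherwise; and on the copy $\overline{G}$ set $p(\overline{x})=1$ if $x\in\{a,b\}$ (the two vertices singled out by the hypothesis), $p(\overline{x})=3$ if $x\in N_G(a)\cup N_G(b)\setminus\{a,b\}$ or $x\in\{c\}$ as needed, and $p(\overline{x})=2$ otherwise. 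The exact bookkeeping of which vertices of $\overline{G}$ get role $2$ versus $3$ is the part that must be tuned so that $R_{3,4}$ (the role graph with $N_R(1)=\{2\}$, $2\in N_R(2)$, $3\in N_R(3)$, and $2\in N_R(3)$, $2\in N_R(2)$ — i.e. loops at $2$ and $3$) comes out correctly.

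First I would verify that the hypothesis forces $|\{a,b,c\}|=3$ to sit inside $G$ as a genuine triangle with $a,b$ having no leaf neighbours, and extract from ``for every $x\neq a,b$ there is $y\in V(G)\setminus(N_G[a]\cap N_G[b])$, $y\neq x$, with $xy\notin E(G)$'' the structural consequences I need: every vertex other than $a,b$ has a non-neighbour outside the common closed neighbourhood of $a$ and $b$, which is exactly what guarantees that each vertex assigned role $2$ in $\overline{G}$ and each vertex assigned role $3$ in $G$ can see the roles it must see. In particular this hypothesis is what lets a vertex $\overline{x}$ with role $2$ find another role-$2$ neighbour $\overline{y}$ (giving the loop at $2$), and lets a vertex $x$ with role $3$ in $G$ find a role-$3$ neighbour (the loop at $3$). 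I would also note $c$ may itself have leaf neighbours or be a leaf of nothing, so $c$ plays an asymmetric role and must be handled by hand.

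Then I would carry out the routine case analysis exactly as in the earlier lemmas: check separately the vertices receiving role $1$ (they must see only role $2$, so their unique relevant neighbour — $\overline{x}$ for an isolated vertex, or $\{d,\overline{x}\}$ for a leaf $x$ with neighbour $d$ of role $2$ — indeed maps to $2$); the vertices receiving role $2$ (they must see roles $1,2,3$: role $1$ from $\overline{a}$ or a member of $I$, role $2$ from $b$ or $c$ or $\overline{a}$-style neighbour, role $3$ from a neighbour supplied by the triangle and by the non-neighbour hypothesis); and the vertices receiving role $3$ (they must see roles $2$ and $3$ but not $1$, where the ``not $1$'' part is immediate because role-$1$ vertices of $G$ are isolated/leaf-like and role-$1$ vertices $\overline{a},\overline{b}$ are matched only to $a,b$ which have role $2$). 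For each such vertex I would split on whether it lies in $N_G(a)$, $N_G(b)$, the triangle, or outside, mirroring the structure of the proof of Lemma~\ref{lemma:nobipartite_without clique_comisolated}.

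The main obstacle I expect is the vertex $c$ and, more generally, the interaction between the triangle and the hypothesis: unlike in the leaf-based lemmas, here all three of $a,b,c$ get role $2$, so I must make sure $c$ (which is not covered by the ``no leaf neighbour'' assumption) still sees a role-$1$ vertex and a role-$3$ vertex, and that giving $\overline{a},\overline{b}$ role $1$ does not conflict with $\overline{c}$, whose role must be chosen ($2$ or $3$) so that both $c$'s neighbourhood constraint and $\overline{c}$'s own constraint hold. The delicate sub-case will be when $N_G(a)\cap N_G(b)$ is large or when $G-(N_G[a]\cap N_G[b])$ is small, forcing me to fall back on the quantified hypothesis to locate the required non-neighbour $y$; I anticipate that, as in the preceding lemmas, a handful of small exceptional configurations will need either a direct check or a reduction to one of Lemmas~\ref{lemma:naobipartido_naosplit_maximum clique}--\ref{lemma:nobipartite_without clique_comisolated}.
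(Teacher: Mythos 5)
Your overall template is the right one (an explicit assignment in which $\overline{a},\overline{b}$ receive role~1 and the quantified hypothesis is used to certify the loop at role~2 on the $\overline{G}$ side), but the concrete assignment you sketch is not the paper's, and it fails precisely at the points you flag and then defer ``to be tuned''. First, you set $p(c)=2$. In $R_{3,4}$ a role-2 vertex must see role~1, but the only role-1 vertices in your scheme are $\overline{a},\overline{b}$ (not adjacent to $c$ in $G\overline{G}$) and the isolated/``leaf-like'' vertices of $G$ (also not adjacent to $c$ unless $c$ happens to have a leaf neighbour placed in $I$, which the hypotheses do not guarantee); so $c$ generally has no role-1 neighbour. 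Rescuing this by setting $p(\overline{c})=1$ does not work either, since then every $\overline{y}$ with $cy\notin E(G)$ would have to receive role~2, conflicting with the role-3 vertices you create in $\overline{G}$. The paper avoids the issue entirely: in $G$ only $a$ and $b$ get role~2, while $c$ (like every non-isolated vertex other than $a,b$) gets role~3, and then $c$ sees role~2 via $a$ and role~3 via $\overline{c}$.

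Second, your bookkeeping on $\overline{G}$ splits by the union $N_{G}(a)\cup N_{G}(b)$, and that is wrong: a vertex $x\in N_{G}(a)\setminus N_{G}(b)$ would get $p(\overline{x})=3$, yet $\overline{x}$ is adjacent to $\overline{b}$, which has role~1, and $1\notin N_{R}(3)$ in $R_{3,4}$ (so your remark that the ``not $1$'' condition for role~3 is immediate is false under your split). Moreover, the hypothesis only supplies, for each $x\neq a,b$, a non-neighbour $y$ outside $N_{G}[a]\cap N_{G}[b]$; with the union-based split such a $y$ may satisfy $p(\overline{y})=3$, so the loop at role~2 is no longer certified. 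The paper's split is by the intersection: $p(\overline{x})=1$ for $x\in\{a,b\}$, $p(\overline{x})=3$ for $x\in N_{G}(a)\cap N_{G}(b)$, and $p(\overline{x})=2$ for $x\notin N_{G}[a]\cap N_{G}[b]$. With that choice the hypothesis delivers exactly the role-2 neighbour $\overline{y}$ that each role-2 vertex of $\overline{G}$ needs, the ``$a,b$ have no leaf neighbours'' assumption supplies a role-3 neighbour ($\neq a,b$) for every non-isolated vertex of $G$ outside $N_{G}(a)\cap N_{G}(b)$, and no exceptional configurations or reductions to the other lemmas are needed. As written, the essential content of the lemma --- the correct assignment and its verification --- is missing from your proposal.
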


\begin{proof} We define $p:V(G\overline{G})\rightarrow \{1,2,3\}$, as follows. For $x\in V(G):$

\begin{minipage}[c]{0.4\linewidth}
\vspace{-0.2cm}
$$p(x)=
\begin{cases}
1, \mbox{ if } x \in G_0; \\
2, \mbox{ if } x = a \mbox{ or } x=b; \\
3, \mbox{ otherwise. }
\end{cases}$$
\end{minipage}
and
\hfill
\begin{minipage}[c]{0.6\linewidth}
$$p(\overline{x})=
\begin{cases}
1, \mbox{ if } x = a \mbox{ or } x=b;\\
2, \mbox{ if } x \not \in N_{G}[a]\cap N_{G}[b];\\
3, \mbox{ if } x\in N_{G}(a) \cap N_{G}(b).
\end{cases}$$
\end{minipage}

\vspace{0.2cm}
We show that $p$ is a $R_{3,4}$-role assignment.

The vertices with role~1 are those belonging to $G_0$ or the vertices $\overline{a}$ and $\overline{b}$. Let $x\in G_0$, we have that $p(N(x))=\{p(\overline{x})\}=\{2\}$. Let $\overline{x}$, with $x \in \{a,b\} $, we have that $N(\overline{x}) \subseteq \{x\} \cup (V(G)-( N_{G}[a]\cap N_{G}[b]))$, so $p(N(\overline{x}))=\{2\}.$

The vertices with role~2 are $a$, $b$ or those belonging to $\overline{V(G)-(N_{G}[a] \cap N_{G}[b]))}$. For the vertex $a$, we have $p(\overline{a})=1$, $p(b)=2$ and $p(c)=3$. We proceed in the same way for the vertex $b$. Let $\overline{x}$, with $x\not \in N_{G}[a] \cap N_{G}[b]$. By hypothesis, there is $y\in V(G)-(N_{G}[a]\cap N_{G}[b])$, such that $xy\not \in E(G)$. So, $2\in p(N(\overline{x}))$, since $p(\overline{y})=2$. Since $\overline{a}$ or $\overline{b}$ belongs to $N(\overline{x})$, we have $1\in p(N(\overline{x}))$. For $\overline{x}$, with $x \in G_0$, we get $p(\overline{c})=3$ and $p(N(\overline{x}))=\{1,2,3\}$. Otherwise, $p(x)=3$ and $p(N(\overline{x}))=\{1,2,3\}$.

The vertices with role~3 are those belonging to $V(G)-(\{a,b\} \cup G_0)$ or to $\overline{N_{G}(a) \cap N_{G}(b)}$. Let $x\in V(G)-(\{a,b\} \cup G_0)$. If $x\in N_{G}(a)\cap N_{G}(b)$, then $p(a)=2$ and $p(\overline{x})=3$. Otherwise, $p(\overline{x})=2$. Since $x$ is not an isolated vertex, there is $y\in V(G)$, such that $xy\in E(G)$. By the assumptions, the vertices of $a$ and $b$ have no leaf neighbors, so we can choose $y\neq a,b$. Therefore, $p(y)=3$ and $p(N(x))=\{2,3\}$. Let $\overline{x}$, with $x\in N_{G}(a)\cap N_{G}(b)$. By hypothesis, we have that $2 \in p(N(\overline{x}))$. As $p(x)=3$, we have that $p(N(\overline{x}))=\{2,3\}$. \end{proof}

 The following lemma deals with the case when a clique has at least two leaf neighbors.

\begin{lemma} \label{lemma:connection_semiisolated_naobipartite_withtriangle_2leaves} If the graph $G$ has a clique of order~3, with at least two leaf neighbors, then the complementary prisms of $G$ has a $R_{3,4} $-role assignment. \end{lemma}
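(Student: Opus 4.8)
The plan is to exhibit an explicit $R_{3,4}$-role assignment, following the same template already used in Lemmas~\ref{lemma:naobipartido_naosplit_maximum clique} through~\ref{lemma:connected_semiisolated_notbipartite_withtriangle_with1ornoverticalleaf}, and simply verify it. Let $\{a,b,c\}$ be the clique of order~3, and let $f_a, f_b$ be the two leaf neighbors in question. By symmetry I may assume $f_a \in N_G(a)$ (the case where both leaves hang off the same clique vertex, or where one of them is $c$, will be handled by relabeling). I would define, for $x \in V(G)$,
\[
p(x)=
\begin{cases}
1, & x \in G_0 \text{ or } x \text{ is a leaf neighbor of } a \text{ or } b;\\
2, & x = a \text{ or } x = b;\\
3, & \text{otherwise,}
\end{cases}
\qquad
p(\overline{x})=
\begin{cases}
1, & x = a \text{ or } x = b;\\
2, & x \notin N_{G}[a]\cap N_{G}[b];\\
3, & x \in N_{G}(a)\cap N_{G}(b).
\end{cases}
\]
This matches the assignment in Lemma~\ref{lemma:connected_semiisolated_notbipartite_withtriangle_with1ornoverticalleaf} except that here $a$ and $b$ are allowed to have leaf neighbors, which are pushed into role~1.

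The verification would go role by role, exactly as in the previous proofs. For role~1: an isolated vertex $x$ sees only $\overline{x}$, which has role~2; a leaf neighbor $f$ of $a$ sees only $a$ (role~2) and $\overline{f}$, and since $f$ is a leaf it is not in $N_G[a]\cap N_G[b]$ (as $f$ is not adjacent to $b$), so $p(\overline f)=2$; similarly for $\overline a, \overline b$, whose neighbourhoods lie in $\{a\}\cup(V(G)\setminus(N_G[a]\cap N_G[b]))$ resp.\ $\{b\}\cup(\dots)$, giving role set $\{2\}$. For role~2: the vertex $a$ sees $\overline a$ (role~1), $b$ (role~2) and $c$ (role~3); the vertex $b$ symmetrically; a vertex $\overline x$ with $x\notin N_G[a]\cap N_G[b]$ sees $\overline a$ or $\overline b$ (role~1), and — this is where I need the hypothesis — some $\overline y$ with $y \notin N_G[a]\cap N_G[b]$, $y\neq x$, $xy\notin E(G)$, giving role~2, plus a vertex of role~3 among $\{x\}\cup$(a nonisolated companion). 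For role~3: a non-isolated, non-leaf-of-$\{a,b\}$ vertex $x$ in $V(G)\setminus(\{a,b\}\cup G_0)$ either lies in $N_G(a)\cap N_G(b)$, whence $p(a)=2$ and $p(\overline x)=3$, or not, whence $p(\overline x)=2$ and $x$ has a non-$\{a,b\}$ neighbour $y$ (it has a neighbour, and we may dodge $a,b$ since... here is the subtlety) of role~3; and $\overline x$ with $x\in N_G(a)\cap N_G(b)$ sees $x$ (role~3) and, by hypothesis, a role~2 vertex.

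The main obstacle I anticipate is the dichotomy between leaves and non-leaves when checking that a role~3 vertex of $G$ actually sees another role~3 vertex. If $x$ is itself a leaf but not a neighbor of $a$ or $b$, then its unique neighbour $d$ might have role~2, i.e.\ $d\in\{a,b\}$ — but that is impossible since $x$ is not a leaf neighbour of $a$ or $b$; so $d$ has role~1 or~3, and I must rule out role~1, i.e.\ rule out $d\in G_0$ (impossible, $d$ has neighbour $x$) and $d$ being a leaf neighbour of $a$ or $b$. A leaf neighbour of $a$ or $b$ has its single neighbour in $\{a,b\}$, not $x$; contradiction. So $d$ has role~3 and we are fine. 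The genuinely delicate point is whether the clique vertex $c$ needs separate treatment: $c\in N_G(a)\cap N_G(b)$, so $p(\overline c)=3$, $p(c)=3$, and $c$ sees $a$ of role~2, so $c$ is fine as a role~3 vertex. I would also need to double-check that $c$ is not forced to be a leaf (it is not: it is adjacent to both $a$ and $b$), and that the case analysis does not silently assume $G$ connected — it does not, since everything is local to $\{a,b,c\}$ and to $G_0$. Once these bookkeeping cases are dispatched, the verification that $p$ realizes precisely $R_{3,4}$ (role~1 adjacent only to role~2; role~2 adjacent to $1,2,3$; role~3 adjacent to $2,3$; a loop at both~2 and~3) is immediate from the computations above.
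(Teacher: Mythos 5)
Your assignment is exactly the one in the paper (role~1 on $G_0$ and on the leaf neighbours of $a,b$ after relabelling the clique, role~2 on $a,b$, role~3 elsewhere in $G$; and $1/2/3$ on $\overline{a},\overline{b}$ / $\overline{V(G)-(N_G[a]\cap N_G[b])}$ / $\overline{N_G(a)\cap N_G(b)}$), and your role-by-role plan is the paper's verification. The subtlety you isolate for role-3 vertices of $G$ is real and your resolution of the leaf case is correct; the non-leaf case is also fine, since such an $x\notin N_G(a)\cap N_G(b)$ is adjacent to at most one of $a,b$ and has degree at least~2, so it has a neighbour outside $\{a,b\}$, which cannot lie in $G_0$ or among the leaf neighbours of $a,b$.

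The one step your sketch does not actually cover is the role-2 check for $\overline{x}$ when $x\in G_0\cup F_a\cup F_b$ (so $p(x)=1$): there the role-3 neighbour cannot be $x$, and ``a nonisolated companion'' is not a valid witness — for an arbitrary nonisolated $z$ with $xz\notin E(G)$, the vertex $\overline{z}$ generally has role~1 or~2. The correct witness is $\overline{c}$: since $c\in N_G(a)\cap N_G(b)$ we have $p(\overline{c})=3$, and $xc\notin E(G)$ because $x$ is isolated or a leaf whose unique neighbour lies in $\{a,b\}$. Relatedly, the places where you write ``by hypothesis'' (the role-2 neighbour of $\overline{x}$ for $x\notin N_G[a]\cap N_G[b]$, and of $\overline{x}$ for $x\in N_G(a)\cap N_G(b)$) should name the witnesses explicitly: they are the barred leaves $\overline{f}$ with $f\in F_a\cup F_b$, $f\neq x$, which exist because there are at least two such leaves, satisfy $f\notin N_G[a]\cap N_G[b]$, and are nonadjacent to $x$ since $f$'s unique neighbour is $a$ or $b$ while $x\neq a,b$ — this is precisely where the two-leaf hypothesis enters. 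With $\overline{c}$ and the barred leaves supplied (and the routine remark that role-3 vertices have no role-1 neighbours, since leaves of $a,b$ and isolated vertices are not adjacent to them and $\overline{a},\overline{b}$ are not adjacent to $\overline{N_G(a)\cap N_G(b)}$), your argument closes and coincides with the paper's proof.
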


\begin{proof} For $x\in V(G)$, we denote by $F_{x}=N_{G}(x) \cap G_1$ the set of leaves neighboring the vertex $x$. Let $\{a,b,c\}$ be a clique of $G$, such that $|F_{a} \cup F_{b}| \geq 2$. We define $p:V(G\overline{G})\rightarrow \{1,2,3\}$, as follows.

\begin{minipage}[c]{0.42\linewidth}
For $x\in V(G):$
\vspace{-0.2cm}
$$p(x)=
\begin{cases}
1, \mbox{ if } x \in G_0 \cup F_{a} \cup F_{b}; \\
2, \mbox{ if } x = a \mbox{ or } x=b; \\
3, \mbox{ otherwise. }
\end{cases}$$
\end{minipage}
and
\hfill
\begin{minipage}[c]{0.6\linewidth}
\vspace{0.2cm}
$$p(\overline{x})=
\begin{cases}
1, \mbox{ if } x = a \mbox{ or } x=b;\\
2, \mbox{ if } x \not \in (N_{G}[a] \cap N_{G} [b]);\\
3, \mbox{ if } x\in N_{G}(a) \cap N_{G}(b).
\end{cases}$$
\end{minipage}
\vspace{0.2cm}
We show that $p$ is a $R_{3,4}$-role assignment.

The vertices with role~1 are those belonging to $(G_0 \cup F_{a} \cup F_{b})$ or the vertices $\overline{a}$ and $\overline{b}$. Let $x\in G_0$, we have that $p(N(x))=\{p(\overline{x})\}=\{2\}$. Let $x\in F_{a}$, then $N(x)= \{\overline{x},a\}$ and $p(N(x))=\{2\}$. We proceed similarly for the vertices of $F_{b}$. Let $\overline{x}$, with $x\in \{a, b\}$, we have that $N(\overline{x}) \subseteq \{x\} \cup V(G)- (V(G)[a] \cap N_{G}[b])$, so $p(N(\overline{x}))=\{2\}.$

The vertices with role~2 are $a$, $b$ or those belonging to $\overline{V(G)-(N_{G}[a] \cap N_{G}[b])}$. For the vertex $a$, we have $p(\overline{a})=1$, $p(b)=2$ and $p(c)=3$. Similarly, it goes to the vertex $b$. Let $\overline{x}$, with $x \in G_0 \cup F_a\cup F_b$, we have $p(x)=1$. By the hypothesis $|F_{a} \cup F_{b}| \geq 2$, so $2 \in p(N(\overline{x}))$, since $p(\overline{F_a \cup F_b})=\{2\}$. Finally, since $p(\overline{c})=3$, we have that $p(N(\overline{x}))=\{1,2,3\}$. Let $\overline{x}$, with $x\not \in N_{G}[a]\cap N_{G}[b]$, $x\not \in G_0 \cup F_a \cup F_b$. We can assume that $x \not \in N_{G}(a)$. Thus, it follows from the fact that $p(x)=3$, $p(\overline{a})=1$ and $p(\overline{F_{a} \cup F_{b}})=\{2 \}$, that $p(N(\overline{x}))=\{1,2,3\}$.

The vertices with role~3 are those belonging to $V(G)-(\{a,b\} \cup G_0\cup F_{a}\cup F_{b})$ or $\overline{N_{G }(a) \cap N_{G}(b)}$. Let $x\in V(G)-(\{a,b\} \cup G_0 \cup F_{a} \cup F_{b})$. If $x \in N_{G}(a)\cap N_{G}(b)$, then $p(\overline{x})=3$ and $p(a)=2$. Otherwise, $x \not \in N_{G}(a)\cap N_{G}(b) $, so $p(\overline{x})= 2$. Since $x$ is not an isolated vertex, there is $y\in V(G)$, such that $xy\in E(G)$. Clearly, $y \not \in G_0 \cup F_a \cup F_b$. As $x \not \in F_a \cup F_b$, we can choose $y \neq a,b$. Therefore, $p(y)=3$. Let $\overline{x}$, with $x\in N_{G}(a)\cap N_{G}(b)$, we have $p(x)=3$ and $p(\overline{F_a \cup F_b})=\{2\}$, so $p(N(\overline{x}))=\{2,3\}.$\end{proof}

In the following lemma, we conclude that a sufficient condition for the complementary prisms of $G$ to have a $3$-role assignment is the existence of an isolated vertex and a clique of order~3. Observe that the graph $K_1 \cup K_n$, with $n\geq 3$, is the complementary graph of $K_{1,n}$ that has a $R_{3,2}$-role assignment by Lemma~\ref{lemma:k1m}.

\begin{lemma} \label{lemma:nonsplit_noleaf_comisolated} Let $G$ be a graph with isolated vertices, such that $G \not \simeq K_1 \cup K_n$ with $n\geq 3$. If $G$ has a clique of order~3, then the complementary prism of $G$ has a $R_{3,4}$-role assignment. \end{lemma}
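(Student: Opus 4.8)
The plan is to reduce to the lemmas already proved in Section~\ref{section_5.3} by carefully locating, inside a graph $G$ that has an isolated vertex and a clique of order~$3$, either a triangle that satisfies the hypotheses of Lemma~\ref{lemma:connected_semiisolated_notbipartite_withtriangle_with1ornoverticalleaf} or Lemma~\ref{lemma:connection_semiisolated_naobipartite_withtriangle_2leaves}, or a pair of vertices that fits Lemma~\ref{lemma:nobipartite_withisolated_withclick}. Write $G = G_0 \cup H$ where $G_0 \neq \emptyset$ is the set of isolated vertices and $H$ is the union of the non-trivial components; fix a triangle $\{a,b,c\}$, which necessarily lies in a single component $H'$ of $H$. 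Since $G_0 \neq \emptyset$, there is a vertex $z \in G_0$, and in particular $z$ is a non-neighbor of every vertex of $H'$; this is exactly the kind of ``far away'' witness that the hypotheses of the Section~\ref{section_5.3} lemmas demand, so the isolated vertex is what makes the argument go through.

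First I would dispose of the easy sub-case where the triangle $\{a,b,c\}$ already carries two leaf neighbours: then Lemma~\ref{lemma:connection_semiisolated_naobipartite_withtriangle_2leaves} applies directly and we are done. So assume that no triangle of $G$ has two leaf neighbours; in particular for any triangle we may, after relabelling, assume that at least two of its three vertices have no leaf neighbour. Next I would try to apply Lemma~\ref{lemma:connected_semiisolated_notbipartite_withtriangle_with1ornoverticalleaf}: with $a,b$ the two leaf-free vertices of the triangle, its hypothesis asks that for every $x \neq a,b$ there be some $y \in V(G) - (N_G[a] \cap N_G[b])$ with $y \neq x$ and $xy \notin E(G)$. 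The isolated vertex $z$ is always a candidate for $y$ unless $x = z$ itself; and when $x = z$, any vertex of the (non-trivial) component $H'$ works as $y$. Thus the only way this lemma can fail is if there is some $x \neq a,b$ that is adjacent to \emph{every} vertex of $V(G) - (N_G[a]\cap N_G[b])$ other than $x$ — but $x$ being adjacent to the isolated vertex $z$ is impossible, so in fact the hypothesis of Lemma~\ref{lemma:connected_semiisolated_notbipartite_withtriangle_with1ornoverticalleaf} is automatically satisfied, and we conclude. I would double-check whether the ``$y \neq x$'' and the degenerate possibility $x \in G_0$ (forcing $x = z$, handled above) are the only corner cases; I believe they are.

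The one genuine gap in the above is the excluded graph $K_1 \cup K_n$, $n \geq 3$, and more generally the question of whether $a$ and $b$ really can be chosen leaf-free — a triangle could conceivably have at least two of its vertices each carrying a single leaf neighbour without the two-leaf hypothesis of Lemma~\ref{lemma:connection_semiisolated_naobipartite_withtriangle_2leaves} failing only when the leaves coincide, which they cannot. So the main obstacle I expect is a bookkeeping one: enumerating how many of $a,b,c$ have leaf neighbours and showing that in every configuration except $G \simeq K_1 \cup K_n$ we can pick the triangle and its two distinguished vertices appropriately, or else fall back on Lemma~\ref{lemma:nobipartite_withisolated_withclick} using $a$ and the isolated vertex. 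Concretely: if at most one vertex of some triangle has a leaf neighbour we use Lemma~\ref{lemma:connected_semiisolated_notbipartite_withtriangle_with1ornoverticalleaf} as above; if every triangle has at least two leaf-bearing vertices but never two leaf neighbours counted with multiplicity, the leaf structure is so rigid that $H'$ must be (close to) a complete graph with pendant leaves, and I would argue that either Lemma~\ref{lemma:connection_semiisolated_naobipartite_withtriangle_2leaves} still applies after choosing a different triangle, or $G \simeq K_1 \cup K_n$. Writing out this last case analysis cleanly, and making sure the exclusion of $K_1 \cup K_n$ is exactly what is needed and not too much, is where the real work lies; everything else is an appeal to the machinery of Section~\ref{section_5.3}.
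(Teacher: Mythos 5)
There is a genuine gap, and it sits exactly where you wave it away. You claim that the hypothesis of Lemma~\ref{lemma:connected_semiisolated_notbipartite_withtriangle_with1ornoverticalleaf} is ``automatically satisfied'' because for $x\neq z$ the isolated vertex $z$ serves as the witness $y$, and for $x=z$ ``any vertex of the non-trivial component $H'$ works as $y$.'' The second half is false: the witness must satisfy $y\in V(G)-(N_G[a]\cap N_G[b])$, and it may happen that every vertex other than $z$ lies in $N_G[a]\cap N_G[b]$, i.e.\ $V(G)=(N_G[a]\cap N_G[b])\cup\{z\}$. Then for $x=z$ there is no admissible $y$ at all, and your one-shot application of the lemma collapses. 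A concrete instance is $G=K_1\cup(K_5-e)$ with the triangle $\{a,b,c\}$ chosen among the two dominating vertices $a,b$ and a third vertex: here $N_G[a]\cap N_G[b]$ is the whole nontrivial component, $G\not\simeq K_1\cup K_n$, and yet the hypothesis fails for $x=z$. This covering case is precisely where the exclusion of $K_1\cup K_n$ is needed — not in the leaf bookkeeping you spend your last paragraph on (that part is fine and matches the paper: two leaf neighbors are killed by Lemma~\ref{lemma:connection_semiisolated_naobipartite_withtriangle_2leaves}, and a single leaf can be pushed onto $c$).

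The paper's proof handles the residual case as follows: if $V(G)=(N_G[a]\cap N_G[b])\cup\{d\}$ (with $d$ the isolated vertex), then $G$ has no leaves, and since $G\not\simeq K_1\cup K_n$ there exist nonadjacent $u,v\in N_G[a]\cap N_G[b]$, necessarily distinct from $a,b$. One then reapplies Lemma~\ref{lemma:connected_semiisolated_notbipartite_withtriangle_with1ornoverticalleaf} to the \emph{different} clique $\{a,b,u\}$ with distinguished vertices $a,u$: now $v\notin N_G[u]$, so $v$ is the missing witness for $x=d$, while $d$ still witnesses every other $x$. Your proposed fallback — Lemma~\ref{lemma:nobipartite_withisolated_withclick} ``using $a$ and the isolated vertex'' — cannot be repaired into this: that lemma requires both chosen vertices to be non-isolated and $N_G(a)$ to be an independent set, which fails outright since $a$ lies in a triangle. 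So the missing idea is the switch to a second clique built from a non-edge inside $N_G[a]\cap N_G[b]$; without it the proof does not go through.
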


\begin{proof} We consider $\{a,b,c\}$ a clique of $G$ and $d\in G_0$. By Lemma~\ref{lemma:connection_semiisolated_naobipartite_withtriangle_2leaves}, this clique has at most one leaf neighbor. If so, we can assume that the leaf is a neighbor of $c$, and thus $a$ and $b$ have no leaf neighbors. Since $d\not \in N_{G}[a] \cap N_{G}[b]$ for all $x\neq a,b,d$, we have that $xd \not \in E(G)$. By Lemma~\ref{lemma:connected_semiisolated_notbipartite_withtriangle_with1ornoverticalleaf}, we can assume that $V(G)=(N_{G}[a]\cap N_{G}[b]) \cup \{d\}$ and therefore $G$ has no leaf. It follows from the fact that $G$ is not isomorphic to $K_1 \cup K_n$ and $d \in G_0$ that there are $u,v \in N_{G}[a]\cap N_{G}[b]$, such that $uv \not \in E(G)$. Clearly, $u$ and $v$ are different from $a$ and $b$. On the other hand, $\{a,b,u\}$ is a clique, so $u$ and $a$ have no leaf neighbors, for all $x \neq u,a,d$ we have $xd \not \in E(G)$ and we have $v\not \in N_{G}[u] \cap N_{G}[a]$ with $dv \not \in E(G)$. Therefore, the result follows from the Lemma~\ref{lemma:connected_semiisolated_notbipartite_withtriangle_with1ornoverticalleaf}. \end{proof}

From the previous lemma, in addition to assuming that $G$ has a clique of order~3, we can also assume that $G$ has no isolated vertices. Next, we display lemmas contemplating the cases of graphs that have leaves and do not satisfy the condition of Lemma~\ref{lemma:connected_semiisolated_notbipartite_withtriangle_with1ornoverticalleaf}.

\begin{lemma}\label{lemma:nobipartite_Noleaf_semiisolated_noleafneighbor} Let $G$ be a graph with a single leaf $f$ and no isolated vertices. Let $\{a,b,c\}$ be a clique of $G$, such that, $N_{G}(f)=\{c\}$. If $V(G) \neq N_{G}[a] \cup \{f\}$ and for all $y \not \in N_{G}[a] \cap N_{G}[b]$, we have that $cy\in E(G)$, then the complementary prisms of $G$ has a $R_{3,4}$-role assignment. \end{lemma}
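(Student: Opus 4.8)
The plan is to exhibit an explicit $R_{3,4}$-role assignment $p : V(G\overline{G}) \to \{1,2,3\}$, following the same template used in Lemmas~\ref{lemma:connected_semiisolated_notbipartite_withtriangle_with1ornoverticalleaf} and~\ref{lemma:connection_semiisolated_naobipartite_withtriangle_2leaves}, and then verify the defining condition $p(N(v)) = N_{R_{3,4}}(p(v))$ at every vertex. Recall that in $R_{3,4}$ we have $N(1)=\{2\}$, $N(2)=\{1,3\}$, $N(3)=\{2,3\}$ (a loop at role~$3$). The natural assignment is: on $V(G)$, put $p(a)=p(b)=2$, $p(x)=1$ if $x\in G_0\cup F_a\cup F_b$ (here $F_a=F_b=\emptyset$ by hypothesis, but we keep the notation uniform), and $p(x)=3$ otherwise; in particular $p(f)=3$ since $f\notin N_G[a]$ forces $f\ne a,b$ and $f$ is not a leaf-neighbor of $a$ or $b$. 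On $\overline{V(G)}$, put $p(\overline{x})=1$ if $x\in\{a,b\}$, $p(\overline{x})=3$ if $x\in N_G(a)\cap N_G(b)$, and $p(\overline{x})=2$ if $x\notin N_G[a]\cap N_G[b]$. Note $c\in N_G(a)\cap N_G(b)$, so $p(c)=3$, $p(\overline c)=3$; and $p(\overline f)$: since $f\notin N_G[a]\cap N_G[b]$, $p(\overline f)=2$.

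First I would check the vertices of role~$1$: these are $\overline a,\overline b$ (and any element of $G_0=\emptyset$). For $\overline a$, $N(\overline a)=\{a\}\cup(\overline{V(G)-N_G[a]})\subseteq\{a\}\cup\overline{V(G)-(N_G[a]\cap N_G[b])}$, all of which receive role~$2$, so $p(N(\overline a))=\{2\}=N_R(1)$; symmetrically for $\overline b$. Second, the vertices of role~$2$: $a$, $b$, and the $\overline x$ with $x\notin N_G[a]\cap N_G[b]$. For $a$ we have $p(\overline a)=1$ and $p(c)=3$, giving $\{1,3\}$; similarly for $b$. For $\overline x$ with $x\notin N_G[a]\cap N_G[b]$: its neighbour $\overline a$ or $\overline b$ lies in $N(\overline x)$ (since $x$ is not adjacent to both $a,b$ in $G$, so it is non-adjacent to at least one in $\overline G$, giving role~$1$); the hypothesis "for all $y\notin N_G[a]\cap N_G[b]$, $cy\in E(G)$" gives $c\in N_G(x)$ when $x\ne c$, so $p(c)=3$ appears, and if $x=c$ then $\overline x=\overline c$ has role~$3$, contradiction with $p(\overline x)=2$ — so actually $x\ne c$ here, and we also need role~$2$ among $N(\overline x)$, which comes from $\overline f$ (since $f\notin N_G[a]\cap N_G[b]$ and $xf$: if $x\ne f$ then $\overline x\overline f$ or $xf$ handles it; the single-leaf hypothesis and $V(G)\ne N_G[a]\cup\{f\}$ must be used to find a second vertex outside $N_G[a]\cap N_G[b]$). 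Third, the vertices of role~$3$: $V(G)-(\{a,b\}\cup G_0)$ and $\overline{N_G(a)\cap N_G(b)}$. For $x\in N_G(a)\cap N_G(b)$, $p(a)=2$ and $p(\overline x)=3$ give $\{2,3\}$. For $x\notin N_G(a)\cap N_G(b)$, $x\ne a,b$: here $p(\overline x)=2$, and $x$ is not isolated, so it has a neighbour $y$; the condition $N_G(f)=\{c\}$ together with $V(G)\ne N_G[a]\cup\{f\}$ lets us pick $y\ne a,b$ (and $y$ not the unique leaf, or if $x=c$ use that $c$ has non-leaf neighbours $a,b$ and also a neighbour of role~$3$), so $p(y)=3$, giving $\{2,3\}$. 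For $\overline x$ with $x\in N_G(a)\cap N_G(b)$: $p(x)=3$ and, by the "$cy\in E(G)$" hypothesis applied to show $2\in p(N(\overline x))$, i.e. $\overline f\in N(\overline x)$ since $f\notin N_G[a]\cap N_G[b]$ and $x\ne f$ (as $x\in N_G(a)$ but $f\notin N_G(a)$), we get role~$2$, hence $\{2,3\}=N_R(3)$.

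The main obstacle will be the role-$3$ vertices $x\in V(G)$ lying outside $N_G(a)\cap N_G(b)$: confirming that such an $x$ always has a neighbour of role~$3$ is exactly where the three hypotheses ($G$ has the single leaf $f$ with $N_G(f)=\{c\}$; $V(G)\ne N_G[a]\cup\{f\}$; and $cy\in E(G)$ for every $y\notin N_G[a]\cap N_G[b]$) must all be combined. The delicate subcase is $x=c$: then $c$ is a role-$3$ vertex whose only forced high-role neighbour among $\{a,b\}$ has role~$2$, so one must produce a separate neighbour of $c$ with role~$3$; the hypothesis $V(G)\ne N_G[a]\cup\{f\}$ furnishes a vertex $w\notin N_G[a]\cup\{f\}$, and because every such $w\notin N_G[a]\cap N_G[b]$ satisfies $cw\in E(G)$, we get $w\in N_G(c)$; since $w\notin N_G[a]$ we have $w\ne a,b$ and $w$ is not a leaf-neighbour of $a$ or $b$, so $p(w)=3$ unless $w\in G_0$ — but $G$ has no isolated vertices, so $p(w)=3$ and $c$ is fine. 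The remaining bookkeeping (role-$2$ vertices $\overline x$ needing a role-$2$ neighbour, handled by $\overline f$; role-$1$ vertices, handled trivially) is routine once one observes that $f\notin N_G[a]$, hence $\overline f$ is adjacent in $\overline G$ to every $\overline x$ with $x\in N_G[a]$, in particular to all the role-$3$ vertices $\overline x$ with $x\in N_G(a)\cap N_G(b)$. Assembling these observations yields that $p$ is a valid $R_{3,4}$-role assignment, completing the proof.
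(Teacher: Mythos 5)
Your proposed assignment does not work, and the failure occurs exactly at the configuration this lemma was created to handle. Under your map, $\overline{c}$ receives role~$3$ (since $c\in N_G(a)\cap N_G(b)$), so for consistency of the role-$3$ class it needs a neighbour of role~$2$, i.e.\ some $\overline{y}$ with $y\notin N_G[a]\cap N_G[b]$ and $cy\notin E(G)$. But the lemma's hypothesis says precisely that $cy\in E(G)$ for \emph{every} $y\notin N_G[a]\cap N_G[b]$, so every non-neighbour of $c$ lies in $N_G(a)\cap N_G(b)$ and hence every neighbour of $\overline{c}$ in $\overline{G}$ has role~$3$; thus $p(N(\overline{c}))\subseteq\{3\}$, while e.g.\ $p(N(f))=\{2,3\}$, so $p$ is not a role assignment. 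Your verification glossed over this: you argued ``$\overline{f}\in N(\overline{x})$ since $x\neq f$'', but $\overline{f}\in N(\overline{x})$ requires $xf\notin E(G)$, which fails for $x=c$ (the unique neighbour of $f$). Two further slips point the same way: you state $N_{R_{3,4}}(2)=\{1,3\}$, yet your own map makes the adjacent vertices $a,b$ both role~$2$ (in fact $R_{3,4}$ has $N(2)=\{1,2,3\}$, as the verifications in Lemmas~\ref{lemma:connected_semiisolated_notbipartite_withtriangle_with1ornoverticalleaf} and~\ref{lemma:naobipartido_naosplit_maximum clique} show); and your claim that ``$p(c)=3$ appears'' in $p(N(\overline{x}))$ confuses neighbours of $x$ with neighbours of $\overline{x}$ (the only cross edge at $\overline{x}$ is $x\overline{x}$), though that particular point is harmless since $p(x)=3$ anyway.

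The gap is not patchable within the template of Lemmas~\ref{lemma:connected_semiisolated_notbipartite_withtriangle_with1ornoverticalleaf} and~\ref{lemma:connection_semiisolated_naobipartite_withtriangle_2leaves}: this lemma is invoked exactly when that template's hypothesis fails because of the dominating behaviour of $c$. The paper's proof uses a different assignment built around $a$ and $f$: it sets $p(a)=p(f)=1$, $p(x)=2$ for $x\in N_G(a)$, $p(x)=3$ otherwise, and on the complement side $p(\overline{x})=1$ for $x$ in the set $C=\{x:N_G[x]=V(G)-\{f\}\}$, $p(\overline{a})=p(\overline{f})=2$, and $3$ otherwise, using the vertex $u\in V(G)-(N_G[a]\cup\{f\})$ guaranteed by $V(G)\neq N_G[a]\cup\{f\}$; it also splits off the case where $c$ is universal, reducing it to Lemma~\ref{lemma:nonsplit_noleaf_comisolated} applied to $\overline{G}$. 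If you want to salvage your write-up, you need an assignment of this different shape (or some other construction that gives $\overline{c}$ a legitimate neighbourhood), not the clique-based one.
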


\begin{proof} We denote by $C=\{x\in V(G) \mid N_{G}[x]=V(G)-\{f\}\}$. Observe that, for all $x \in C$, $\overline{x}$ is a leaf in $\overline{G}$. Also, $C$ is a possibly empty clique and $a,f \not \in C$. We define $p:V(G\overline{G})\rightarrow \{1,2,3\}$, as follows. For $x\in V(G):$

\begin{minipage}[c]{0.4\linewidth}
\vspace{-0.2cm}
$$p(x)=
\begin{cases}
1, \mbox{ if } x=a \mbox{ or } x=f;\\
2, \mbox{ if } x\in N_{G}(a); \\
3, \mbox{ otherwise. }
\end{cases}$$
\end{minipage}
and
\hfill
\begin{minipage}[c]{0.7\linewidth}
%To $\overline{x} \in V(\overline{G}):$
$$p(\overline{x})=
\begin{cases}
1, \mbox{ if } x\in C; \\
2, \mbox{ if } x = a \mbox{ or } x=f;\\
3, \mbox{ otherwise. }
\end{cases}$$
\end{minipage}

\vspace{0.2cm}
We consider $u\in V(G)-(N_{G}[a]\cup \{f\})$. We note that $u\not \in C$, $p(u)=3$ and $p(\overline{u})=3$. In view of Lemma~\ref{lemma:nonsplit_noleaf_comisolated}, we emphasize that, as there is a clique in $G$, we have $G \not \simeq K_{1, n}$ and thus $\overline{G} \not \simeq K_1 \cup K_n $. On the other hand, $\{\overline{a}, \overline{u}, \overline{f}\}$ is a clique in $\overline{G}$. If $c$ is a universal vertex, then $\overline{c}$ is an isolated vertex in $\overline{G}$ and the result follows from Lemma~\ref{lemma:nonsplit_noleaf_comisolated}. Therefore, suppose that $c$ is not a universal vertex and we show that $p$ is a $R_{3,4}$-role assignment.

The vertices with role~1 are $a$, $f$ or those belonging to $\overline{C}$. For the vertex $a$, we have that $p(N(a)) = \{2\}$. For the vertex $f$, we have that $N(f)= \{\overline{f}, c\}$ and $p(N(f))=\{2\}$. Let $\overline{x}$, with $x\in C$, we have that $N(\overline{x})=\{x, \overline{f}\}$ and $C \subseteq N_{G} (a)$, so $p(N(\overline{x}))=\{2\}$.

The vertices with role~2 are those belonging to $N_{G}(a)$ or the vertices $\overline{a}$ and $\overline{f}$. Let $x \in N_{G}(a)$, we have $p(a)=1$. If $x \in C$, then we have $p(b)=2$ and $p(u)=3$. Otherwise, $p(\overline{x})=3$ and it remains to show a neighbor of $x$ with role~2. If $x\in N_{G}(b)$, then $p(b)=2$. Otherwise, by the hypotheses, $cx\in E(G)$ and $2\in p(N(x))$, therefore, $p(N(x))=\{1,2,3\}$. For the vertex $\overline{a}$, we have $p(a)=1$, $p(\overline{f})=2$ and $p(\overline{u})=3$. For the vertex $\overline{f}$, we have $p(f)=1$, $p(\overline{a})=2$ and $p(\overline{u})=3$.

The vertices with role~3 are those belonging to $V(G)-(N_{G}[a]\cup \{f\})$ or to $\overline{V(G)-(\{a,f \}\cup C)}$. Let $x\in V(G)-(N_{G}[a]\cup \{f\})$. So, $x \not \in C$ and $p(\overline{x})=3$. By the hypothesis, $cx \in E(G)$ and $2\in p(N(x))$. Let $\overline{x}$, with $x\in V(G)-(\{a,f\}\cup C)$. If $x \in N_{G}(a)$, then $p(x)=2$. We note that the vertex $c$ is not universal. As $x \not \in C$, there is $y\in V(G), y\neq f$, $xy\not \in E(G)$. Clearly, $y\neq a$ and $y \not \in C$. So, $p(\overline{y})=3$ and $3\in p(N(\overline{x}))$. If $x\not \in N_{G}(a),$ then $p(x)=3$ and $p(\overline{a})=2$. \end{proof}

 The next lemma contemplates similar conditions to the previous lemma.

\begin{lemma} \label{lemma:nobipartite_with-vertex_semiisolated_leaf_withclick} 
Let $G$ be a graph with a single leaf $f$ and no isolated vertices. 
%Let $G$ be an isolated vertex less graph with a single leaf $f$. 
Let $\{a,b,d\}$ be a clique of $G$, such that $V(G)=(N_{G}[a] \cap N_{G}[b]) \cup \{f \}$ and $N_{G}(f)=\{d\}$. If $d$ is not a universal vertex, then the complementary prism of $G$ has a $R_{3,4}$-role assignment. \end{lemma}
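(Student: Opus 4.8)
The plan is to combine a structural analysis of $G$ with an explicit $R_{3,4}$-role assignment, handling one degenerate case by passing to the complement. First I would extract the shape of $G$. Since $N_{G}(f)=\{d\}$ and $f\neq a,b$, we get $f\notin N_{G}[a]\cap N_{G}[b]$, so the hypothesis $V(G)=(N_{G}[a]\cap N_{G}[b])\cup\{f\}$ forces $N_{G}[a]=N_{G}[b]=V(G)\setminus\{f\}$; thus $a$ and $b$ are true twins, each adjacent to every vertex of $G$ other than $f$. Put $S=V(G)\setminus\{a,b,d,f\}$, $S_{1}=N_{G}(d)\cap S$ and $S_{2}=S\setminus N_{G}(d)$. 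Every vertex of $S$ is adjacent to both $a$ and $b$, is non-adjacent to $f$, and has its remaining neighbours (if any) inside $S\cup\{d\}$; since $d$ is not universal we have $S_{2}\neq\emptyset$, so $|V(G)|\geq 5$. In the complement, $\overline{a},\overline{b}$ are leaves of $\overline{G}$ with the single neighbour $\overline{f}$, and $\overline{f}$ is adjacent in $\overline{G}$ to every vertex except $\overline{d}$.

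The core step is the case $S_{1}\neq\emptyset$, for which I would exhibit an $R_{3,4}$-role assignment directly; recall $R_{3,4}$ is the path $1\!-\!2\!-\!3$ with loops at $2$ and $3$, so $N_{R_{3,4}}(1)=\{2\}$, $N_{R_{3,4}}(2)=\{1,2,3\}$ and $N_{R_{3,4}}(3)=\{2,3\}$. Let $I=\{s\in S_{2}\mid N_{G}(s)\subseteq\{a,b\}\}$ and define $p$ by: role $1$ on $\{\overline{a},\overline{b},f,\overline{d}\}\cup I$; role $2$ on $\{a,b,d,\overline{f}\}\cup\{\overline{s}\mid s\in S_{2}\}$; role $3$ on $(S\setminus I)\cup\{\overline{s}\mid s\in S_{1}\}$. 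The verification is vertex-type by vertex-type: the vertices $\overline{a},\overline{b},f,\overline{d}$ and each $s\in I$ have all their neighbours among $\{a,b,d,\overline{f}\}\cup\{\overline{s}\mid s\in S_{2}\}$, all of role $2$; $a,b$ see role $1$ (via $\overline{a}$), role $2$ (each other, and $d$) and role $3$ (any vertex of $S_{1}$, using $S_{1}\neq\emptyset$); $d$ and $\overline{f}$ see role $1$ (via $f$, resp. $\overline{a}$), role $2$ (via $a$) and role $3$ (a vertex of $S_{1}$, resp. some $\overline{s}$ with $s\in S_{1}$); each $\overline{s}$ with $s\in S_{2}$ sees role $1$ through its neighbour $\overline{d}$ — this is precisely where $s\notin N_{G}(d)$ is used — together with $\overline{f}$ (role $2$) and either $s$ or some $\overline{s'}$, $s'\in S_{1}$ (role $3$); and every role-$3$ vertex of $G$ sees $a,b$ (role $2$) and a neighbour of role $3$ while acquiring no role-$1$ neighbour (here one uses that a vertex of $I$ has no neighbour in $S$). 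For the remaining case $S_{1}=\emptyset$, i.e. $\deg_{G}(d)=3$, I would instead work inside $\overline{G}$, whose complementary prism is again $G\overline{G}$: if $S$ is not a clique of $G$ then $\overline{G}$ contains the triangle $\{\overline{f},\overline{s_{1}},\overline{s_{2}}\}$ for a non-adjacent pair $s_{1},s_{2}\in S$, and this triangle has the two leaf neighbours $\overline{a},\overline{b}$, so Lemma~\ref{lemma:connection_semiisolated_naobipartite_withtriangle_2leaves} applies to $\overline{G}$; whereas if $S$ is a clique of $G$ then $\overline{G}$ is bipartite, connected and not complete bipartite, so Theorem~\ref{finaltheorem:bipartite_not_connected_seuniversal_howoremisolated1} applies to $\overline{G}$.

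The step I expect to be the main obstacle is the coordination of the three roles over $\overline{V(G)}$ in the core case. The vertex $\overline{f}$ cannot receive role $1$ (it has a neighbour, $\overline{a}$, of degree $2$, which then could not carry role $2$), and once $\overline{f}$ gets role $2$ it needs a role-$2$ neighbour; its neighbours $\overline{a},\overline{b},f$ all have degree $2$ and cannot be role $2$, so some $\overline{s}\in\overline{S}$ must be role $2$, and then $\overline{s}$ in turn requires a role-$1$ neighbour — the only available candidate is $\overline{d}$, which forces $\overline{d}\,\overline{s}\in E(\overline{G})$, i.e. $s\notin N_{G}(d)$. This is exactly the point at which the hypothesis that $d$ is not universal enters, and making the assignment consistent for all of $\overline{S}$ simultaneously (while keeping every role-$3$ vertex free of role-$1$ neighbours and keeping $d$ and $\overline{f}$ supplied with a role-$3$ neighbour, which needs $S_{1}\neq\emptyset$) is what dictates both the precise definition of $p$ above and the separate treatment of $S_{1}=\emptyset$; the rest is routine case checking.
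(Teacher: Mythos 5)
Your argument is correct, but it takes a genuinely different route from the paper's. The paper proves the lemma with a single explicit assignment and no case analysis: $p(f)=1$, $p(a)=p(d)=2$, every other vertex of $G$ (including $b$) gets role~3, while $p(\overline{a})=p(\overline{d})=1$, $p(\overline{f})=2$, $p(\overline{x})=2$ for $x\notin N_{G}[d]$ and $p(\overline{x})=3$ for $x\in N_{G}(d)\setminus\{a,f\}$. Assigning role~3 to $b$ (and hence to $\overline{b}$, since $b\in N_{G}(d)$) breaks the symmetry between $a$ and $b$ and supplies $a$, $d$ and $\overline{f}$ with role-3 neighbours even when $N_{G}(d)\cap S=\emptyset$; the non-universality of $d$ enters only to give $\overline{f}$ a role-2 neighbour in $\overline{S_{2}}$. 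Because you keep $a$ and $b$ symmetric (both role~2), your assignment genuinely needs $S_{1}\neq\emptyset$, which forces the separate treatment of $S_{1}=\emptyset$ via the complement; that move is legitimate, since $G\overline{G}\cong \overline{G}\,\overline{\overline{G}}$, and is in the same spirit as reductions the paper itself performs elsewhere. I checked your main-case assignment and it is valid, with the set $I$ and the vertex $\overline{d}$ playing exactly the roles you describe, and your sub-case ``$S$ not a clique'' correctly invokes Lemma~\ref{lemma:connection_semiisolated_naobipartite_withtriangle_2leaves} in $\overline{G}$, which still yields an $R_{3,4}$-role assignment of $G\overline{G}$.

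The one point where you fall short of the literal statement is the sub-case $S_{1}=\emptyset$ with $S$ a clique: Theorem~\ref{finaltheorem:bipartite_not_connected_seuniversal_howoremisolated1} only guarantees \emph{some} $3$-role assignment, and for your $\overline{G}$ (connected bipartite with smaller side $\{\overline{d},\overline{f}\}$) its proof in fact routes through Lemma~\ref{lemma:bipartite_graph_with_universal_vertice} and produces an $R_{3,2}$-role assignment, not the $R_{3,4}$-role assignment the lemma claims. This is harmless for the way the lemma is used later in the paper (only a $3$-role assignment is needed there), and the statement itself is fine --- the paper's uniform assignment shows $R_{3,4}$ is attainable in that sub-case as well --- but as written your proof establishes a slightly weaker conclusion in that one configuration.
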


\begin{proof} We define $p:V(G\overline{G})\rightarrow \{1,2,3\}$, as follows. For $x\in V(G):$

\begin{minipage}[c]{0.4\linewidth}
\vspace{-0.2cm}
$$p(x)=
\begin{cases}
1, \mbox{ if } x= f;\\
2, \mbox{ if } x=a \mbox{ or } x=d; \\
3, \mbox{ otherwise. }
\end{cases}$$
\end{minipage}
and
\hfill
\begin{minipage}[c]{0.5\linewidth}
$$p(\overline{x})=
\begin{cases}
1, \mbox{ if } x=a \mbox{ or } x=d; \\
2, \mbox{ if } x = f \mbox{ or } x \not \in N_{G}[d];\\
3, \mbox{ if } x \in N_{G}(d)-\{a,f\}.
\end{cases}$$
\end{minipage}
\vspace{0.2cm}
One can easily see $p$ is a $R_{3,4}$-role assignment.\end{proof}

%The vertices with role~1 are $f, \overline{a}$ and $\overline{d}$. For the vertex $f$, we have that $p(N(f))= p(\{\overline{f},d\})=\{2\}$. For the vertex $\overline{a}$, we have that $p(N(\overline{a}))=p(\{a,\overline{f}\})=\{2\}$. For the vertex $\overline{d}$, we have that $N_{G}(\overline{d})=\{d\} \cup \overline{V(G)-N_{G}[d]}$ , so $p(N(\overline{d}))=2$.

%The vertices with role~2 are $a,d,$ and $\overline{f}$ or those belonging to $ \overline{V(G)-N_{G}[d]}$. For the vertex $a$, we have $p(\overline{a})=1$, $p(d)=2$ and $p(b)=3$. For the vertex $d$, we have $p(\overline{d})=1$, $p(a)=2$ and $p(b)=3$. For the vertex $\overline{f}$, we have $p(f)=1$, $p(\overline{b})=3$ and we know that $V(G)\neq N_{G}[d] $, we guarantee that $2 \in p(N(\overline{f}))$. Let $\overline{x}$, with $x\in V(G)-N_{G}[d]$, we have $p(\overline{d})=1$, $p(x)=3$ and $p(\overline{f})=2$.

%The vertices with role~3 are those belonging to $V(G)-\{a,d,f\}$ or to $\overline{N_{G}(d)-\{a, f\}}$. Let $ x \in V(G)$, with $x\neq a,d, f$. If $x \in N_{G}(d)$, then $p(\overline{x})=3$ and $p(d)=2$. Otherwise, $x \not \in N_{G}(d)$, we have $p(\overline{x})=2$ and $p(b)=3$. Let $\overline{x}$, with $x \in N_{G}(d)$, $x \neq a,f$ we have $p(x)=3$ and $p(\overline{f}) =2$. \end{proof}

%The next lemma closes the possible cases for the desired result, when $G$ has leaves.
The next lemma deals with the remaining cases for the desired result when $G$ has leaves.

\begin{lemma}\label{lemma:nobipartite_withfolhas_independentset1} Let $G$ be a graph. Let $a,b,c,d,f$ be vertices of $G$ such that $\{a,b,c\}$ is a clique, $f$ is a leaf with $N_{G}(f) =\{d\}$, $N_{G}(d)=\{a,f\}$ and $V(G)=N_{G}[a]\cup \{f\}$. The complementary prism of $G$ has a $R_{3,4}$-role assignment. \end{lemma}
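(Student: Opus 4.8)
The plan is to construct an explicit $R_{3,4}$-role assignment of $G\overline{G}$ by hand, in the style of the preceding lemmas, but — and this is where one must break from the earlier pattern — rooted at the triangle vertex $c$ on the $G$-side rather than at $a$. The first step is to extract the structural facts forced by the hypotheses: since $N_G(f)=\{d\}$ and $N_G(d)=\{a,f\}$, neither $b$ nor $c$ is adjacent to $d$ or to $f$; hence $c$ is not universal, $\{a,b\}\subseteq N_G(c)$, $N_G[c]\cap\{d,f\}=\emptyset$, and $f$ is the unique vertex of $V(G)\setminus N_G[a]$. Also $\overline{a}$ has precisely the neighbours $a$ and $\overline{f}$ in $G\overline{G}$, while $\overline{f}$ is adjacent (in $\overline{G}$) to $\overline{y}$ for every $y\notin\{d,f\}$.

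Next I would define $p:V(G\overline{G})\to\{1,2,3\}$ by: for $x\in V(G)$, $p(x)=1$ if $x=c$, $p(x)=2$ if $x\in N_G(c)$, and $p(x)=3$ otherwise; for $\overline{x}$, $p(\overline{x})=1$ if $x=a$, $p(\overline{x})=2$ if $x\in\{c,f\}$, and $p(\overline{x})=3$ otherwise. Thus the role-$1$ class is $\{c,\overline{a}\}$, the role-$2$ class is $N_G(c)\cup\{\overline{c},\overline{f}\}$, and the role-$3$ class is $(V(G)\setminus N_G[c])\cup\{\overline{x}:x\in N_G(a)\setminus\{c\}\}$; all three are nonempty (they contain $c$, $a$, and $d$ respectively), so $p$ is surjective.

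It then remains to check $p(N(v))=N_{R_{3,4}}(p(v))$ for each vertex $v$, where $N_{R_{3,4}}(1)=\{2\}$, $N_{R_{3,4}}(2)=\{1,2,3\}$, $N_{R_{3,4}}(3)=\{2,3\}$; I would do this class by class. Role $1$: $N(c)=N_G(c)\cup\{\overline{c}\}$ and $N(\overline{a})=\{a,\overline{f}\}$ are entirely of role $2$, and $c\not\sim\overline{a}$, so the class is independent. Role $2$: $a$ sees $\overline{a},b,d$ (roles $1,2,3$); every other $y\in N_G(c)$ sees $c,a,\overline{y}$ (roles $1,2,3$, using $y\in N_G(a)\setminus\{c\}$); $\overline{c}$ sees $c,\overline{f},\overline{d}$ (roles $1,2,3$, using $c\not\sim d$); and $\overline{f}$ sees $\overline{a},\overline{c},\overline{b}$ (roles $1,2,3$). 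Role $3$: $f$ sees $d$ (role $3$) and $\overline{f}$ (role $2$); $d$ sees $\overline{d}$ (role $3$) and $a$ (role $2$); any remaining role-$3$ vertex $x\in N_G(a)$ sees $\overline{x}$ (role $3$) and $a$ (role $2$); and each $\overline{x}$ with $x\in N_G(a)\setminus\{c\}$ sees $d$ or $\overline{d}$ (role $3$, since $d\notin N_G[x]$ when $x\ne d$) and a role-$2$ vertex ($x$ itself when $x\in N_G(c)$, otherwise $\overline{c}$, using $c\notin N_G[x]$). Finally, the essential uniform point: no role-$3$ vertex has a role-$1$ neighbour, because the role-$1$ vertices are $c$ and $\overline{a}$, no role-$3$ $G$-vertex is adjacent to $c$ (it lies outside $N_G[c]$) or to $\overline{a}$ (whose neighbours are $a,\overline{f}$), and each role-$3$ bar vertex $\overline{x}$ satisfies $\overline{x}\not\sim\overline{a}$ (as $x\sim a$) and $\overline{x}\not\sim c$.

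The one real obstacle is finding this assignment: the ``radial from $a$'' scheme used in Lemmas~\ref{lemma:nobipartite_semi-isolated_clique_at least1verticalleaf}--\ref{lemma:nobipartite_withisolated_withclick} breaks down here, since with $V(G)=N_G[a]\cup\{f\}$ the vertex $\overline{f}$ then has no role-$2$ neighbour, so one is forced to put a triangle vertex ($c$) into role $1$ on the $G$-side while keeping $\overline{a}$ in role $1$ on the $\overline{G}$-side. Once that choice is made the verification is mechanical, the delicate spots being the high-degree vertex $\overline{f}$ and the repeated use of the non-adjacencies $c\not\sim d$, $b\not\sim d$, and $c\not\sim f$ that $N_G(d)=\{a,f\}$ and $N_G(f)=\{d\}$ force.
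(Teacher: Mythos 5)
Your assignment is exactly the one the paper uses (role~1 on $c$ and $\overline{a}$, role~2 on $N_G(c)\cup\{\overline{c},\overline{f}\}$, role~3 elsewhere), so this is the same approach; the paper merely states that the verification is easy, while you carry it out, correctly. Your class-by-class check (including the key non-adjacencies $c\not\sim d$, $c\not\sim f$, $b\not\sim d$ and the fact that no role-3 vertex meets $c$ or $\overline{a}$) is sound.
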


\begin{proof} We define $p:V(G\overline{G})\rightarrow \{1,2,3\}$, as follows. For $x\in V(G):$

\begin{minipage}[c]{0.5\linewidth}
$$p(x)=
\begin{cases}
1, \mbox{ if } x=c;\\
2, \mbox{ if } x\in N_{G}(c); \\
3, \mbox{ otherwise. }
\end{cases}$$
\end{minipage}
and
\hfill
\begin{minipage}[c]{0.5\linewidth}
$$p(\overline{x})=
\begin{cases}
1, \mbox{ if } x=a; \\
2, \mbox{ if } x = c \mbox{ or } x=f;\\
3, \mbox{ otherwise. }
\end{cases}$$
\end{minipage}

Observe that $p(d)=3$, $p(\overline{d})=3$. With this in mind, we can easily verify that $p$ is a $R_{3,4}$-role assignment. \end{proof}

%The vertices with role~1 are $c$ and $\overline{a}$. For the vertex $c$, we have that $N(c)=\{\overline{c}\} \cup N_G(c)$, so $p(N(c))=\{2\}$. For the vertex $\overline{a}$, we have that $p(N(\overline{a}))= p(\{a, \overline{f}\})=\{2\}$.

%The vertices with role~2 are those belonging to $N_{G}(c)$ or the vertices $\overline{c}$ and $\overline{f}$. Let $x\in N_{G}(c)$, so $p(c)=1$. If $x\neq a$, then we have $p(\overline{x})=3$, $p(a)=2$, since $x \in N_{G}(a)$. For the vertex $a$, we have $p(\overline{a})=1$, $p(b)=2$ and $p(d)=3$. For the vertex $\overline{c}$, we have $p(c)=1$, $p(\overline{d})=3$ and $p(\overline{f})=2$. For the vertex $\overline{f}$, we have $p(f)=3$, $p(\overline{a})=1$ and $p(\overline{c})=2$.

%The vertices with role~3 are those belonging to $V(G)-(N_{G}[c])$ or to $\overline{V(G)-\{a, c, f\}}$. Let $x \in V(G)-(N_{G}[c]) $. For the vertex $f$, $p(\overline{f})=2$ and $p(d)=3$. If $x\neq f$, then $p(\overline{x})=3$ and $p(a)=2$. Let $\overline{x}$, with $x\in V(G)-(\{a, c, f\})$, we observe that $x\in N_{G}(a)$. If $x\in N_{G}(c)$, then $p(x)=2$ and $p(\overline{d})=3$. Otherwise, $p(\overline{c})=2$ and $p(x)=3$. \end{proof}

The following theorem shows that the complementary prisms of a non-bipartite graph with leaves, such that the complementary graph is non-bipartite, have a $3$-role assignment.

\begin{theorem} \label{theorem:nobipartite_withverticleaf} Let $G$ be a non-bipartite graph with at least one leaf and such that $\overline{G}$ is non-bipartite. The complementary prism of $G$ has a $3$-role assignment. \end{theorem}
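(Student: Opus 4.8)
The plan is to reduce to the lemmas already established in this section by analyzing the structure near a leaf of $G$. Let $f$ be a leaf of $G$ with $N_G(f)=\{d\}$. Since $\overline{G}$ is non-bipartite and $G$ is non-bipartite, the hypotheses of Theorems~\ref{theorem:nobipartite_withverticeisolated_withoutclick} and~\ref{theorem:nobipartite_withverticleaf} together with Lemma~\ref{lemma:nonsplit_noleaf_comisolated} let us assume $G$ has a clique $\{a,b,c\}$ of order~$3$ and no isolated vertices. First I would dispose of the case where the clique has two or more leaf neighbors via Lemma~\ref{lemma:connection_semiisolated_naobipartite_withtriangle_2leaves}; so I may assume every order-$3$ clique has at most one leaf neighbor. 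Since $G$ has a leaf $f$, some order-$3$ clique meets $N_G(f)=\{d\}$, and after relabeling I may take $d=c$ and arrange that $a$ and $b$ have no leaf neighbors.

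Next I would split on whether the condition of Lemma~\ref{lemma:connected_semiisolated_notbipartite_withtriangle_with1ornoverticalleaf} holds: for every $x\neq a,b$ is there some $y\in V(G)-(N_G[a]\cap N_G[b])$, $y\neq x$, with $xy\notin E(G)$? If yes, that lemma immediately gives a $R_{3,4}$-role assignment and we are done. If no, then there is a "bad" vertex $x$, which forces strong structure: essentially $V(G)$ collapses to $(N_G[a]\cap N_G[b])\cup\{f\}$ (or to $N_G[a]\cup\{f\}$), so that $a$ and $b$ are nearly universal. Here I would branch according to the position of the leaf's neighbor $d$ and whether $d$ is a universal vertex. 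If $d$ is universal, then $\overline{d}$ is isolated in $\overline{G}$ and Lemma~\ref{lemma:nonsplit_noleaf_comisolated} applies (using that $\overline{G}$ has a clique of order~$3$, which follows from $\overline{G}$ being non-bipartite—I would need a small argument that a non-bipartite graph on enough vertices contains a triangle, or more precisely that since $\overline{G}$ contains an odd cycle and the relevant order is small, it must contain $K_3$, handling tiny cases directly). If $d$ is not universal, I would apply Lemma~\ref{lemma:nobipartite_Noleaf_semiisolated_noleafneighbor} or Lemma~\ref{lemma:nobipartite_with-vertex_semiisolated_leaf_withclick} depending on whether $N_G(d)=\{a,f\}$ or $N_G(d)$ is larger; the degenerate configuration $N_G(d)=\{a,f\}$ with $V(G)=N_G[a]\cup\{f\}$ is exactly Lemma~\ref{lemma:nobipartite_withfolhas_independentset1}.

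The main obstacle I expect is the bookkeeping in the "no" branch: one must verify that the structural collapse forced by a bad vertex $x$ really does land in the precise hypotheses of one of Lemmas~\ref{lemma:nobipartite_Noleaf_semiisolated_noleafneighbor},~\ref{lemma:nobipartite_with-vertex_semiisolated_leaf_withclick}, or~\ref{lemma:nobipartite_withfolhas_independentset1}, and in particular that the side condition "$\overline{G}$ is non-bipartite" is used to rule out the one or two small sporadic graphs (analogous to how $C_5$ appeared in Theorem~\ref{theorem:nobipartite_withverticeisolated_withoutclick}) where $\overline{G}$ would be bipartite and no construction works. A careful case list on $|V(G)|$ for the small cases, combined with the observation that once $\overline{G}$ is non-bipartite it contains a triangle (so Lemma~\ref{lemma:nonsplit_noleaf_comisolated} is available on the complement side), should close all remaining configurations.
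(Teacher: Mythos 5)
Your overall skeleton (triangle $\{a,b,c\}$, reduce leaf-neighbor counts via Lemma~\ref{lemma:connection_semiisolated_naobipartite_withtriangle_2leaves}, then test the hypothesis of Lemma~\ref{lemma:connected_semiisolated_notbipartite_withtriangle_with1ornoverticalleaf} and analyze the collapse) matches the paper, but there are two genuine gaps. The central one is the sentence ``some order-$3$ clique meets $N_G(f)=\{d\}$, and after relabeling I may take $d=c$'': nothing forces the leaf's neighbor $d$ to lie in a triangle. For instance $G\simeq K_n\cup K_{1,m}$ (or any $G$ in which $N_G(d)$ is independent) satisfies all hypotheses of the theorem, yet $d$ belongs to no clique of order~$3$. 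Your plan therefore silently drops the entire branch in which $d$ is in no triangle; in the paper this branch is where most of the work lies: one first uses the ``bad vertex'' from Lemma~\ref{lemma:connected_semiisolated_notbipartite_withtriangle_with1ornoverticalleaf} to show it must be $f$ or $d$, hence $V(G)=(N_G[a]\cap N_G[b])\cup N_G[d]$, and then, when $N_G(d)$ is independent, one needs Lemma~\ref{lemma:nobipartite_semi-isolated_clique_at least1verticalleaf} (if some neighbor of $d$ is not a leaf, leading to Lemma~\ref{lemma:nobipartite_withfolhas_independentset1}), and otherwise the component of $f$ is a star, $G\simeq G'\cup K_{1,m}$, which is settled either by Lemma~\ref{lemma:k1m,k1} (when $G'$ is complete) or by a contradiction obtained from re-applying Lemma~\ref{lemma:connected_semiisolated_notbipartite_withtriangle_with1ornoverticalleaf} to a triangle $\{a,b,u\}$. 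Your proposal never invokes Lemma~\ref{lemma:k1m,k1} or Lemma~\ref{lemma:nobipartite_semi-isolated_clique_at least1verticalleaf}, so these configurations are not covered.

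The second gap is your claimed ``small argument that a non-bipartite graph on enough vertices contains a triangle.'' That statement is false: odd cycles $C_{2k+1}$, the Petersen graph, and Mycielski graphs are triangle-free, non-bipartite, and arbitrarily large, so no size bound rescues it. The way the paper obtains a triangle in $\overline{G}$ (needed both to apply Lemma~\ref{lemma:nonsplit_noleaf_comisolated} on the complement side, ruling out universal vertices of $G$, and later to run the clique arguments) is not combinatorial but structural: since $G\overline{G}\simeq \overline{G}\,\overline{\overline{G}}$, one applies Theorem~\ref{theorem:nobipartite_withverticeisolated_withoutclick} to $\overline{G}$, so either the prism already has a $3$-role assignment and we are done, or $\overline{G}$ has a clique of order~$3$; similarly $\overline{G}\not\simeq K_1\cup K_n$ follows from $\overline{G}$ being non-bipartite, not from its order. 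You should replace your ``tiny cases'' argument by this reduction, and add the missing branch where $N_G(d)$ is an independent set, before the proof can be considered complete.
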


\begin{proof} By Theorem~\ref{theorem:nobipartite_withverticeisolated_withoutclick}, we can assume that $G$ and $\overline{G}$ have a clique of order~3. As $G$ and $\overline{G}$ are non-bipartite, neither is isomorphic to $K_{1,n}$, nor to $\overline{K_{1,n}}\simeq K_1 \cup K_n$. So, by Lemma~\ref{lemma:nonsplit_noleaf_comisolated}, we can assume that neither $G$ nor $\overline{G}$ has isolated vertices, that is, $G$ has neither isolated vertices nor universal vertices.

We consider $\{a,b,c\}$ a clique of $G$. By Lemma~\ref{lemma:connection_semiisolated_naobipartite_withtriangle_2leaves}, we can assume that $a$, $b$ have no leaf neighbors. By Lemma~\ref{lemma:connected_semiisolated_notbipartite_withtriangle_with1ornoverticalleaf}, we can assume that there is $x\neq a,b$, such that for all $y \in V(G)-(N_{G}[a] \cap N_{G}[b])$, $y\neq x$, $xy \in E(G)$. Let $f\in G_1$ be such that $N_{G}(f)=\{d\}$. As $f \not \in N_{G}[a]\cap N_{G}[b]$, we have $x=f$ or $x=d$. Observe that if $x=f$, then $V(G)-(N_{G}[a] \cap N_{G}[b])\subseteq \{d,f\}$ and the condition is also satisfied for $x=d$. Therefore, $dy\in E(G)$ for all $y \in V(G)-( N_{G}[a] \cap N_{G}[b])$, $y\neq d$, that is, $V(G)=( N_{G}[a] \cap N_{G}[b]) \cup N_{G} [d]$.

First, suppose that $c=d$. Observe that $N_{G}(G_1)=\{d\}$ and by Lemma~\ref{lemma:connection_semiisolated_naobipartite_withtriangle_2leaves}, we can assume that $G_1=\{f\}$. Remember that $G$ does not have a universal vertex. By the Lemma~\ref{lemma:nobipartite_Noleaf_semiisolated_noleafneighbor}, we can assume that $V(G)=N_{G}[a]\cup \{f\} =N_{G}[b] \cup \{f\}$ , that is, $V(G)=(N_{G}[a]\cap N_{G}[b]) \cup \{f\}$. The result follows from the Lemma~\ref{lemma:nobipartite_with-vertex_semiisolated_leaf_withclick}. Therefore, $d$ is not part of any clique of order~3, this is $N_{G}(d)$ is an independent set. Remember that $V(G)=(N_{G}[a] \cap N_{G}[b])\cup N_{G}(d).$

Let $u\in N_{G}(d)$. If $u \not \in G_1$, then, by Lemma~\ref{lemma:nobipartite_semi-isolated_clique_at least1verticalleaf}, we can assume that $V(G)=N_{G}[u]\cup \{f\}$. In particular, we can assume that $u=a$ and $b,c \in N_{G}(u)$. On the other hand, as $N_{G}(d)$ is an independent set, $N_{G}(d)=\{u,f\}$ and the result follows from the Lemma~\ref{lemma:nobipartite_withfolhas_independentset1} .

Therefore, $N_{G}(d)=G_1-\{d\}$ and the connected component containing $f$ is isomorphic to $K_{1,m}$ for some $m\geq 1$. By the above, $G \simeq G' \cup K_{1,m}$ where $G'=N_{G}[a]\cap N_{G}[b]$ and $m=|N_{G}( d)|$. If $G' \simeq K_n$, for some $n \geq 3$, then the conclusion follows from Lemma~\ref{lemma:k1m,k1}. Otherwise, there are $u,v \in V(G')$, such that $uv\not \in E(G)$. On the other hand, $\{a,b,u\}$ forms a clique. Again, by Lemma~\ref{lemma:connected_semiisolated_notbipartite_withtriangle_with1ornoverticalleaf}, we can assume that $dy\in E(G)$ for all $y \not \in N_{G}[a] \cap N_{G}[u]$ , with $y \neq d$. However, $v \not \in N_{G}[a]\cup N_{G}[u]$ and $dv \not \in E(G)$, a contradiction. \end{proof}

To prove the final theorem, we need a last lemma, which deals with a graph $G$ with no universal vertex, no isolated vertices, and no leaves.

\begin{lemma}\label{lemma:naobipartite_naoisomorphC5_N>=6_leafless} Let $G$ be a graph with no universal vertex and no isolated vertices such that neither $G$ nor $\overline{G}$ has leaves. If $\{a,b,c\}$ is a clique of $G$ and $f,d \in V(G)-\{a,b\}$, $f\neq d$, satisfies the following conditions:

\begin{enumerate}
\item for all $y\not \in N_{G}[a]\cap N_{G}[b],$ with $y\neq d$ we have $dy \in E(G)$;
\item $af\not \in E(G)$;
\item $(N_{G}(a) \cup N_{G}(f)) \cap N_{G}(d)\neq \emptyset$, \end{enumerate}
then the complementary prisms of $G$ has a $R_{3,4}$-role assignment. \end{lemma}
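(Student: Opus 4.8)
The plan is to exhibit an explicit $R_{3,4}$-role assignment, in the same spirit as the preceding lemmas of this section, by assigning role~2 to the designated ``clique part'' of $V(G)$, role~1 to the two distinguished vertices $a$ and $f$ (which play the role of a symmetric pair), and role~3 to everything else, and then to mirror this in $\overline{G}$ via a rule of the form $p(\overline{x})=1$ if $x\in\{a,f\}$, $p(\overline{x})=3$ if $x$ lies in the relevant common neighborhood, and $p(\overline{x})=2$ otherwise. Concretely I would first observe, exactly as in Theorem~\ref{theorem:nobipartite_withverticleaf}, that Condition~1 forces $V(G)=(N_{G}[a]\cap N_{G}[b])\cup N_{G}[d]$, so every vertex outside the ``core'' $N_{G}[a]\cap N_{G}[b]$ is a neighbor of $d$; this is what makes $d$ behave like a near-universal vertex relative to the rest and will be used repeatedly to supply the missing neighbors of role~2 or role~3 vertices.

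The key steps, in order, would be: (i) fix the candidate map $p$ as above, being careful about where $d$, $b$, $c$ and the core vertices land; (ii) check the role~1 vertices $a$, $f$, $\overline{x}$ for $x$ in the appropriate clique $C\subseteq V(\overline G)$ — here Condition~2 ($af\notin E(G)$) guarantees $a$ and $f$ are independent in $G$ so $p(N_G(a)),p(N_G(f))$ are controlled, and the no-leaf hypotheses on $G$ and $\overline{G}$ ensure these vertices actually have neighbors of the required role; (iii) check the role~2 vertices, where the main work lies, using Condition~1 to hit a role~3 neighbor via $d$ and using that $\{a,b,c\}$ is a clique to hit role~1 and role~2 neighbors of $a$; (iv) check the role~3 vertices, where Condition~3 ($(N_G(a)\cup N_G(f))\cap N_G(d)\neq\emptyset$) is exactly what is needed to certify that $\overline{d}$ (or $d$ itself) sees all three roles, i.e. it supplies the ``cross edge'' between the $a$-side and the $f$-side through $d$. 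Throughout, the absence of a universal vertex in $G$ keeps $\overline{G}$ free of isolated vertices, and the absence of isolated vertices in $G$ keeps role~3 vertices from being stranded.

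I expect the main obstacle to be the verification for the role~2 and role~3 vertices on the $\overline{G}$ side: one must show that each $\overline{x}$ with $p(\overline{x})=2$ or $3$ sees neighbors of every role in $N_R(2)=\{1,2,3\}$ resp.\ $N_R(3)=\{2,3\}$, and this is where a vertex $x$ could a priori be adjacent (in $\overline G$) to too few ``types''. The three numbered hypotheses are precisely tailored to close these gaps — Condition~1 handles the core/non-core split, Condition~2 gives the needed non-adjacency so that $a$ and $f$ can carry the same role, and Condition~3 provides the single edge through $d$ that connects the two sides — so the proof should reduce to a finite case analysis verifying $p(N(v))=N_R(p(v))$ for each of the (at most six) vertex types, invoking these conditions at the one or two points where the generic argument would otherwise fail. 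If a sub-case still resists (for instance if $c=d$, or if the core is a clique), I would split it off and reduce to one of the earlier lemmas of this section exactly as Theorem~\ref{theorem:nobipartite_withverticleaf} does.
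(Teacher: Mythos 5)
There is a genuine gap: the candidate map you sketch is not a valid $R_{3,4}$-role assignment. You assign role~1 to $a$ and $f$ in $G$ and simultaneously role~1 to $\overline{a}$ and $\overline{f}$ in $\overline{G}$; since $a\overline{a}$ and $f\overline{f}$ are matching edges of $G\overline{G}$, this would force $1\in N_R(1)$, while in $R_{3,4}$ we have $N_R(1)=\{2\}$ (equivalently, $p(N(a))$ would contain role~1, contradicting $p(a)=1$). You have imported the ``mirror'' rule from lemmas such as Lemma~\ref{lemma:nobipartite_withisolated_withclick}, but there the vertex that receives role~1 in $\overline{G}$ is the matched copy of a vertex receiving role~\emph{2} in $G$, never of a role-1 vertex. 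Likewise ``role~2 to the clique part'' is not the right choice: in the paper role~2 goes to $N_G(a)\cup N_G(f)$, which need not be a clique.

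The paper's proof fixes a different, fully specified map: $p(a)=p(f)=1$, $p(x)=2$ for $x\in N_G(a)\cup N_G(f)$ (well defined because Condition~2 gives $af\notin E(G)$), $p(x)=3$ otherwise, and on the complement side $p(\overline{a})=p(\overline{f})=2$ and $p(\overline{x})=3$ for all other $x$. Conditions~1 and~2 yield $df\in E(G)$, hence $p(d)=2$ and $p(\overline{d})=3$, and Condition~3 supplies $u\in(N_G(a)\cup N_G(f))\cap N_G(d)$ with $p(u)=2$, $p(\overline{u})=3$; these witnesses, together with the no-leaf and no-universal-vertex hypotheses, are exactly what make the vertex-by-vertex verification close. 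Your write-up never pins down a consistent assignment, defers the verification that constitutes the proof, and the fallback of ``reducing to an earlier lemma if a sub-case resists'' is not justified by anything in the hypotheses; as it stands the proposal does not establish the lemma.
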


\begin{proof} We define $p:V(G\overline{G})\rightarrow \{1,2,3\}$, as follows. For $x\in V(G):$

\begin{minipage}[c]{0.5\linewidth}
$$p(x)=
\begin{cases}
1, \mbox{ if } x=a \mbox{ or } x=f;\\
2, \mbox{ if } x \in N_{G}(a)\cup N_{G}(f); \\
3, \mbox{ otherwise. }
\end{cases}$$
\end{minipage}
and
\hfill
\begin{minipage}[c]{0.5\linewidth}
$$p(\overline{x})=
\begin{cases}
2, \mbox{ if } x=a \mbox{ or } x=f;\\
3, \mbox{ otherwise. }
\end{cases}$$
\end{minipage}
\vspace{0.2cm}
As $f \not \in N_{G}[a]$ we have, by Condition~1, that $df\in E(G)$. So, $p(d)=2$ and $p(\overline{d})=3$. We consider $u \in (N_{G}(a)\cup N_{G}(f))\cap N_{G}(d)$ and observe that, $p(u)=2$ and $p(\overline{u})=3$. In light of this information, we can easily conclude that  $p$ is a $R_{3,4}$-role assignment. \end{proof}

%The vertices with role~1 are $a$ and $f$. Let $x \in \{a,f\}$, we have $p(N(x))=\{2\}$.

%The vertices with role~2 are those belonging to $N_{G}(a)\cup N_{G}(f)$ or the vertices $\overline{a}$ and $\overline{f}$. Let $x\in N_{G}(a) \cup N_{G}(f)$. So, $1 \in p(N(x))$ and $p(\overline{x})=3$. It remains to show that $2\in p(N(x))$. For the vertex $b$, we have $p(c)=2$. For the vertex $d$, we have $p(u)=2$. If $x \in N_{G}(b)$, then $p(b)=2$. Otherwise, $x \not \in N_{G}[b]$, $x \neq d$, by Condition~1, we have that $dx\in E(G)$ and $p(d)=2$. Let $\overline{x}$, with $x\in \{a,f\}$, $p(x)=1$. By Condition~2, we have that $2\in p(N(\overline{x}))$. By the assumptions, $\overline{x}$ is not a leaf, so $3 \in p(N(\overline{x})).$

%The vertices with role~3 are those belonging to $V(G)-(N_{G}[a]\cup N_{G}[f])$ or to $\overline{V(G)-\{a, f\}}$. Let $x \in V(G)-(N_{G}[a]\cup N_{G}[f])$. As $df\in E(G)$ and $x\not \in N_{G}[a]\cap N_{G}[b]$, we have that $ x \neq d$ and $dx\in E( G)$ by Condition~1. So, $p(d) =2$ and $p(\overline{x})=3$. Let $\overline{x}$, with $x\in N_{G}(a) \cup N_{G}(f)$, we have $p(x)=2$. Since $\overline{x}$ is neither a leaf nor an isolated vertex in $\overline{G}$, there is $y\neq a,f$, such that $xy\not \in E(G)$ and $p (\overline{y})=3$. Let $\overline{x}$, with $x\not \in N_{G}[a]\cup N_{G}[f]$, so $p(\overline{a})=2$ and $p (x)=3$. \end{proof}

We end with Theorem~\ref{theorem_nobipartite_withoutleaf}, by showing that the complementary prisms of a non-bipartite graph, whose complementary graph is non-bipartite, has a $3$-role assignment.

\begin{theorem}\label{theorem_nobipartite_withoutleaf} Let $G$ be a graph, such that neither $G$ nor $\overline{G}$ are bipartite graphs. The complementary prism of $G$ has a $3$-role assignment. \end{theorem}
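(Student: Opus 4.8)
\emph{Proof plan.} The plan is to peel off cases using the results of Sections~\ref{section_5.6} and~\ref{section_5.8} until a single \emph{core} configuration remains, and then to invoke Lemma~\ref{lemma:naobipartite_naoisomorphC5_N>=6_leafless}. Since the complementary prism of $G$ and that of $\overline{G}$ are isomorphic, the hypothesis ``$G$ and $\overline{G}$ both non-bipartite'' is symmetric, so I may swap $G$ for $\overline{G}$ at any point. First, if $G$ has no triangle, Theorem~\ref{theorem:nobipartite_withverticeisolated_withoutclick} finishes; the same for $\overline{G}$. Hence I may assume both $G$ and $\overline{G}$ contain a clique of order~$3$. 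Next, if $G$ (or $\overline{G}$) has a leaf, Theorem~\ref{theorem:nobipartite_withverticleaf} finishes; hence I may assume neither $G$ nor $\overline{G}$ has a leaf. Finally, if $G$ has an isolated vertex then $G\not\simeq K_1\cup K_n$ (otherwise $\overline{G}\simeq K_{1,n}$ would be bipartite), so Lemma~\ref{lemma:nonsplit_noleaf_comisolated} finishes; applying this to $\overline{G}$ excludes universal vertices of $G$. After these reductions $G$ has no isolated vertex, no universal vertex and no leaf, $\overline{G}$ has no leaf (hence $\overline{G}$ also has no isolated or universal vertex), and both $G,\overline{G}$ contain a triangle. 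In particular every vertex has degree at least~$2$ in $G$ and in $\overline{G}$, whence $|V(G)|\geq 5$; and since the only $2$-regular graph on five vertices, $C_5$, has no triangle, in fact $|V(G)|\geq 6$.

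Now I would fix a triangle $\{a,b,c\}$ of $G$. As $G$ has no leaves, $a$ and $b$ have no leaf neighbours, so Lemma~\ref{lemma:connected_semiisolated_notbipartite_withtriangle_with1ornoverticalleaf} applies: either it yields an $R_{3,4}$-role assignment and we are done, or there is a vertex $d\neq a,b$ with $dy\in E(G)$ for every $y\in V(G)-(N_{G}[a]\cap N_{G}[b])$, $y\neq d$. Assume the latter; then $d$ already meets Condition~1 of Lemma~\ref{lemma:naobipartite_naoisomorphC5_N>=6_leafless}. Because $\deg_{\overline{G}}(a)\geq 2$, the vertex $a$ has at least two non-neighbours in $G$, none of them equal to $a$ or $b$; I pick $f$ to be one of them different from $d$. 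Then $f\in V(G)-\{a,b\}$, $f\neq d$ and $af\notin E(G)$, so Condition~2 holds; and since $af\notin E(G)$ and $f\neq a$ we have $f\notin N_{G}[a]\cap N_{G}[b]$, so Condition~1 forces $df\in E(G)$.

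What remains is Condition~3 of Lemma~\ref{lemma:naobipartite_naoisomorphC5_N>=6_leafless}, that $d$ has a neighbour in $N_{G}(a)\cup N_{G}(f)$, and this is where the real work lies. I would argue by cases on the adjacency of $d$ to the triangle $\{a,b,c\}$ and to the neighbours of $f$. If $d$ is adjacent to $b$ or to $c$ we are done, since $b,c\in N_{G}(a)$; so assume $d$ is adjacent to neither. One first shows $|V(G)-(N_{G}[a]\cap N_{G}[b])|\geq 2$ --- for if this set were $\{z\}$ then $N_{G}[a]$ would contain all of $V(G)-\{z\}$, and as $G$ has no universal vertex $z$ would be the unique non-neighbour of $a$, contradicting $\deg_{\overline{G}}(a)\geq 2$ --- so by Condition~1 the vertex $d$ has a neighbour besides $f$. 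Tracking these neighbours, together with an extra neighbour of $f$ (available since $\deg_{G}(f)\geq 2$) and the two or more available choices of $f$, and using Condition~1 repeatedly, one shows that the neighbourhoods of $a$, $d$ and $f$ cannot be mutually disjoint: a neighbour $w\neq d$ of $f$ that is non-adjacent to $d$ must lie in $N_{G}[a]\cap N_{G}[b]$ by Condition~1 and cannot equal $a$, hence $w\in N_{G}(a)\cap N_{G}(f)$, and iterating this either exhibits the desired common neighbour of $d$ or pins down a very restricted local structure (a degree-$2$ vertex whose closed neighbourhood is an induced path) that is dispatched by re-selecting the triangle, by one of the earlier lemmas of this section, or by running the whole argument for $\overline{G}$ instead (which has the same complementary prism). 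Once Conditions~1--3 hold, Lemma~\ref{lemma:naobipartite_naoisomorphC5_N>=6_leafless} produces an $R_{3,4}$-role assignment of the complementary prism of $G$, completing the proof. The main obstacle throughout is the verification of Condition~3: the reductions in the first two paragraphs are routine given the prior results, whereas forcing the common-neighbour condition in all surviving sub-configurations is the delicate part.
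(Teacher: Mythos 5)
Your reductions (triangle in both $G$ and $\overline{G}$ via Theorem~\ref{theorem:nobipartite_withverticeisolated_withoutclick}, no leaves via Theorem~\ref{theorem:nobipartite_withverticleaf}, no isolated or universal vertices via Lemma~\ref{lemma:nonsplit_noleaf_comisolated}) and your setup (the vertex $d$ coming from the failure of Lemma~\ref{lemma:connected_semiisolated_notbipartite_withtriangle_with1ornoverticalleaf}, then a non-neighbour $f\neq d$ of $a$) coincide with the paper's proof. The gap is in the only step you flag as ``the real work'': you try to force Condition~3 of Lemma~\ref{lemma:naobipartite_naoisomorphC5_N>=6_leafless}, but your iteration does not do that. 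When a neighbour $w\neq d$ of $f$ is non-adjacent to $d$, you correctly conclude $w\in N_{G}(a)\cap N_{G}(f)$; this is a common neighbour of $a$ and $f$, which is irrelevant to Condition~3, which asks that $N_{G}(d)$ meet $N_{G}(a)\cup N_{G}(f)$. The bad case is exactly the one where every neighbour of $f$ other than $d$ lies in $N_{G}(a)$ and $N_{G}(d)$ is disjoint from $N_{G}(a)$, and this case is not ``a degree-$2$ vertex whose closed neighbourhood is an induced path'' that can be waved away. Concretely, take $V(G)=\{a,b,c,x,d,e,g\}$ with edges $ab,ac,bc,ax,bx,de,dg,be,cg$: all your reductions survive ($G$ and $\overline{G}$ non-bipartite, no leaves, no isolated or universal vertices), the triangle $\{a,b,c\}$ with the pair $a,b$ yields this $d$, and $(N_{G}(a)\cup N_{G}(f))\cap N_{G}(d)=\emptyset$ for \emph{every} admissible choice of $f$. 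So Condition~3 cannot be forced, and your fallback (``re-select the triangle, use an earlier lemma, or pass to $\overline{G}$'') is exactly the part that would need a proof; it is not supplied.

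The paper closes this case with an idea your proposal never uses: if Condition~3 fails, then $N_{G}(a)\cap N_{G}(d)=\emptyset$ while $V(G)=N_{G}[a]\cup N_{G}[d]$, so $\{a,d\}$ (when $ad\in E(G)$) is a maximal clique satisfying the three conditions of Lemma~\ref{lemma:naobipartido_naosplit_maximum clique}, and when $ad\notin E(G)$ the clique $\{\overline{a},\overline{d}\}$ satisfies those conditions in $\overline{G}$; either way one gets an $R_{3,4}$-role assignment. Without this (or some equally concrete replacement covering examples like the one above), your argument does not go through.
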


\begin{proof} We can assume from Theorems~\ref{theorem:nobipartite_withverticeisolated_withoutclick}, ~\ref{theorem:nobipartite_withverticleaf} and from Lemma~\ref{lemma:nonsplit_noleaf_comisolated} that $G$ and $\overline{G}$ have a clique of order~3, they have no leaves and no isolated vertex.

Let $\{a,b,c\}$ be a clique of $G$. By Lemma~\ref{lemma:connected_semiisolated_notbipartite_withtriangle_with1ornoverticalleaf}, we can assume that there exists $d \neq a,b$, such that, for all $y \not \in N_{G}[a] \cap N_{G}[b ]$, $y\neq d$, we have that $dy\in E(G)$, this is $V(G)=(N_{G}[a] \cap N_{G}[b]) \cup N_{G}[d]$. As $\overline{a}$ is neither a leaf nor an isolated vertex, there exists a vertex $f \neq d$, such that $af\not \in E(G)$. By Lemma~\ref{lemma:naobipartite_naoisomorphC5_N>=6_leafless}, we can assume that $(N_{G}(a)\cup N_{G}(f)) \cap N_{G}(d)=\emptyset$, that is, $a$ and $d$ have not neighbors in common. If $ad \in E(G)$, then $\{a,d\}$ is a maximal clique. We show that this clique satisfies the conditions of Lemma~\ref{lemma:naobipartido_naosplit_maximum clique}. Condition~1 follows from the fact that $G$ has no leaves. Condition~2 follows from the fact that $\overline{G}$ has no isolated vertex or leaf and from the property of $d$. Condition~3 follows from the fact that $G$ has no isolated vertices, no leaf, and that $a$ and $d$ have no neighbors in common. We note that, in fact, we use that $V(G)=N_{G}[a]\cup N_{G}[d]$ and $N_{G}(a)\cap N_{G}(d)= \emptyset$. We note that $\overline{a}$ and $\overline{d}$ also satisfy these conditions in $\overline{G}$. Therefore, if $ad\not \in E(G)$, then the maximal clique $\{\overline{a}, \overline{d}\}$ satisfies the conditions of Lemma~\ref{lemma:naobipartido_naosplit_maximum clique} in $\overline{G}$. Thus, the complementary prism of $G$ has a $3$-role assignment. \end{proof}

\section{Conditions to have a 3-role assignment in complementary prisms}
\label{subsection_characterization_prism}

We have, as a final result, the characterization of complementary prisms with a $3$-role assignment, in which we conclude that there are only some exceptions of graphs that do not have such an assignment.

\begin{theorem} \label{nonbipartite_final_theorem} Let $G$ be a graph. The complementary prisms of $G$ has a $3$-role assignment if and only if $G$ and $\overline{G}$ are not isomorphic to one of the following graphs:
\begin{enumerate}
\item $K_n$, with $n\geq 2$;
\item the graph $G$ has isolated vertices and every non-trivial connected component is isomorphic to $K_{n,m}$, for some $n,m\geq 1$, with the exception of the graph $ K_1 \cup K_{1,m}$;
\item $K_2^t$, with $t\geq 4$;
\item $\bigcup^t_{i=1} K_{1,m_i}$, with $t\geq 3$, $m_1 \geq 2$, $m_i \geq 1$, $i=2, \ldots, t$;
\item $K_{1,m_1}\cup K_{1,m_2}$, with $m_1,m_2 \geq 2$. \end{enumerate} \end{theorem}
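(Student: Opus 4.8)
The plan is to assemble the theorem from the bipartite characterization (Theorem~\ref{general_bipartite_theorem}) and the non-bipartite result (Theorem~\ref{theorem_nobipartite_withoutleaf}), using the symmetry of the construction. First I would record the elementary fact that the complementary prism of $G$ is isomorphic to the complementary prism of $\overline{G}$: the bijection exchanging $V(G)$ with $\overline{V(G)}$ is an isomorphism, since $\overline{\overline{G}}=G$ and this map sends $E(G)\cup E(\overline{G})\cup\{v\overline{v}\}$ onto the edge set of the complementary prism of $\overline{G}$. Hence the property ``$G\overline{G}$ has a $3$-role assignment'' is invariant under interchanging $G$ and $\overline{G}$---which is what makes the symmetric phrasing of the statement natural---and, since the hypothesis ``$G$ and $\overline{G}$ are not isomorphic to one of the listed graphs'' is also symmetric, it lets us assume, whenever convenient, that a graph from a prescribed family is $G$ itself.

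For the forward implication I would argue by contraposition: assume $G$ or $\overline{G}$ is isomorphic to one of the listed graphs, and by the symmetry take it to be $G$. If $G\simeq K_n$ with $n\geq 2$, invoke Lemma~\ref{lemma:prisma_Kn_nao_tem}. If $G$ has isolated vertices, every non-trivial component of $G$ is complete bipartite, and $G\not\simeq K_1\cup K_{1,m}$, invoke Lemma~\ref{lemma:k_n,mUk_1_nao_has_3attribution}. If $G\simeq K_2^t$ with $t\geq 4$, invoke Lemma~\ref{lemma:split_(K_2)^t}. If $G$ is one of the unions of stars of items~4 or~5, note that $G\not\simeq K_2\cup K_{1,m}$ automatically (item~4 has at least three components; in item~5 no component is $K_2=K_{1,1}$), so Lemma~\ref{lemma:bipartite_K_1,t_not_has_3_attribution} applies. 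In every case $G\overline{G}$ admits no $3$-role assignment.

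For the reverse implication, assume neither $G$ nor $\overline{G}$ is isomorphic to any of the listed graphs. If neither $G$ nor $\overline{G}$ is bipartite, Theorem~\ref{theorem_nobipartite_withoutleaf} produces a $3$-role assignment of $G\overline{G}$ outright. Otherwise, by the symmetry we may assume $G$ is bipartite, and it is enough to check that $G$ avoids the exceptional list of Theorem~\ref{general_bipartite_theorem}, which consists of $K_2$, the graphs $\overline{K_n}$ with $n\geq 2$, and items~2--5. Items~2--5 are the present items~2--5, and $K_2$ is the bipartite instance of the present item~1; the only remaining point is that $G\not\simeq\overline{K_n}$ for $n\geq 2$, which holds because otherwise $\overline{G}\simeq K_n$ with $n\geq 2$, the present item~1, contradicting the hypothesis on $\overline{G}$. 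Thus $G$ lies outside the exceptional list of Theorem~\ref{general_bipartite_theorem}, so $G\overline{G}$ has a $3$-role assignment.

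There is no deep obstacle here; the theorem is an assembly of earlier results, and the only point requiring care is the bookkeeping that reconciles the two exceptional lists---the bipartite characterization records $\overline{K_n}$ but not general $K_n$, whereas the present statement records $K_n$ but covers $\overline{K_n}$ only through the $\overline{G}$-clause---which is exactly where the isomorphism between the complementary prisms of $G$ and of $\overline{G}$ is used. Everything else matches family by family.
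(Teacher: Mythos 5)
Your proposal is correct and follows essentially the same route as the paper: the theorem is assembled from Theorem~\ref{general_bipartite_theorem} and Theorem~\ref{theorem_nobipartite_withoutleaf} (together with the non-existence lemmas underlying the former), and your explicit use of the isomorphism between the complementary prisms of $G$ and of $\overline{G}$ is exactly the bookkeeping the paper leaves implicit in its one-line proof.
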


\begin{proof} Follows from Theorems~\ref{general_bipartite_theorem} and~\ref{theorem_nobipartite_withoutleaf} 
\end{proof}

Observe that all complementary prisms that do not have a $3$-role assignments arise from disconnected bipartite graphs.

We conclude with Theorem~\ref{theorem_linear_prisms_complexity} that the computational complexity of deciding whether a complementary prism has a $3$-role assignment is polynomial. To be more specific, we have shown in \citep{mesquita2022atribuiccao} that such a problem can be solved in linear time.

\begin{theorem}\label{theorem_linear_prisms_complexity} The $3$-\textsc{Role Assignment} problem for complementary prisms can be decided in polynomial time. \end{theorem}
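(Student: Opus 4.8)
The plan is to reduce the decision problem to a finite checklist of structural tests, each of which can be performed in polynomial (indeed linear) time, using the full characterization obtained in Theorem~\ref{nonbipartite_final_theorem}. Given an input graph $G$ (from which the complementary prism $G\overline{G}$ is built), the algorithm does not attempt to search for a role assignment directly; instead it simply decides whether $G$ or $\overline{G}$ is isomorphic to one of the five families listed in Theorem~\ref{nonbipartite_final_theorem}. If neither is, the answer is ``yes''; otherwise the answer is ``no''. The correctness is immediate from Theorem~\ref{nonbipartite_final_theorem}, so the entire content of the proof is to argue that membership in each of the five families can be tested efficiently, both for $G$ and, symmetrically, for $\overline{G}$.

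First I would observe that $\overline{G}$ need not be constructed explicitly: each test on $\overline{G}$ translates into an equivalent test on $G$ (e.g.\ ``$\overline{G}\simeq K_n$'' is ``$G\simeq\overline{K_n}$'', i.e.\ $G$ has no edges; ``$\overline{G}\simeq K_{1,m_1}\cup K_{1,m_2}$'' becomes a condition on the degree sequence and adjacency structure of $G$), and in any case $\overline{G}$ has $O(n^2)$ edges so it can be built in $O(n^2)$ time if one prefers. Then I would go through the families one by one. Detecting whether $G\simeq K_n$ is a matter of checking that every vertex has degree $n-1$. Family~2 requires: compute the connected components (linear time); verify there is at least one isolated vertex; for every non-trivial component, test whether it is complete bipartite (a component is $K_{n,m}$ iff it is bipartite and every vertex of one side is adjacent to every vertex of the other, which is checkable in time linear in the component's size once a bipartition is found by BFS); and finally exclude the single exceptional shape $K_1\cup K_{1,m}$. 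Families~3, 4 and 5 are even more rigid: $K_2^t$ is ``every vertex has degree $1$''; $\bigcup_{i=1}^t K_{1,m_i}$ is ``$G$ is a disjoint union of stars'', i.e.\ every component is a star, which is ``every component is a tree with at most one non-leaf vertex''; and the side conditions on $t$ and on the $m_i$ (such as $m_1\ge 2$, or exactly two star components each of size $\ge 3$) are read off from the component sizes. Each of these checks is linear in $|V(G)|+|E(G)|$, so the whole procedure runs in polynomial time.

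The main obstacle, such as it is, is purely bookkeeping: one must make sure the disjunction ``$G$ or $\overline{G}$ lies in one of the families'' is handled without double counting and without missing the stated exceptions (the $K_1\cup K_{1,m}$ exception in family~2, and the boundary cases $t=2,3$ for $K_2^t$ which \emph{do} admit a $3$-role assignment by Lemma~\ref{lemma:k1m,k1} and Figure~\ref{grafo_K_2^3}), and that the tests on $\overline{G}$ are faithfully rephrased in terms of $G$. There is no algorithmic difficulty beyond elementary graph traversals, since the hard combinatorial work is entirely contained in Theorem~\ref{nonbipartite_final_theorem}. I would close by noting, as the text already does, that a careful implementation of these component and degree computations yields a linear-time algorithm, matching the stronger statement proved in~\citep{mesquita2022atribuiccao}; for the present theorem it suffices to remark that a finite number of linear-time tests is polynomial.
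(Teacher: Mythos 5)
Your proposal is correct and follows essentially the same route as the paper: invoke the characterization of Theorem~\ref{nonbipartite_final_theorem} and verify, for $G$ and $\overline{G}$, membership in each of the five exceptional families via degree checks, connected components, and bipartition tests, all doable in $\mathcal{O}(n+m)$ per test with an overall $\mathcal{O}(n^2)$ bound dominated by handling the complement. Your added remark that the tests on $\overline{G}$ can be rephrased directly on $G$ is a minor refinement, not a different argument.
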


\begin{proof} Let $G=(V(G), E(G))$ be a graph with $|V(G)|=n$ and $|E(G)|=m$. An algorithm that solves the problem by using Theorem~\ref{nonbipartite_final_theorem} returns that $G\overline{G}$ does not have a $3$-role assignment when $G$ or $\overline{G}$ are not isomorphic to graphs of Conditions $1$ to $5$. In this way, we show how to check each of the five conditions in polynomial time.

First, Condition $1$ specifies to check whether $G \simeq K_n$. Thus, we identify if the degree of every vertex of $G$ is exactly $n-1$, which can be done in time $\mathcal{O}(n+m) $. Observe that, in Conditions $2$ to $5$, it is necessary to check whether $G$ is the disjoint union of complete bipartite graphs. To this end, detecting the connected components of $G$, as well as the bipartite partitions, can be done in $\mathcal{O}(n+m)$ time~\citep{kleinberg2006algorithm}. %~\citep{kleinberg2006algorithm}. For each component, we can test whether it is bipartite by finding a $2$-coloring, which requires $\mathcal{O}(n+m)$ time. 
After obtaining the bipartitions of each component, Conditions $2$ to $5$ can be checked only by examining the degree of each vertex, which runs in $\mathcal{O}(n+m)$ time. As the degree of each vertex is an integer between $0$ and $n-1$, obtaining the degree sequence $d=(d_1, \ldots, d_n)$, with $d_1 \leq d_2 \leq \ldots, \leq d_n $, can also be performed in $\mathcal{O}(n+m)$ time ~\citep{cormen2009introduction}. The complement graph $\overline{G}$ of be computed in $\mathcal{O}(n^2)$.

If $G$ and $\overline{G}$ do not satisfy any of the Conditions $1$ to $5$ we simply return that $G\overline{G}$ has a $3$-role assignment. This whole procedure, therefore, requires running time of order $\mathcal{O}(n^2)$. We conclude that $3$-\textsc{Role Assignment} for complementary prisms can be decided in polynomial time. \end{proof} 

\bibliographystyle{abbrvnat}
% use the following instead if you encounter problems 
%\bibliographystyle{alpha}
\bibliography{prisma_polinimial-dmtcs}
\label{sec:biblio}

\end{document}